\newcommand{\restateref}[1]{\IfAppendix{\hyperref[#1]{$\star$}}{\hyperref[#1*]{$\star$}}}
\setlist[description]{leftmargin=.8em, labelindent=0em}
\crefname{theorem}{Theorem}{Theorems}
\crefname{lemma}{Lemma}{Lemmas}
\crefname{observation}{Observation}{Observations}
\crefname{corollary}{Corollary}{Corollaries}
\crefname{proposition}{Proposition}{Propositions}
\crefname{claim}{Claim}{Claims}
\crefname{section}{Section}{Sections}
\crefname{appendix}{Appendix}{Appendices}
\crefname{figure}{Figure}{Figures}
\crefname{table}{Table}{Tables}
\DeclareMathOperator{\CH}{CH}
\begin{document}
\title{Edge-Constrained Hamiltonian Paths\\ on a Point Set}

\author{Todor {Anti\'c}\inst{1}\orcidlink{0009-0008-6521-7987}\and
Aleksa {D\v{z}uklevski}\inst{1}\orcidlink{0009-0003-3028-4243}\and
Ji\v{r}\'i {Fiala}\inst{1}\orcidlink{0000-0002-8108-567X} \and
Jan {Kratochv\'il}\inst{1}\orcidlink{0000-0002-2620-6133}\and
Giuseppe Liotta\inst{2}\orcidlink{0000-0002-2886-9694}  \and
Morteza Saghafian\inst{3}\orcidlink{0000-0002-4201-5775} \and
Maria Saumell\inst{4}\orcidlink{0000-0002-4704-2609} \and
Johannes Zink\inst{5}\orcidlink{0000-0002-7398-718X}}
\authorrunning{Anti\'c, D\v{z}uklevski, Fiala, Kratochvíl, Liotta, Saghafian, Saumell, and Zink}
\institute{%
Department of Applied Mathematics, Faculty of Mathematics and Physics, Charles University, Czech Republic
\email{\{todor,fiala,honza\}@kam.mff.cuni.cz, aleksa@matfyz.cz} \and
Universit\`a degli Studi di Perugia, Perugia, Italy
\email{giuseppe.liotta@unipg.it} \and
Institute of Science and Technology Austria, Klosterneuburg, Austria
\email{morteza.saghafian@ist.ac.at} \and
Department of Theoretical Computer Science, Faculty of Information Technology, Czech Technical University in Prague, Czech Republic
\email{maria.saumell@fit.cvut.cz} \and
Technische Universit\"at M\"unchen, Heilbronn, Germany\\
\email{zink@algo.cit.tum.de}}
\maketitle
\begin{abstract}
Let $S$ be a set of distinct points in general position in the Euclidean plane. A plane Hamiltonian path on $S$ is a crossing-free geometric path such that every point of $S$ is a vertex of the path. It is known that, if $S$ is sufficiently large, there exist three edge-disjoint plane Hamiltonian paths on $S$. In this paper we study an edge-constrained version of the problem of finding Hamiltonian paths on a point set. We first consider the problem of finding a single plane Hamiltonian path $\pi$ with endpoints $s,t \in S$  and constraints given by a segment $\overline{ab}$, where $a,b\in S$. We consider the following scenarios: (i)~$\overline{ab} \in \pi$; (ii)~$\overline{ab}\not\in \pi$. We characterize those quintuples $\langle S,a,b,s,t\rangle$ for which $\pi$ exists.  
Secondly, we consider the problem of finding two plane Hamiltonian paths $\pi_1,\pi_2$ on a set $S$ with constraints given by a segment $\overline{ab}$, where $a,b\in S$. We consider the following scenarios: (i)~$\pi_1$ and $\pi_2$ share no edges and $\overline{ab}$ is an edge of $\pi_1$; (ii)~$\pi_1$ and $\pi_2$ share no edges and none of them includes $\overline{ab}$ as an edge; (iii)~both $\pi_1$ and $\pi_2$ include $\overline{ab}$ as an edge and share no other edges. In all cases, we characterize those triples $\langle S,a,b \rangle$ for which $\pi_1$ and $\pi_2$ exist.

\keywords{plane Hamiltonian paths \and point sets \and geometric graph~theory}
\end{abstract}
\section{Introduction}

Let $S$ be a set of points in the plane in \empty{general position}, that is, with no three collinear points.
A \emph{geometric graph} on $S$ is a graph with vertex set $S$ whose edges are segments between points in $S$.
A geometric graph on $S$ is \emph{plane} if no two edges cross each other.
A \emph{Hamiltonian path} on $S$ is a geometric graph on $S$ that is a path and contains all points from~$S$.
Our goal is to compute one or two \emph{plane Hamiltonian paths} on $S$ that satisfy certain \emph{geometric constraints}.
In the case where the constraint is that the two Hamiltonian paths are \emph{edge-disjoint},
the existence of two such paths was shown by Aichholzer, Hackl, Korman, van Kreveld, L\" offler, Pilz, Speckmann and Welzl~\cite{AichholzerHKKLPSW17}.
This result was recently extended by Kindermann, Kratochv\' il, Liotta and Valtr~\cite{KindermannKLV23},
who showed that in fact three edge-disjoint plane Hamiltonian paths on~$S$ always exist if $|S|$ is large enough. 

When working on problems involving point sets in the plane, we might need some different or additional properties that the paths need to satisfy besides being plane, Hamiltonian, and edge-disjoint.
We might want to find a path with prescribed endpoints or a prescribed (first) edge. In some cases, finding paths with such properties is straightforward and serves more as an exercise, while in others it can pose a significant challenge and requires more complex proofs. Such problems often arise when studying reconfigurations of plane paths, see \cite[Lemmas 1 and 2]{Aicholzer2022} or \cite[Lemma 9]{Kleist2024}. In a different direction, determining if it is possible to find a single plane Hamiltonian path that does not contain a prescribed set of segments has been studied by several authors \cite{Keller2017,Keller2019,ern2004,ern2007}, but the exact solution is known only when the point set is in convex position.   

In this paper, we investigate the problem with the natural constraints that segments or endpoints are given and must be included or avoided.
We restrict to the case where just one segment is given,
which already turns out to be non-trivial.
We remark that the existence of edge-restricted plane Hamiltonian paths has already been studied in the more general setting of \emph{simple drawings} of complete graphs (see, for example, \cite[Theorem~3.13]{Aichholzer2024} or \cite[Theorem~2.4]{Helena}), but these results are not useful for the problems that we study.

Our problems are also related to a well-studied topic of geometric graph theory, called \emph{geometric graph packing problem}. Given a geometric \emph{host} graph $G$ and a subgraph $H$ of~$G$, we want to find as many edge-disjoint copies of $H$ inside $G$ as possible. In our case, the host graph is the complete geometric graph with vertex set $S$ and $H$ is a plane Hamiltonian path, that is, a plane spanning path of $G$.
Bose, Hurtado, Rivera-Campo, and Wood~\cite{Bose200471,DBLP:journals/comgeo/BoseHRW06} characterize those plane trees that can be packed into a complete geometric graph whose vertex set is in convex position. The authors of~\cite{AichholzerHKKLPSW17,Aichholzer2014233} prove that $\Omega(\sqrt{n})$ plane trees can be packed into a complete geometric graph with~$n$ vertices. This was later improved to $\lfloor n/3\rfloor$ by Biniaz and Garc\'{i}a \cite{DBLP:conf/cccg/BiniazG18,Biniaz2020}. However, it remains 
an open question 
whether this lower bound also applies to plane Hamiltonian paths. Biniaz, Bose, Maheshwari, and Smid~\cite{DBLP:journals/dmtcs/BiniazBMS15} show that any set of~$n$ points in general position admits at least $\lceil \log_2 n \rceil - 1$ plane perfect matchings, and that, for some point sets, the maximum number of such matchings is at most $\lceil n/3 \rceil$.

In the non-geometric setting,
our problem is related to a line of research investigating
a generalization of Hamiltonian paths on planar graphs that are called \emph{Tutte paths}~\cite{DBLP:conf/icalp/BiedlK19,DBLP:journals/jgt/Sanders97,DBLP:conf/icalp/0003S18,DBLP:journals/jgt/Thomassen83a,Tutte1977}.
Tutte paths are paths in planar graphs that decompose the graph into components with at most three attachments to the path.
In some scenarios, it is asked for a Tutte path that starts and ends at given vertices and contains a given edge.

\paragraph{Our contribution.}

We study the existence of one or two plane Hamiltonian paths under various geometric constraints.
We first consider the problem of finding a single plane Hamiltonian path $\pi$ with endpoints $s,t \in S$  and constraints given by a segment $\overline{ab}$, where $a,b\in S$; see \cref{sec:onepath}.
We examine the following scenarios: (i)~$\overline{ab}\not\in \pi$ (\cref{thm:stpathsegnotinc}) and (ii)~$\overline{ab} \in \pi$ (\cref{thm:stpathabincluded}).
We characterize those quintuples $\langle S,a,b,s,t\rangle$ for which $\pi$ exists.

Secondly, as our main results,
we consider the problem of finding two plane Hamiltonian paths $\pi_1,\pi_2$ with constraints given by a segment $\overline{ab}$, where $a,b\in S$;
see \cref{sec:twopaths}.
We study the following scenarios:
(i)~$\pi_1$ and $\pi_2$ share no edges and none of them includes $\overline{ab}$ as an edge (\cref{prop:twopathssegnotinc}),
(ii)~$\pi_1$ and $\pi_2$ share no edges and $\overline{ab}$ is an edge of $\pi_1$ (\cref{Thm:OneEdgeOnePath}), and
(iii)~both~$\pi_1$ and $\pi_2$ include $\overline{ab}$ as an edge and share no other edges (\cref{thm:OneEdgeBothPaths}).
In all cases, we characterize those triples $\langle S,a,b \rangle$ for which $\pi_1$ and $\pi_2$ exist.

Statements whose proofs are fully or partially moved to the Appendix are marked with a clickable~($\star$).

\section{Preliminaries} \label{sec:prelims}

Throughout the paper we always assume that $S$ is a set of $n$ points in the plane in general position.
By general position, we mean that there are no three collinear points.
We write $\CH(S)$ for the convex hull of $S$ and $\partial \CH(S)$ for its boundary. 
For two points $a,b$, we write $\ell(ab)$ to denote the line through $a$ and~$b$, and $\overline{ab}$ to denote the line segment between $a$ and~$b$.
Given four distinct points $a,b,x,y$, we call a line segment $\overline{xy}$  a \emph{bridge} over $\ell(ab)$ if it crosses $\ell(ab)$ but not~$\overline{ab}$. If $\pi_1$ and $\pi_2$ are two (Hamiltonian) paths on $S$, we write $\pi_1\cap \pi_2=\emptyset$ if they are edge-disjoint.
This is somewhat unusual, 
but convenient for our application.
Usually, we call $v\in S$ a point, but we call a point $v\in S \cap \partial\CH(S)$ a \emph{vertex} of~$\partial\CH(S)$ and a segment between two vertices of~$\partial\CH(S)$ an \emph{edge} of~$\CH(S)$.
We often refer to two results from~\cite{KindermannKLV23} stated next. 
Inspecting their constructive proofs from an algorithmic point of view, it is easy to see that the described procedures can be realized
as $O(n \log n)$-time algorithms.
The running time comes from sorting a linear number of points by their radial coordinates.

\begin{lemma}[\cite{KindermannKLV23}, Lemma 5]
    \label{lem:pathTwoEndpoints}
    Let $S$ be a set of points in general position in the plane, and
    let $s$ and $t$ be two distinct points in~$S$.
    Then, there is a plane Hamiltonian path $\pi$ in $S$ such that $\pi$
    starts in~$s$ and ends in~$t$.
    Such a path can be found in $O(n \log n)$ time.
\end{lemma}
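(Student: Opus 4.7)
The plan is to argue by induction on $n=|S|$. The base case $n=2$ is immediate: the segment $\overline{st}$ is itself a plane Hamiltonian path. For $n \geq 3$ I would split the inductive step according to whether an endpoint lies on $\partial\CH(S)$.

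\emph{Case 1: $s \in \partial\CH(S)$} (with $t$ handled symmetrically). Since $n \geq 3$ and the points are in general position, $|\partial\CH(S)|\geq 3$, so $s$ has a convex-hull neighbor $s' \neq t$. Applying the inductive hypothesis to $S\setminus\{s\}$ with endpoints $s'$ and $t$ produces a plane Hamiltonian path $\pi'$; prepending the edge $\overline{ss'}$ yields a Hamiltonian path from $s$ to $t$ on $S$. Planarity survives because $\overline{ss'}$ lies on $\partial\CH(S)$ while every edge of $\pi'$ is a chord of $\CH(S\setminus\{s\})\subseteq \CH(S)$, and such a chord can meet the hull edge $\overline{ss'}$ only at a common endpoint (so at most at $s'$, where only incidence, not crossing, occurs).

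\emph{Case 2: both $s$ and $t$ lie interior to $\CH(S)$.} Pick any vertex $u$ of $\partial\CH(S)$; then automatically $u \notin \{s,t\}$. Apply the inductive hypothesis on $S\setminus\{u\}$ to obtain a plane Hamiltonian path $\pi'=(s=v_1,v_2,\ldots,v_{n-1}=t)$, and then re-insert $u$ by replacing some edge $\overline{v_i v_{i+1}}$ with the two edges $\overline{v_i u}$ and $\overline{u v_{i+1}}$. The existence of a valid insertion is the crux of the proof: because $u$ lies outside $\CH(S\setminus\{u\})$, the set $S\setminus\{u\}$ subtends at $u$ an angular range of less than~$\pi$. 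A rotating-ray argument within this angular range, applied to the silhouette of $\pi'$ seen from $u$, isolates an edge $\overline{v_i v_{i+1}}$ for which the triangle $\triangle u v_i v_{i+1}$ lies entirely in the ``visible'' region on the $u$-side of $\pi'$ and therefore contains no other point of $S$ and is not crossed by any other edge of $\pi'$; replacing $\overline{v_i v_{i+1}}$ by the detour through $u$ then preserves planarity.

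The principal obstacles are twofold. Geometrically, the heart of the difficulty is justifying the insertion in Case 2 — one must argue carefully that the distinguished triangle really is empty of points of $S$ and free of other edges of $\pi'$, and in particular that the two new edges do not cross the remainder of the path. Algorithmically, a literal recursion costs $\Theta(n^2)$, so to meet the $O(n\log n)$ bound I would precompute, once, the angular sorting of $S$ around a fixed pivot in $O(n\log n)$ time and maintain the evolving path in an augmented balanced search structure that supports insertion and visibility queries in $O(\log n)$ time. Summing over the $n$ insertions then yields the claimed total running time.
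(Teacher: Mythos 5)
The paper never proves this statement; it is imported wholesale as Lemma~5 of \cite{KindermannKLV23}, where it is established by a direct one-shot construction (a radial/sweep argument around the endpoints). Your inductive ``peel a hull vertex, then re-insert it'' argument is therefore a genuinely different route, and as far as existence goes it is correct. Case~1 is fine as written. In Case~2, the crux you identify --- that a point $u$ outside $\CH(S\setminus\{u\})$ always admits an edge $\overline{v_iv_{i+1}}$ of $\pi'$ whose closed triangle with $u$ meets $\pi'$ only in that edge --- is true, but you only gesture at it, and since it carries the entire weight of the proof it must be written out. The rotating-ray sketch does complete: the silhouette of $\pi'$ seen from $u$ decomposes into maximal visible arcs, each a subsegment of one edge; the first arc begins at one tangent vertex of $\CH(S\setminus\{u\})$ and the last ends at the other; by general position every transition between consecutive arcs either ends the outgoing arc at an endpoint of its own edge or begins the incoming arc at an endpoint of its own edge. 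If no arc were a complete edge, then starting from the first arc (which begins at a vertex of its own edge) every transition would be forced to be of the second kind, so the last arc would begin \emph{and} end at vertices of its own edge, i.e.\ be a complete edge --- a contradiction. A fully visible edge lies in the visibility region of $u$, so the triangle is empty and the two replacement segments cross nothing.

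The genuine gap is the running time. Each Case~2 step requires a visibility (silhouette) computation from a \emph{different} apex $u$ against the current partial path, so a single angular sort ``around a fixed pivot'' is of no use, and it is unclear what an $O(\log n)$ ``visibility query'' against an evolving path in a balanced search structure would even mean; computing the silhouette from scratch costs $\Theta(m)$ per insertion, giving $\Theta(n^2)$ overall. As it stands your argument proves existence but not the $O(n\log n)$ bound; to get the stated complexity one should either fall back on the direct construction of \cite{KindermannKLV23} or supply a concrete data structure supporting these apex-changing visibility queries.
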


\begin{lemma}[\cite{KindermannKLV23}, Theorem 2] \label{Lem: KKGVThm2}
    Let $S$ be a set of at least 5 points in general position in the plane
    and let $s$ and $t$ be two (not necessarily distinct) points of $\partial \CH(S)$. Then, $S$ contains two edge-disjoint plane Hamiltonian paths, one
    starting at $s$ and the other one at $t$. Moreover, if the points $s$ and $t$ are distinct, then the paths can be chosen so that none of them contains the edge $st$.
    Such paths can be found in $O(n \log n)$ time.
\end{lemma}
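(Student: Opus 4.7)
The plan is to split on whether $s=t$ or $s\neq t$ and to exploit the fact that each of $s,t$ is a vertex of $\partial\CH(S)$ to build the two paths by radial sweeps.

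\textbf{The case $s=t$.} Since $s\in\partial\CH(S)$, the remaining points lie in a closed half-plane bounded by a supporting line through $s$, so I can sort them by angle around $s$ as $p_1,p_2,\ldots,p_{n-1}$. The fan path $\pi_1 = s,p_1,p_2,\ldots,p_{n-1}$ is plane, because the edge $\overline{p_i p_{i+1}}$ lies inside the angular wedge at $s$ spanned by $p_i$ and $p_{i+1}$, and these wedges are pairwise interior-disjoint, so no two non-adjacent edges of $\pi_1$ cross. For $\pi_2$ I plan to start with a different initial edge, for instance $\overline{sp_{n-1}}$, and to traverse the remaining points in an interleaved ``zig-zag'' order along the radial sequence so that no consecutive pair $\overline{p_i p_{i+1}}$ is reused; the ordering must be chosen so that every new edge still lies in an angular wedge at $s$ whose interior is unused, which yields planarity.

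\textbf{The case $s\neq t$.} Here $\overline{st}$ is either an edge or a chord of $\CH(S)$. If it is a hull edge, then $S\setminus\{s,t\}$ lies in the open half-plane on one side of $\ell(st)$; I produce $\pi_1$ by a radial sweep from $s$ and $\pi_2$ by a radial sweep from $t$, carefully inserting $t$ into $\pi_1$ and $s$ into $\pi_2$ so that neither uses $\overline{st}$ and so that the two paths remain edge-disjoint. If $\overline{st}$ is a chord, then $\ell(st)$ splits $S\setminus\{s,t\}$ into nonempty sets $S_L$ and $S_R$; for $\pi_1$ I concatenate a fan from $s$ over $S_L\cup\{t\}$ with a fan from $t$ over $S_R$, and symmetrically for $\pi_2$ but traversing the two sides in the opposite order, which forces the two spanning paths to use disjoint edges and to avoid $\overline{st}$.

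\textbf{The main obstacle.} The crux is enforcing planarity, edge-disjointness, and (when $s\neq t$) avoidance of $\overline{st}$ simultaneously on a fixed point set. A single plane Hamiltonian path with a prescribed endpoint is already provided by \cref{lem:pathTwoEndpoints}, so producing $\pi_1$ in isolation is trivial; the delicate step is to pick $\pi_1$ with enough structure that a second plane Hamiltonian spanning path exists using entirely different edges. My strategy converts the geometric edge-avoidance requirement into a combinatorial condition on the index sequence induced by the radial order around a convex-hull vertex, after which planarity follows from the angular-wedge argument above. Once this reduction is set up, the $O(n\log n)$ running-time bound is immediate: the radial sort around $s$ (and, if needed, around $t$) dominates, and both paths are assembled in linear time afterwards.
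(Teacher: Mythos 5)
First, note that the paper does not prove this statement at all: \cref{Lem: KKGVThm2} is imported verbatim as Theorem~2 of \cite{KindermannKLV23}, so there is no in-paper proof to compare against. Judged on its own, your proposal has a genuine gap at exactly the point where the cited proof does real work: the construction of the \emph{second} plane Hamiltonian path edge-disjoint from the first. In the case $s=t$, the first radial-sweep path is fine, but the second path is never actually specified. The ``interleaved zig-zag'' ordering does not inherit planarity from the angular-wedge argument: that argument only applies to edges joining \emph{radially consecutive} points, whereas a zig-zag edge such as $\overline{p_1p_{n-2}}$ spans nearly the whole angular range around $s$, so the wedges of distinct zig-zag edges are nested rather than interior-disjoint, and for point sets that are not in convex position such a path generically self-intersects. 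Edge-disjointness is not automatic either (for $n-1=3$ the zig-zag $p_3,p_1,p_2$ reuses $\overline{p_1p_2}$ from the first sweep), and you only \emph{require} that a simultaneously plane and edge-avoiding ordering exists without exhibiting one.

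The same issue recurs for $s\neq t$: two radial sweeps of the same side around two different centers can share edges (a pair of points can be radially consecutive around both $s$ and $t$), so ``traversing the two sides in the opposite order'' does not force edge-disjointness, and the sub-problem of finding two edge-disjoint plane spanning paths of $S_L$ is essentially the original problem again. The known proofs (\cite{AichholzerHKKLPSW17,KindermannKLV23}) resolve this by making the two paths structurally incomparable rather than by perturbing one radial order: one path is taken to be the \emph{alternating} path of Abellanas et al.\ across a balanced line-separated bipartition (\cref{lem:pathAlternatingAB}), every edge of which crosses the separating line, while the second path is assembled from one spanning path per side (via \cref{lem:pathTwoEndpoints}) joined by a single carefully chosen bridge; edge-disjointness then follows because the second path crosses the separating line exactly once. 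Without an ingredient of this kind, or an explicit combinatorial construction of the second ordering together with a planarity proof that does not rely on the wedge argument, the proposal does not establish the lemma.
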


For our purposes, \cref{Lem: KKGVThm2} is often too strong and cannot be applied to four or fewer points.
We prove a weaker statement that holds already for four points:

\begin{lemma} \label{Lem: S=4}
    Let $S$ be a set of at least 4 points in general position in the plane 
    and let $s$ and~$t$ be two consecutive points of $\partial \CH(S)$. Then, $S$ contains two edge-disjoint plane Hamiltonian paths, one starting at $s$ and the other one at $t$.
    Such paths can be found in $O(n \log n)$ time.   
\end{lemma}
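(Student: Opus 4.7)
My approach is to reduce the claim to \cref{Lem: KKGVThm2} whenever possible and to handle the only remaining case, $|S|=4$, by an explicit case analysis. Since \cref{Lem: KKGVThm2} already produces two edge-disjoint plane Hamiltonian paths starting at any two distinct points of~$\partial\CH(S)$ whenever $|S|\ge 5$, and since the consecutive vertices $s,t$ of $\partial\CH(S)$ are in particular distinct points of~$\partial\CH(S)$, the case $|S|\ge 5$ is immediate (with the required $O(n\log n)$ running time inherited from \cref{Lem: KKGVThm2}). The entire substance of the lemma therefore lies in the case $|S|=4$.

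For $|S|=4$, I would distinguish the two possible shapes of $\CH(S)$. In the first subcase, $S$ is in convex position; let $s,t,u,v$ be the points in the cyclic order along $\partial\CH(S)$, so that the only two edges crossing each other among the six possible segments are the diagonals $\overline{su}$ and $\overline{tv}$. I take $\pi_1=(s,t,v,u)$, which uses edges $\overline{st}, \overline{tv}, \overline{vu}$, and $\pi_2=(t,u,s,v)$, which uses $\overline{tu}, \overline{us}, \overline{sv}$. These are edge-disjoint by inspection; each path uses exactly one diagonal, so no crossing occurs inside either path, making them both plane. In the second subcase, $\partial\CH(S)$ is a triangle containing one interior point~$x$; by renaming let $s,t,u$ be the vertices of $\partial\CH(S)$ and let $x$ be the interior point. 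Because $K_4$ drawn with one point inside a triangle is already a triangulation, no two of its six segments cross; hence any Hamiltonian path on~$S$ is automatically plane. It therefore suffices to pick any two edge-disjoint Hamiltonian paths starting at $s$ and $t$ respectively, e.g.\ $\pi_1=(s,t,u,x)$ and $\pi_2=(t,x,s,u)$. In both subcases the running time is constant.

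I expect the convex-quadrilateral subcase to be the only real obstacle, because there the choice of paths is constrained: once one Hamiltonian path on four convex points is fixed, the three remaining edges form a ``path'' whose middle two segments are the two crossing diagonals, so a naive second path is typically non-plane. The trick is to make each of the two paths use exactly one of the two diagonals. Once this observation is noted, the construction above works and the lemma follows. The triangular case is then immediate because no two segments cross, and the case $|S|\ge 5$ is a direct citation of \cref{Lem: KKGVThm2}.
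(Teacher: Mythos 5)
Your proposal is correct and follows essentially the same route as the paper: for $|S|\ge 5$ it cites \cref{Lem: KKGVThm2}, and for $|S|=4$ it resolves the two possible configurations (convex quadrilateral, triangle with an interior point) by exhibiting explicit edge-disjoint plane paths, which is exactly what the paper does via \cref{fig:twopaths4vertices}. Your explicit verification that each path in the convex case uses exactly one of the two crossing diagonals is a correct and slightly more detailed write-up of the same construction.
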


\begin{proof}
    If $|S|\ge 5$, the result follows from \cref{Lem: KKGVThm2}.
    If $|S|=4$, there are only two sets to consider (see \cref{fig:twopaths4vertices}), and for both we can find the desired paths.
\end{proof}

\begin{figure}[t]
    \centering
    \includegraphics{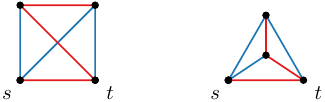}
    \caption{Two
        Hamiltonian paths on $4$ points with neighboring starting points $s,t$.}\label{fig:twopaths4vertices}
\end{figure}

Often, we use the following lemma to construct a path:

\begin{restatable}[\restateref{lem:pathGivenSegment}]{lemma}{pathGivenSegment}
    \label{lem:pathGivenSegment}
    Let $S$ be a set of at least three points in general position in the plane, and let $a, b, z \in S$ be distinct with $z \in \partial \CH(S)$.  
    Then, there is a plane Hamiltonian path $\pi$ in $S$ such that $\overline{ab} \in \pi$ and $\pi$ starts in~$z$.  
    Such a path can be found in $O(n \log n)$ time.
\end{restatable}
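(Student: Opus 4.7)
The plan is to split the problem along the line $\ell(ab)$. Since $z \notin \{a,b\}$ and $S$ is in general position, $z$ lies strictly in one of the two open half-planes bounded by $\ell(ab)$; call the closed half-plane containing $z$ by $H_1$ and the opposite closed half-plane $H_2$. Set
\[
S_1 := (S\cap H_1)\setminus\{a,b\}\qquad\text{and}\qquad S_2 := (S\cap H_2)\setminus\{a,b\},
\]
so that $z\in S_1$, and the idea is to build $\pi$ out of two subpaths, one on each side of $\ell(ab)$, glued together by the edge $\overline{ab}$.

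For the left part, apply \cref{lem:pathTwoEndpoints} to $S_1\cup\{a\}$ with prescribed endpoints $z$ and $a$ (distinct by assumption) to obtain a plane Hamiltonian path $\pi_1$ from $z$ to~$a$. For the right part, if $S_2=\emptyset$ take $\pi_2$ to be the single-vertex path $b$; otherwise, pick any $w\in S_2$ and apply \cref{lem:pathTwoEndpoints} to $S_2\cup\{b\}$ with prescribed endpoints $b$ and~$w$, yielding a plane Hamiltonian path~$\pi_2$. Define
\[
\pi := \pi_1 \cup \{\overline{ab}\} \cup \pi_2.
\]
By construction $\pi$ visits every point of $S$ exactly once, starts at $z$, and contains the edge $\overline{ab}$.

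The only non-routine check is planarity. Every edge of $\pi_1$ has both endpoints in $H_1$ and hence lies in $H_1$; symmetrically for $\pi_2$ and $H_2$. General position implies that no point of $S$ other than $a$ and $b$ lies on $\ell(ab)$, so an edge of $\pi_1$ can meet $\ell(ab)$ only at the single point~$a$, and an edge of $\pi_2$ only at~$b$. Therefore no edge of $\pi_1$ crosses an edge of $\pi_2$, and neither can cross the segment $\overline{ab}$. Internal crossings within $\pi_1$ or $\pi_2$ are excluded because each of them is plane by \cref{lem:pathTwoEndpoints}.

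I do not foresee a real obstacle; the mild subtleties are the boundary case $S_2=\emptyset$ (where $\pi_2$ degenerates to a single vertex and no second invocation of \cref{lem:pathTwoEndpoints} is needed) and the fact that $S_1\cup\{a\}$ may have only two elements, which is still a valid input for \cref{lem:pathTwoEndpoints}. The running time is $O(n\log n)$, dominated by the two applications of \cref{lem:pathTwoEndpoints}. Note that the hypothesis $z\in\partial\CH(S)$ is not actually needed in this argument; it is stated because that is how the lemma will be invoked elsewhere.
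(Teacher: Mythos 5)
Your proof is correct, but it takes a genuinely different route from the paper's. The paper argues by direct construction: after rotating so that $\overline{ab}$ is horizontal and assuming $z$ lies below it, the points below $\ell(ab)$ are collected by sorting them radially around the intersection point of $\ell(ab)$ with a line tangent to $\CH(S)$ at $z$ --- this is exactly where the hypothesis $z \in \partial\CH(S)$ is used --- and after traversing $\overline{ab}$ the remaining points are collected by a y-monotone sweep. You split along $\ell(ab)$ in the same way, but then invoke \cref{lem:pathTwoEndpoints} as a black box on each side (a $z$--$a$ path on the side containing $z$, a $b$--$w$ path on the other), gluing with $\overline{ab}$; planarity across the line is immediate because each subpath stays in its closed half-plane and meets $\ell(ab)$ only at $a$, respectively only at $b$, so the two subpaths and the segment $\overline{ab}$ can only share the vertices $a$ and $b$. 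Your version is shorter, relies on \cref{lem:pathTwoEndpoints} rather than being self-contained, and --- as you correctly note --- proves a slightly stronger statement in which $z \in \partial\CH(S)$ is not needed. Both arguments run in $O(n \log n)$, and the boundary cases you flag ($S_2=\emptyset$ and $|S_1\cup\{a\}|=2$) are handled correctly.
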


\section{Finding one plane Hamiltonian path}\label{sec:onepath}

In this section, we consider the following problem: We are given a point set~$S$ and two distinct points $s,t\in S$. Our goal is to find a plane Hamiltonian $s$--$t$ path $\pi$ in $S$. Additionally, we are given two distinct points $a,b\in S$ and we require that $\pi$ contains or avoids the segment $\overline{ab}$. In each of these two situations, we fully characterize the quintuples $(S,a,b,s,t)$ for which such a path exists.

\subsection{One path with prescribed endpoints and avoiding a segment}\label{sec:proofofstpathsegnotinc}

We consider the problem of finding an $s$--$t$ path $\pi$ that does not contain $\overline{ab}$. To this end, we first observe that, in some situations, such a path does not exist. The proofs of the following two lemmas describing such situations can be found in \cref{app:proofofstpathsegnotinc}.
See \cref{fig:obstaclesstpathsegnotincluded} for examples of sets satisfying the conditions of  \cref{lem:convexobstacle,lem:wheelobstacle}.

\begin{restatable}[\restateref{lem:convexobstacle}]{lemma}{convexobstacle}
\label{lem:convexobstacle}
Let $S$ be a set of $n \ge 3$ points in convex position and $a,b,s,t \in S$ points such that $\overline{ab}$ and $\overline{st}$ are disjoint edges of $\partial \CH(S)$. Then, $\pi$ does not exist.
\end{restatable}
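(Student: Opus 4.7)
The plan is to argue by contradiction, exploiting the folklore fact that the only plane Hamiltonian cycle on a point set in convex position is the boundary of its convex hull. Suppose such a $\pi$ exists. Since $a,b,s,t$ are four distinct points, $n\ge 4$, so $\pi$ has at least three edges. Because $s$ and $t$ are the endpoints of $\pi$, they have degree $1$ in $\pi$; if $\overline{st}$ were an edge of $\pi$, then $t$ would already be a neighbor of $s$ and the entire path would consist of the single edge $\overline{st}$, contradicting $n\ge 4$. Hence $\overline{st}\notin\pi$.

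Next, I would consider the graph $C:=\pi\cup\{\overline{st}\}$. Every vertex of $S$ has degree $2$ in $C$: the endpoints $s,t$ gain the edge $\overline{st}$, and the internal vertices of $\pi$ are unaffected. Since $\pi$ is connected, $C$ is a single cycle. Moreover, $C$ is plane: $\pi$ is plane by assumption, and since $\overline{st}$ is an edge of $\partial\CH(S)$, all other points of $S$ lie strictly on one side of $\ell(ab)$\ldots\ rather $\ell(st)$, so no segment between two points of $S$ can cross $\overline{st}$. Thus $C$ is a plane Hamiltonian cycle on $S$.

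The core of the proof is the claim that any plane Hamiltonian cycle $C$ on a convex-position point set equals $\partial\CH(S)$. If some edge $\overline{xy}\in C$ were a chord (i.e., not a hull edge), then $\overline{xy}$ would split $\CH(S)$ into two regions, each containing at least one point of $S\setminus\{x,y\}$. Removing $\overline{xy}$ from $C$ would leave a Hamiltonian $x$--$y$ path $P$ whose edges, being segments between points of $S$, could cross $\overline{xy}$ only if they connected vertices in different regions; hence no such edge exists in $P$. Starting at $x$, the path $P$ would enter one region and, unable to transit to the other region, would have to terminate at $y$ without reaching the vertices on the opposite side of $\overline{xy}$, which is impossible. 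Therefore every edge of $C$ is a hull edge and $C=\partial\CH(S)$.

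Consequently, $\pi=\partial\CH(S)\setminus\{\overline{st}\}$. Since $\overline{ab}$ is a hull edge distinct from $\overline{st}$, it belongs to $\pi$, contradicting the assumption. I do not anticipate a serious obstacle here; the only delicate ingredients are the small observations that $\overline{st}\notin\pi$ (needing $n\ge 4$) and that a hull edge of $\CH(S)$ cannot be crossed by any chord between points of $S$, both of which are immediate.
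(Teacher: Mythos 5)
Your proof is correct and follows essentially the same route as the paper, which simply asserts in one line that the unique plane Hamiltonian $s$--$t$ path on a convex point set with $\overline{st}$ a hull edge is $\partial\CH(S)\setminus\{\overline{st}\}$ and hence contains $\overline{ab}$. You merely supply the details of that uniqueness claim (completing $\pi$ to a plane Hamiltonian cycle and showing no chord can occur), all of which are sound.
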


\begin{restatable}[\restateref{lem:wheelobstacle}]{lemma}{wheelobstacle}\label{lem:wheelobstacle}
    Let $S$ be a set of $n\ge 5$ points with $n-1$ vertices on $\partial\CH(S)$. Let $s$ be the point in the interior of $\CH(S)$, $a,b,t$ be three distinct vertices of $\partial\CH(S)$ and let $x,y$ be the two vertices of $\partial\CH(S)$ adjacent to $t$ such that $a,b,x,t,y$ appear in this order on $\partial\CH(S)$. If $\overline{ab}$ is an edge of $\partial\CH(S)$ and $s$ lies on the same side of $\ell(ax)$ and $\ell(by)$ as $t$, then $\pi$ does not exist.   
\end{restatable}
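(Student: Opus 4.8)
The plan is to argue by contradiction and reduce the problem to the convex-position setting. Suppose a plane Hamiltonian $s$--$t$ path $\pi$ with $\overline{ab}\notin\pi$ exists. Since $s$ lies in the interior of $\CH(S)$ while every other point of $S$ lies on $\partial\CH(S)$, the set $S'=S\setminus\{s\}$ is in convex position. As $s$ is an endpoint of $\pi$, deleting $s$ together with its unique incident edge leaves a plane Hamiltonian path $\pi'$ on $S'$ whose endpoints are $t$ and the former $\pi$-neighbor $p'$ of $s$; moreover $\pi'$ still avoids $\overline{ab}$, which remains an edge of $\CH(S')$. I would also record the region cut out by the hypothesis: since $a,b,x,t,y$ appear in this cyclic order, the chords $\overline{ax}$ and $\overline{by}$ cross at an interior point, and the assumption that $s$ lies on the $t$-side of both $\ell(ax)$ and $\ell(by)$ places $s$ in the region $R_t$ bounded by these two chords and the hull edges $\overline{xt},\overline{ty}$; one checks that $R_t$ contains no point of $S$ other than $s$ and $t$.

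Next I would exploit the structure of plane Hamiltonian paths on points in convex position: such a path is plane exactly when its prefixes (equivalently, its suffixes) are contiguous boundary arcs, and in particular the edge incident to each endpoint is an edge of the convex hull. Applied to $\pi'$, the edge at $t$ is thus $\overline{tx}$ or $\overline{ty}$; using the reflective symmetry of the hypothesis that exchanges $a\leftrightarrow b$ and $x\leftrightarrow y$, I may assume it is $\overline{ty}$, so that $\pi'$ reads $t,y,\dots,p'$. The proof then splits according to the other endpoint $p'$ (note $p'\neq y$ since $n\geq 5$).

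If $p'=x$, then the two endpoints $t,x$ of $\pi'$ are consecutive vertices of $\partial\CH(S')$, and $\overline{ab}$ is an edge of $\CH(S')$ disjoint from $\overline{tx}$ (as $t\notin\{a,b\}$ and $x\neq a$, while $x=b$ is a degenerate case). Then \cref{lem:convexobstacle}, applied to $\pi'$ with endpoints $x$ and $t$, asserts that no such path can avoid $\overline{ab}$, a contradiction; note that here the position of $s$ is not even needed. The degenerate situations $b=x$ or $a=y$, in which $\ell(ax)$ or $\ell(by)$ coincides with $\ell(ab)$, I would dispatch directly.

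The remaining case $p'\notin\{x,y\}$ is where the position of $s$ re-enters and is the main obstacle. Here $p'$ lies strictly beyond one of the two chords: either on the side of $\ell(ax)$ opposite to $t$ (if $p'\in\{b\}\cup B$) or on the side of $\ell(by)$ opposite to $t$ (if $p'\in\{a\}\cup C$), where $B$ and $C$ denote the hull vertices strictly between $b$ and $x$, respectively between $y$ and $a$. Consequently the segment $\overline{sp'}$, which is an edge of $\pi$, crosses the corresponding chord. The crux is to upgrade this to a crossing with $\pi'$ itself: since $s$ sits inside $R_t$ while $p'$ lies on the far side of a chord separating $R_t$ from $p'$, I would show, again from the arc-structure of $\pi'$ together with the avoidance of $\overline{ab}$, that $s$ and $p'$ lie in different faces of the subdivision of $\CH(S')$ induced by $\pi'$, so that $\overline{sp'}$ must cross an edge of $\pi'$. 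This contradicts the planarity of $\pi$ and finishes the argument. Establishing this last crossing rigorously---pinning down which edges of $\pi'$ the separating chord meets and ruling out that $\overline{sp'}$ reaches $p'$ while staying inside a single face---is the step I expect to require the most care.
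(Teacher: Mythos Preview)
Your reduction to convex position is a reasonable idea, and the case $p'=x$ is indeed dispatched cleanly by \cref{lem:convexobstacle}. But the case $p'\notin\{x,y\}$ is where essentially all of the content lies, and you have only outlined a plan for it: showing that $\overline{sp'}$ crosses $\pi'$ is precisely the separation statement that the lemma is about, and your proposal does not prove it.

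The paper's argument is different and more direct. It never deletes $s$. Instead it looks at the sub-path $\pi_{ab}$ of $\pi$ from $a$ to $b$. Because both $s$ and $t$ are endpoints of $\pi$, neither appears in the interior of $\pi_{ab}$, so $\pi_{ab}$ uses only hull vertices distinct from $t$; these split into the two arcs $\{a\}\cup C\cup\{y\}$ and $\{b\}\cup B\cup\{x\}$. Since $\pi_{ab}$ runs from one arc to the other and cannot use $\overline{ab}$, it contains a diagonal $\overline{uv}$ with $u$ in the first arc and $v$ in the second. Any such diagonal separates $t$ from at least one of $a,b$; and the hypothesis that $s$ lies on the $t$-side of both $\ell(ax)$ and $\ell(by)$ guarantees (by rotating the line about $a$, respectively $b$) that $s$ lies on the $t$-side of $\overline{uv}$ as well. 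Hence the portion of $\pi$ from $s$ or from $t$ to the separated vertex must cross $\overline{uv}\in\pi$, contradicting planarity.

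By removing $s$ first, you give up the leverage of having \emph{both} endpoints of $\pi$ outside $\pi_{ab}$, and you then have to recover it through the single edge $\overline{sp'}$. Filling your gap rigorously amounts to locating a separating diagonal of $\pi'$ with $s$ on its $t$-side---which is the paper's argument, only carried out in a less convenient setting (for instance, your ``first diagonal'' $\overline{v_i x}$ need not have $s$ on its $t$-side when $s$ lies in $R_t\setminus\triangle xty$). The quickest repair is to abandon the deletion and run the $a$--$b$ sub-path argument directly in $\pi$.
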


\begin{figure}[t]
    \centering
    \includegraphics{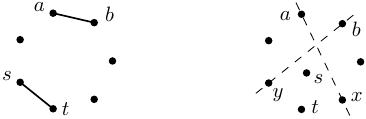}
    \caption{Sets satisfying the conditions of \cref{lem:convexobstacle} (left) and \cref{lem:wheelobstacle} (right).}
    \label{fig:obstaclesstpathsegnotincluded}
\end{figure}

The main result of this section is that, except in the cases  described above, the desired path $\pi$ always exists.

\begin{restatable}[\restateref{thm:stpathsegnotinc}]{theorem}{stpathsegnotinc}
    \label{thm:stpathsegnotinc}
    Let $S$ be a set of at least 5 points in general position in the plane and $a,b,s,t\in S$ such that $a \ne b$, $s \ne t$. Then, there is a plane Hamiltonian path~$\pi$ in $S$ with endpoints $s$ and $t$ such that $\overline{ab} \not\in \pi$, unless the quintuple $\langle S,a,b,s,t\rangle$ satisfies the conditions of \cref{lem:convexobstacle} or \cref{lem:wheelobstacle}.
    We can decide if $\pi$ exists, and construct it if it does, in $O(n\log n)$ time, where $n=|S|$.
\end{restatable}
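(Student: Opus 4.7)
My plan is a case analysis driven by the position of $\overline{ab}$ with respect to $\partial \CH(S)$ and by the intersection between $\{a,b\}$ and $\{s,t\}$, reducing each case to one of the building lemmas \cref{lem:pathTwoEndpoints}, \cref{Lem: KKGVThm2}, \cref{Lem: S=4}, or \cref{lem:pathGivenSegment}. Since \cref{lem:convexobstacle} and \cref{lem:wheelobstacle} already isolate every configuration in which $\pi$ does not exist, the content of the proof lies in constructing $\pi$ in all remaining situations.

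An easy observation dispatches the case $\{a,b\} = \{s,t\}$: since any Hamiltonian $s$-$t$ path on $n \ge 5$ points has $n-1 \ge 4$ edges, the edge $\overline{st}$ between its endpoints cannot be one of them, so the path supplied by \cref{lem:pathTwoEndpoints} automatically avoids $\overline{ab}$. For the case $|\{a,b\} \cap \{s,t\}| = 1$, say $a = s$ and $b \notin \{s,t\}$, the requirement reduces to ensuring that the first edge of $\pi$ at $s$ is not $\overline{sb}$. I would handle this by choosing a suitable first neighbor $x$ of $s$ (a hull neighbor of $s$ different from $b$, or a visible point if $s$ is interior), applying \cref{lem:pathGivenSegment} to the forced segment $\overline{sx}$ with $z = t$ if $t$ is a hull vertex, and otherwise splitting $S$ by a line through $s$ and $t$ and using \cref{lem:pathTwoEndpoints} on each half.

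The main case is $\{a,b\} \cap \{s,t\} = \emptyset$, which I split according to whether $\overline{ab}$ is a hull edge. If $\overline{ab}$ is not a hull edge, then $\ell(ab)$ separates $S \setminus \{a,b\}$ into two non-empty halves $L, R$; I would construct $\pi$ by combining Hamiltonian sub-paths on $L \cup \{a,b\}$ and $R \cup \{a,b\}$ stitched across $\ell(ab)$ via a bridge distinct from $\overline{ab}$, with the concrete stitching dictated by where $s$ and $t$ lie relative to $\ell(ab)$. If $\overline{ab}$ is a hull edge, then either $S$ is in convex position---in which case I exhibit plane Hamiltonian $s$-$t$ paths along the hull for every non-\cref{lem:convexobstacle} configuration---or $S$ has interior points; when there are two or more interior points, I reroute around $\overline{ab}$ using \cref{lem:pathGivenSegment} on a sub-configuration, and when there is exactly one interior point $v$, the only obstruction turns out to be precisely \cref{lem:wheelobstacle}.

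The main obstacle will be the last subcase, where $\overline{ab}$ is a hull edge and exactly one interior point is present: here \cref{lem:wheelobstacle} is sharp, so the construction must positively exhibit a path whenever its geometric cone condition fails. This forces a subdivision into sub-cases based on which of the cones cut out by $\ell(ax)$ and $\ell(by)$ (where $x,y$ are the hull-neighbors of $t$) contains the interior point, and in each such sub-case I would invoke \cref{lem:pathGivenSegment} on an appropriately chosen subset to build a detour through the interior point that reaches $t$ without using $\overline{ab}$. The $O(n \log n)$ running time follows because the case analysis adds only a constant number of line-side tests, and every invoked building lemma runs in $O(n \log n)$ time.
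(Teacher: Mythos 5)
Your high-level skeleton (split on whether $\overline{ab}$ is a hull edge and on the number of interior points, and recognize that \cref{lem:convexobstacle} and \cref{lem:wheelobstacle} are the only sharp obstructions) matches the paper's, but the proposal has concrete gaps exactly where the real work lies. The most serious one is your repeated reliance on \cref{lem:pathGivenSegment}: that lemma produces a path that \emph{contains} a prescribed segment and starts at a prescribed hull vertex $z$, but it gives no control whatsoever over the path's other endpoint. So applying it ``to the forced segment $\overline{sx}$ with $z=t$'' does not yield an $s$--$t$ path (the path starting at $t$ need not end at $s$, nor need $s$ be an endpoint at all), and invoking it to ``reroute around $\overline{ab}$'' in the one- and two-interior-point cases is a category error -- the task there is to \emph{avoid} a segment while hitting two prescribed endpoints. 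The paper instead relies on a different auxiliary statement (\cref{lem:oneendptsegnotinc}): a plane path from $s$ through a prescribed subset that avoids $\overline{ab}$ and ends at $a$ \emph{or} $b$, whichever is feasible. This ``or'' is essential -- e.g.\ when the points on $s$'s side of $\ell(ab)$ are in convex position an $s$--$a$ path may be forced to use $\overline{ab}$ -- and your proposal has no substitute for it. Likewise, your treatment of the case with at least two interior points is a one-line hand-wave, whereas the paper needs the nontrivial decomposition by an interior point $x$ with $\triangle(a,b,x)$ empty into four convex regions (plus a ray to separate $s$ from $t$, plus a separate argument when the only such $x$ is $s$ itself).

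A second, smaller gap is in your case $|\{a,b\}\cap\{s,t\}|=1$. Take $S$ in convex position with $b$ and $t$ the two hull neighbours of $s=a$: the unique plane Hamiltonian $s$--$t$ path is the hull path, which uses $\overline{sb}=\overline{ab}$, so no valid $\pi$ exists. This configuration is excluded only under the reading of \cref{lem:convexobstacle} in which ``disjoint'' means ``distinct'' (the lemma's proof supports that reading), but your case as written promises a construction for all of $|\{a,b\}\cap\{s,t\}|=1$ and your recipe (``choose a hull neighbour $x$ of $s$ different from $b$'') collapses there, since the only such neighbour is $t$, which must be the \emph{last} vertex of $\pi$. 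You need to either fold this configuration into the obstruction or exclude it explicitly. Finally, note that the paper's primary case split is geometric ($s,t$ separated by $\ell(ab)$ or not), not on $|\{a,b\}\cap\{s,t\}|$; your organization is legitimate in principle, but as written the separated case (``stitch $L$ and $R$ via a bridge'') leaves unresolved how $a$ and $b$ -- which lie on the splitting line -- are assigned to the two sub-paths and made internal vertices without using $\overline{ab}$.
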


To prove \cref{thm:stpathsegnotinc}, we will need the following less restricted statement.

\begin{restatable}[\restateref{lem:oneendptsegnotinc}]{lemma}{oneendptsegnotinc}
    \label{lem:oneendptsegnotinc}
    Let $S$ be a set of $n \ge 4$ points in general position in the plane, $a \ne b \in S$, and let $s \in S$ be any point.
    Then, there is a plane Hamiltonian path $\pi$ in $S$ such that one endpoint of $\pi$ is $s$, the other endpoint is $a$ or $b$, and $\overline{ab} \not\in \pi$.
    Such a path can be computed in $O(n \log n)$ time.
\end{restatable}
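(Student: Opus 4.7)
My plan is a case analysis. The easy case is $s \in \{a,b\}$: without loss of generality $s = a$, and Lemma~\ref{lem:pathTwoEndpoints} applied with endpoints $a,b$ returns a plane Hamiltonian path; since $|S|\ge 3$, its two endpoints cannot be adjacent along it, so $\overline{ab}$ is not in the path.

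Now assume $s \notin \{a, b\}$, and partition $S\setminus\{a,b\}$ into $S_A,S_B$ by the two open half-planes bounded by $\ell(ab)$, with $s \in S_A$ without loss of generality. If $S_B \neq \emptyset$, I use a splitting construction: invoke Lemma~\ref{lem:pathTwoEndpoints} to get a plane Hamiltonian path $\pi_A$ on $S_A\cup\{a\}$ from $s$ to $a$ and another $\pi_B$ on $S_B \cup \{a,b\}$ from $a$ to $b$, and concatenate them at $a$. The edges of $\pi_A$ and $\pi_B$ lie in the two closed half-planes bounded by $\ell(ab)$ and therefore cannot cross; since $|S_B| \ge 1$ forces $\pi_B$ to have at least three vertices, the endpoints $a,b$ of $\pi_B$ are not adjacent there, so the combined plane Hamiltonian $(s,b)$-path avoids $\overline{ab}$.

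The hard case is $S_B = \emptyset$, in which $\overline{ab}$ lies on $\partial \CH(S)$. I argue by induction on $|S|$, with base case $|S|=4$ verified by direct enumeration. In the inductive step, if $\partial \CH(S)$ contains a vertex $v \notin \{s,a,b\}$, the induction hypothesis applied to $S \setminus \{v\}$ yields a plane Hamiltonian path $\pi'$ with the required endpoints avoiding $\overline{ab}$; inserting $v$ at a suitable internal position of $\pi'$ preserves planarity as well as the endpoints, and the resulting path still avoids $\overline{ab}$ because $v \notin \{a,b\}$. If instead $\partial \CH(S) = \triangle sab$, I use a direct construction: let $r \in S\setminus\{s,a,b\}$ be the interior point whose polar angle at $b$ (measured counter-clockwise from $\overrightarrow{ba}$) is maximal among those strictly smaller than the polar angle of $s$. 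Applying Lemma~\ref{lem:pathTwoEndpoints} to $S\setminus\{s,b\}$ with endpoints $a$ and $r$ yields a plane Hamiltonian path $\pi_1$ from $a$ to $r$, and I return $\pi = s, b, r$ followed by the reverse of $\pi_1$, which is a plane Hamiltonian $(s,a)$-path. Planarity follows because the choice of $r$ places all vertices of $\pi_1$ on the closed $a$-side of $\ell(br)$ while $s$ lies alone on the other side; consequently neither $\overline{sb}$ nor $\overline{br}$ crosses an edge of $\pi_1$, and $\overline{ab}\notin\pi$ because no edge of $\pi$ is incident to both $a$ and $b$.

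The main obstacle is the insertion step in the inductive case: I must show that a hull vertex $v \in \partial\CH(S)\setminus\{s,a,b\}$ can always be inserted at an internal (rather than terminal) position of $\pi'$ without introducing crossings. The standard fact that a convex-hull vertex can be inserted somewhere into any plane Hamiltonian path on the remaining points provides a valid insertion site; the delicate point is showing that such a position can be chosen away from the two endpoints of $\pi'$, so that the endpoints $s$ and the chosen one of $\{a,b\}$ are preserved. Picking $v$ carefully among the available hull vertices, using the angular order of $S\setminus\{v\}$ around $v$ together with the extremality of $v$ on $\partial\CH(S)$, should leave room for an internal insertion and close the gap.
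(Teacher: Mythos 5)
Your first two cases are correct and coincide with the paper's proof: for $s\in\{a,b\}$ you invoke \cref{lem:pathTwoEndpoints} and note that the endpoints of a Hamiltonian path on at least three points cannot be adjacent, and for the case where $\ell(ab)$ has points of $S$ on both sides you build an $s$--$a$ path above and an $a$--$b$ path below and glue them at $a$, exactly as in the paper's Case~B. Your treatment of the triangle subcase $\partial\CH(S)=\triangle sab$ is also sound.

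The problem is the remaining (and main) subcase, where $\overline{ab}$ is a hull edge and $\partial\CH(S)$ has a vertex $v\notin\{s,a,b\}$. Your induction hinges on the claim that $v$ can be inserted into an \emph{interior edge} of an arbitrary plane Hamiltonian path $\pi'$ on $S\setminus\{v\}$ without creating crossings, and you yourself only assert that a careful choice of $v$ ``should'' make this possible. This is a genuine gap, for two reasons. First, the statement you need is a universal one: since $v$ must be fixed before the inductive call, and the induction hypothesis returns \emph{some} path $\pi'$ that you do not control, you need that \emph{every} plane Hamiltonian path on $S\setminus\{v\}$ admits an edge $\overline{pq}$ with both $\overline{vp}$ and $\overline{vq}$ crossing no other edge of $\pi'$; ``picking $v$ carefully using the angular order'' cannot depend on a path you have not yet seen. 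Second, that universal statement, while (I believe) true, is not a throwaway observation: the only visibility fact that comes for free is that the two angularly extreme points of $S\setminus\{v\}$ around $v$ are visible, and these need not be adjacent in $\pi'$. Establishing the existence of a fully visible edge requires an argument such as: all points lie in a cone of angle less than $\pi$ at $v$, so the ``in front of'' (depth) relation on the pairwise non-crossing edges of $\pi'$ as seen from $v$ is acyclic, and a depth-minimal edge is entirely visible, hence insertable. None of this appears in your proposal, and the word ``should'' marks exactly where the proof is missing. Note that the paper avoids the issue entirely: in its Cases C1 and C2 it takes the hull neighbour $x$ of $a$ (other than $b$), splits $S$ along the triangle $\triangle(a,x,b)$, and applies \cref{lem:pathTwoEndpoints} twice to produce the path directly, with no induction and no insertion lemma. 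Either supply a full proof of the internal-insertion claim or replace the inductive step by an explicit construction of this kind.
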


\begin{proof}[of \cref{thm:stpathsegnotinc}]
Assume that $n\ge 5$ and the quintuple $\langle S,a,b,s,t\rangle$ does not satisfy the conditions of \cref{lem:convexobstacle} or \cref{lem:wheelobstacle}.
Further, assume that $\overline{ab}$ is horizontal and that $a$ lies to the left of $b$. Also, assume that $\{a,b\} \ne \{s,t\}$, as otherwise the result follows from \cref{lem:pathTwoEndpoints}. We now split into cases.

\begin{description}
    \item[Case A: $s$ and $t$ lie in different closed halfplanes determined by $\ell(ab)$.]
    Note that the assumption of this case guarantees that $s,t\not\in \{a,b\}$.  Without loss of generality we  assume that $s$ lies above and $t$ lies below $\ell(ab)$, and that $s$ is not the only point of $S$ lying above $\ell(ab)$. 
    We first apply \cref{lem:oneendptsegnotinc} and construct a plane path starting in $s$ and collecting all the points of $S$ that are in the same closed halfplane determined by $\ell(ab)$ as $s$, but not using the segment~$\overline{ab}$. Such path is either an $s$--$a$ or an $s$--$b$ path, so assume that the former is true. Then, we construct an $a$--$t$ path for the remaining points using \cref{lem:pathTwoEndpoints}. Finally, concatenating these two paths gives $\pi$.  See \cref{fig:onepathsegnotinc-1}.
    
    \item[Case B: $s$ and $t$ lie in the same closed halfplane determined by $\ell(ab)$] \phantom{linebreak}
    \textbf{and $\overline{ab}$ is not an edge of $\partial\CH(S).$}~
    See \cref{app:proofofstpathsegnotinc}.
    
    \item[] For the remaining cases, we assume that $\overline{ab}$ is an edge of $\partial\CH(S)$.
    
    \item[Case C:  $S$ is in convex position and $\overline{st}$ is not an edge of $\partial\CH(S)$.] \phantom{linebreak}
    See \cref{app:proofofstpathsegnotinc} and \cref{fig:onepathsegnotinc-2}.

    \item[Case D:  $S$ contains only one point in the interior of $\CH(S)$.] \phantom{linebreaklinebreak}
    See \cref{app:proofofstpathsegnotinc} and \cref{fig:onepathsegnotinc-3}.

    \item[Case E: $S$ contains at least two points in the interior of $\CH(S)$.]\phantom{linebreak}
    If the only two points in the interior of $\CH(S)$ are $s$ and $t$, $\pi$ can be obtained from 
    the $a$--$b$ path in $\partial\CH(S)$ not containing $\overline{ab}$, by appending one of $s, t$ to the front and the other to the end. At least one of the two possible cases yields a non-crossing path (if the segments $\overline{sa}$ and $\overline{bt}$ cross, $\overline{ta}$ and $\overline{bs}$ are disjoint).
    
    From now on, we may assume that there is at least one point distinct from $s,t$ in the interior. Let $x$ be a point in the interior of $\CH(S)$ such that the triangle $\triangle(a,b,x)$ contains no other points of $S$. We now need to consider two subcases. 
    
    \item[Case E1: We can choose $x\neq s$.] Then, the lines $\ell(ax)$ and $\ell(bx)$ split $\CH(S)$ into $4$ convex sets. If one of them contains both $s$ and $t$, we split it into two using a suitable ray from $x$ in such a way that $s$ and $t$ are in separate convex sets. We then apply \cref{lem:pathTwoEndpoints} (at most) three times to obtain suitable $s$--$a$ and $t$--$b$ paths\footnote{Or suitable $t$--$a$ and $s$--$b$ paths.} and concatenating them with the segments $\overline{ax}$ and $\overline{xb}$ finally gives us the desired path $\pi$.
    Observe that, in the construction of the $t$--$b$ path (or, depending on the situation, one of the other paths),
    we can always find a suitable bridge over~$\ell(bx)$ (or $\ell(ax)$);
    see \cref{fig:onepathsegnotinc-4}.

    \item[Case E2: $s$ is the only point such that the triangle $\triangle(a,s,b)$ is empty.] See \cref{app:proofofstpathsegnotinc} and \cref{fig:onepathsegnotinc-5,fig:onepathsegnotinc-6}.
        
\end{description}
    
Note that we only apply \cref{lem:pathTwoEndpoints} and make local modifications or simple tests for the cases,
which allows for a running time in $O(n \log n)$.
\end{proof}

\begin{figure}[t]
    \centering
    \begin{subfigure}[t]{0.3\textwidth}
        \centering
        \includegraphics[page=1]{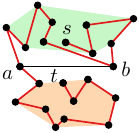}
        \subcaption{Case A.}
        \label{fig:onepathsegnotinc-1}
    \end{subfigure}
    \hfill
    \begin{subfigure}[t]{0.3\textwidth}
        \centering
        \includegraphics[page=2]{Figures/onepathsegnotinc.2.pdf}
        \subcaption{Case C.}
        \label{fig:onepathsegnotinc-2}
    \end{subfigure}
    \hfill
    \begin{subfigure}[t]{0.3\textwidth}
        \centering
        \includegraphics[page=3]{Figures/onepathsegnotinc.2.pdf}
        \subcaption{Case D.}
        \label{fig:onepathsegnotinc-3}
    \end{subfigure}
    
    \vspace{1em}
    
    \begin{subfigure}[t]{0.3\textwidth}
        \centering
        \includegraphics[page=4]{Figures/onepathsegnotinc.2.pdf}
        \subcaption{Case E1 -- we apply \cref{lem:pathTwoEndpoints} in the three colored regions.}
        \label{fig:onepathsegnotinc-4}
    \end{subfigure}
    \hfill
    \begin{subfigure}[t]{0.3\textwidth}
        \centering
        \includegraphics[page=6]{Figures/onepathsegnotinc.2.pdf}
        \subcaption{Case E2 -- construction when $t$ is an interior point of $S$.}
        \label{fig:onepathsegnotinc-5}
    \end{subfigure}
    \hfill
    \begin{subfigure}[t]{0.3\textwidth}
        \centering
        \includegraphics[page=5]{Figures/onepathsegnotinc.2.pdf}
        \subcaption{Case E2 -- construction when $t$ is a vertex of $\partial\CH(S)$.}
        \label{fig:onepathsegnotinc-6}
    \end{subfigure}
    
    \caption{Construction of path $\pi$ in \cref{thm:stpathsegnotinc}.}
    \label{fig:onepathsegnotinc}
\end{figure}

\subsection{One path with prescribed endpoints and including a segment}
\label{sec:proofofstpathabincluded}

Next, we consider the problem of finding an $s$--$t$ path $\pi$ that contains $\overline{ab}$. As in the previous problem, in some situations such a path does not exist. The first claim is immediate.

\begin{lemma}
    \label{lem:stisabobstruction}
    If $S$ is a set of more than two points in general position and $\{s,t\}=\{a,b\}$, then $\pi$ does not exist.
\end{lemma}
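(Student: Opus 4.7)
The plan is to observe that the hypothesis $\{s,t\}=\{a,b\}$ forces the required edge $\overline{ab}$ to coincide with the segment $\overline{st}$ between the two endpoints of $\pi$, and that this conflicts with $\pi$ being Hamiltonian on more than two points. Since the statement is structural rather than geometric, no properties of general position or planarity will be needed; only the combinatorial notion of a Hamiltonian path is used.

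In detail, I would first reduce to the case $s=a$, $t=b$ by symmetry (the set equality $\{s,t\}=\{a,b\}$ leaves only this and the swapped version, which is handled identically). Next, I would invoke the definition of an $s$--$t$ Hamiltonian path $\pi$: both $s$ and $t$ are endpoints of $\pi$ and hence have degree exactly~$1$ in~$\pi$. The requirement that $\overline{ab}=\overline{st}\in\pi$ then contributes one unit to the degree of each of $s$ and $t$, fully saturating both. Consequently $\pi$ contains no edge other than $\overline{st}$, so $\pi$ visits only the two points $s$ and $t$.

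Finally, I would conclude by contradiction: a Hamiltonian path on $S$ must visit every point of $S$, but we have just shown $\pi$ visits only two, which is incompatible with $|S|>2$. Therefore no such $\pi$ exists. The only step requiring any thought is the degree argument, but it is a one-line observation; there is no real obstacle here, and the lemma serves mainly to dispose of a degenerate configuration before the more substantial characterisation is carried out.
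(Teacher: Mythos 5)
Your argument is correct: since $s$ and $t$ are the endpoints of $\pi$ they each have degree~$1$, so if $\overline{ab}=\overline{st}\in\pi$ the path consists of that single edge and cannot be Hamiltonian when $|S|>2$. The paper states this lemma without proof (calling it immediate), and your degree argument is exactly the intended justification.
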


The proofs of the next three lemmas can be found in \cref{app:proofofstpathabincluded}.

\begin{restatable}[\restateref{lem:diagonalobstruction}]{lemma}{diagonalobstruction}
   \label{lem:diagonalobstruction}
    If $S$ is a set of $n$ points in general position, $a,b$ are non-consecutive vertices of $\partial\CH(S)$, and $s$ and $t$ lie in the same closed halfplane determined by $\ell(ab)$, then $\pi$ does not exist.
\end{restatable}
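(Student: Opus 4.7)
\smallskip

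\noindent\textbf{Proof plan.} My plan is to argue by contradiction: suppose a plane Hamiltonian $s$--$t$ path $\pi$ with $\overline{ab}\in\pi$ exists, with $s$ and $t$ both lying in the closed upper halfplane of $\ell(ab)$. Because $a,b$ are non-consecutive vertices of $\partial\CH(S)$, each of the two open halfplanes determined by $\ell(ab)$ contains at least one further vertex of $\partial\CH(S)$; in particular, I fix a witness point $w\in S$ lying strictly \emph{below} $\ell(ab)$, which will be the source of the contradiction.

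The geometric core of the argument is the identity $\CH(S)\cap \ell(ab)=\overline{ab}$. This is true because $a,b$ are extreme points of $S$, so no point of $\CH(S)$ sits on $\ell(ab)$ outside the segment $\overline{ab}$. From this I derive the following crucial observation: for any edge $\overline{uv}$ of $\pi$, the segment $\overline{uv}$ lies in $\CH(S)$, so if $\overline{uv}$ meets $\ell(ab)$ at all then it meets it inside $\overline{ab}$. Meetings in the relative interior of $\overline{ab}$ are forbidden by planarity (since $\overline{ab}$ itself is an edge of $\pi$), so an edge of $\pi$ different from $\overline{ab}$ can touch $\ell(ab)$ only at $a$ or at $b$.

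Now I remove $\overline{ab}$ from $\pi$, splitting it into two subpaths whose endpoints are $\{s,a\},\{b,t\}$ in some pairing; by symmetry assume the subpaths are $P_s$ from $s$ to $a$ and $P_t$ from $b$ to $t$. General position gives $s\notin \ell(ab)$, and $b\notin V(P_s)$, so by the previous paragraph the only edge of $P_s$ that could possibly cross $\ell(ab)$ is the one incident to $a$. Walking along $P_s$ from $s$ towards $a$, every edge except the last one connects two vertices on the same side of $\ell(ab)$, so all vertices of $P_s$ except $a$ lie in the open upper halfplane. The identical argument applied to $P_t$ places all of its vertices except $b$ in the open upper halfplane.

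Consequently, $\pi=P_s\cup\{\overline{ab}\}\cup P_t$ visits only points of $S$ in the closed upper halfplane, which contradicts the existence of the witness $w$ strictly below $\ell(ab)$. The step I expect to be the main subtlety is the second paragraph: one has to rule out both the ``inside'' crossings of $\overline{ab}$ (handled by planarity) and the ``outside'' crossings of $\ell(ab)\setminus \overline{ab}$ (handled by $a,b$ being convex-hull vertices, via $\CH(S)\cap\ell(ab)=\overline{ab}$). Without the convex-hull assumption on $a$ and $b$, edges could ``sneak around'' $\overline{ab}$ and the whole argument collapses, so this is precisely where the hypothesis is used.
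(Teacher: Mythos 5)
Your proof is correct and takes essentially the same approach as the paper's, which argues much more tersely that any plane Hamiltonian $s$--$t$ path must visit a point on the far side of $\ell(ab)$ \emph{between} visiting $a$ and $b$; your version supplies the missing justification, namely that $\CH(S)\cap\ell(ab)=\overline{ab}$ forces every crossing of $\ell(ab)$ by an edge of $\pi$ to happen on $\overline{ab}$, which planarity and general position forbid. The only slip is the claim that general position gives $s\notin\ell(ab)$, which fails in the degenerate case $s\in\{a,b\}$ (not excluded by the hypotheses), but there $P_s$ is a single vertex and your argument goes through trivially.
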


Before describing the remaining situations in which $\pi$ does not exist, recall that a bridge over $\ell(ab)$ is a segment that crosses $\ell(ab)$ but not $\overline{ab}$.

\begin{restatable}[\restateref{lem:bottomobstruction}]{lemma}{bottomobstruction}
   \label{lem:bottomobstruction}
    Let $S$ be a set of $n$ points in general position. Suppose that $\{a,b\} \not\subset \partial\CH(S)$, one of $s,t$ is in $\{a,b\}$, in the open halfplane determined by $\ell(ab)$ containing the other of $s,t$ there is at least one more point of $S$, and every bridge $\overline{xy}$ over $\ell(ab)$ is incident to the other of $s,t$. Then, $\pi$ does not exist.
\end{restatable}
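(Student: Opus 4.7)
The plan is to derive a contradiction by tracking how many times $\pi$ must cross the line $\ell(ab)$. After exchanging the roles of $a$ and $b$ (and possibly reversing $\pi$ by swapping $s$ and $t$), I may assume $s=a$, so that the ``other'' of $s,t$ is $t\notin\{a,b\}$. Let $H_t$ and $H_{\bar t}$ denote the two open halfplanes bounded by $\ell(ab)$, with $t\in H_t$. By general position, every point of $S\setminus\{a,b\}$ lies in $H_t\cup H_{\bar t}$.

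The first step is to locate $\overline{ab}$ inside $\pi$. Since $a$ is an endpoint of $\pi$, it has degree one, so the unique edge of $\pi$ incident to $a$ must be $\overline{ab}$. Hence $\pi$ begins $a\to b\to\cdots\to t$, and deleting $a$ leaves a plane path $\pi'$ from $b$ to $t$ spanning $S\setminus\{a\}$. The second step is to bound the number of bridges in $\pi'$: because $\pi$ is plane and contains $\overline{ab}$, no edge of $\pi'$ crosses $\overline{ab}$, so every edge of $\pi'$ that crosses $\ell(ab)$ is a bridge over $\ell(ab)$. By hypothesis each such bridge is incident to $t$, and since $t$ is an endpoint of $\pi'$ its degree is one. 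Consequently $\pi'$ uses at most one bridge, and if present that bridge must be the last edge of $\pi'$.

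The final step walks along $\pi'$ starting from $b\in\ell(ab)$ and records which halfplane each successive vertex occupies; a change of side is possible only at a bridge. Because the walk ends at $t\in H_t$ and uses at most one bridge, there are exactly two scenarios. In the no-bridge scenario, every vertex of $\pi'$ after $b$ lies in $H_t$, forcing $H_{\bar t}\cap S=\emptyset$. In the one-bridge scenario, the bridge is the last edge, joining some $q\in H_{\bar t}$ to $t$, and every vertex of $\pi'$ strictly before $t$ lies in $H_{\bar t}$, forcing $H_t\cap S=\{t\}$.

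The one-bridge scenario directly contradicts the hypothesis that $H_t$ contains a point of $S$ other than $t$. To rule out the no-bridge scenario I will use the hypothesis $\{a,b\}\not\subset\partial\CH(S)$: if all points of $S\setminus\{a,b\}$ lay in a single closed halfplane bounded by $\ell(ab)$, then $\overline{ab}$ would be an edge of $\partial\CH(S)$ and both $a$ and $b$ would lie on the hull, a contradiction. Thus $H_{\bar t}\cap S\neq\emptyset$ and the no-bridge scenario is also impossible. The only mildly delicate ingredient is this convex-hull observation; once it is in place, the parity-style bookkeeping of bridges immediately closes the proof.
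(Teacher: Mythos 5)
Your proof is correct and takes essentially the same approach as the paper's: both arguments reduce to the observation that $S$ has points on both sides of $\ell(ab)$ other than $t$ (using $\{a,b\}\not\subset\partial\CH(S)$ on one side and the explicit hypothesis on the other), and that the path, forced to start with $\overline{ab}$, cannot connect these two nonempty sets because any connecting edge not incident to $t$ would have to cross $\overline{ab}$. Your bookkeeping via ``at most one bridge, necessarily the last edge of $\pi'$'' is just a slightly more explicit packaging of the paper's one-line planarity contradiction.
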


\begin{restatable}[\restateref{lem:complicatedobstruction}]{lemma}{complicatedobstruction}
   \label{lem:complicatedobstruction}
    Let $S$ be a set of $n$ points in general position. Suppose that $\overline{ab}$ is not an edge of $\partial\CH(S)$, $s\in\{a,b\}$,
    and for every bridge $\overline{xy}$ over $\ell(ab)$, the following holds:
    
    \begin{itemize}
        \item both $x$ and $y$ are vertices of $\partial\CH(S)$,
        \item if $\overline{xy}$ is an edge of $\partial\CH(S)$, then $t \in \{x,y\}$ and
        \item the open halfplane determined by $\ell(xy)$ that contains $a,b$ also contains at least one point on each side of $\ell(ab)$. 
    \end{itemize}
    
    Then, $\pi$ does not exist.
    
\end{restatable}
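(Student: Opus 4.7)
The plan is to argue by contradiction: suppose $\pi$ exists. By the symmetry between $a$ and $b$ in the hypotheses, we may assume $s=a$, so $\pi$ begins with the edge $\overline{ab}$ and then continues as a plane Hamiltonian sub-path from $b$ to $t$ in $S\setminus\{a\}$ whose edges do not cross $\overline{ab}$.

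First I would dispose of the subcase in which both $a$ and $b$ lie on $\partial\CH(S)$. In that situation $\ell(ab)\cap\CH(S)=\overline{ab}$, so every segment between a point of $S\cap H_1$ and a point of $S\cap H_2$ crosses $\overline{ab}$ and $S$ admits no bridge at all. Since $\overline{ab}$ is not a hull edge, the two open halfplanes $H_1,H_2$ of $\ell(ab)$ still contain points of $S$, and the sub-path from $b$ to $t$ cannot visit both--a contradiction. Hence at least one of $a,b$ is interior to $\CH(S)$; because $H_1\cap S\neq\emptyset$ and $H_2\cap S\neq\emptyset$, the path $\pi$ must use at least one bridge.

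The core step is to examine an extremal bridge of $\pi$, for which the natural choice is the \emph{last} bridge $\beta=\overline{xy}$ in path order, with $y$ visited after $x$. After $\beta$, the path stays in a single open halfplane of $\ell(ab)$, so $y$ and $t$ lie on the same side; WLOG $y,t\in H_1$ and $x\in H_2$. The first bulleted hypothesis yields $x,y\in\partial\CH(S)$. If $\overline{xy}$ is a hull edge, the second hypothesis forces $t=y$, and the third hypothesis produces some $p\in H_1\setminus\{t\}$ in the near halfplane of $\ell(xy)$; this $p$ must be visited in the prefix $a,b,\dots,x$. If $\overline{xy}$ is not a hull edge, then the far halfplane of $\ell(xy)$ contains at least one further vertex of $\partial\CH(S)$ that $\pi$ must visit, and since the suffix $y,\dots,t$ lies entirely in $H_1$ and cannot cross $\overline{xy}$, such a vertex also has to be reached through the prefix. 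In both subcases I would argue that the prefix is thereby forced to reach a target sitting beyond another bridge over $\ell(ab)$, contradicting the extremality of $\beta$ (or else producing a strictly smaller nested configuration of the same form, to which the argument iterates).

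The principal obstacle is making the ``forced additional bridge'' step of the previous paragraph fully rigorous. Because the third hypothesis only asserts non-emptiness of the near halfplane on both sides of $\ell(ab)$--rather than pinning down $t$ directly, as in \cref{lem:bottomobstruction}--one must combine it with planarity and the convex-hull-endpoints condition to turn every candidate routing of the prefix of $\pi$ into an unreachable point or an edge crossing. I expect this to be carried out by an extremal or inductive argument on the number and nesting of bridges used by $\pi$, in the spirit of the proofs of \cref{lem:bottomobstruction} and \cref{lem:diagonalobstruction}, but with additional care because the restriction on $t$ here is only indirect.
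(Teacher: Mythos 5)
Your setup is sound: reducing to $s=a$ so that $\pi=a,b,\dots,t$ (the endpoint $a$ has degree one, so its unique edge is $\overline{ab}$), disposing of the case where both $a$ and $b$ lie on $\partial\CH(S)$ (then no bridge exists, yet both open halfplanes of $\ell(ab)$ meet $S$ because $\overline{ab}$ is not a hull edge), and concluding that $\pi$ must use at least one bridge. The gap is in the core step, and it is twofold. First, the contradiction you announce does not follow: showing that the prefix $a,b,\dots,x$ must itself contain a bridge only exhibits a bridge \emph{earlier} than $\beta$ in path order, which is perfectly consistent with $\beta$ being the \emph{last} bridge, so ``contradicting the extremality of $\beta$'' is not available; the fallback of ``a strictly smaller nested configuration of the same form'' is never defined (no induction measure, and no argument that the smaller instance inherits the three bulleted hypotheses). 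Second, in the subcase where $\overline{xy}$ is not a hull edge, the claim that the extra hull vertex on the far side of $\ell(xy)$ ``has to be reached through the prefix'' is false: the suffix begins at $y\in\ell(xy)$, so an edge from $y$ into the far region does not cross $\overline{xy}$, and the far vertices can be collected by the suffix. The correct deduction is the reverse one: since $a,b$ lie in the near open halfplane $N$ of $\ell(xy)$ and no edge of $\pi$ may cross the hull chord $\overline{xy}$, the prefix after $b$ can never leave $N$, so the far points are forced into the suffix and all of $N\cap S$ into the prefix; it is this, combined with the third bullet, that must then be exploited.

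For comparison, the paper argues from the other end of the path: it examines the edge of $\pi$ incident to $t$, distinguishes whether that edge is a bridge or not, and in either case identifies a single bridge $\overline{xy}$ (the first, respectively second, bridge met when reading $\pi$ from $t$) whose chord --- both endpoints being vertices of $\partial\CH(S)$ --- separates the points not yet visited at that moment into two nonempty parts; the third bullet certifies unvisited points of the near part on both sides of $\ell(ab)$, and the first two bullets certify a point beyond the chord, so $\pi$ cannot be both plane and spanning. That argument terminates in one step, with no recursion. If you wish to keep your ``last bridge'' framing, you need to replace the extremality contradiction by an explicit separation argument of this type (or set up an honest induction); as written, the decisive step is missing.
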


See \cref{fig:complicatedobstruction} for examples of sets satisfying the conditions of \cref{lem:bottomobstruction,lem:complicatedobstruction}. 

\begin{figure}[t]
    \centering
    \includegraphics[page=2]{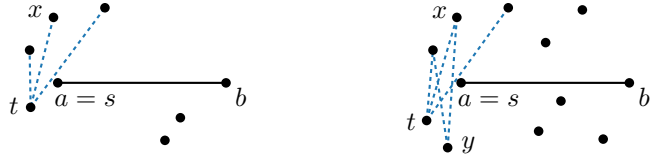}
    \caption{Sets satisfying the conditions of \cref{lem:bottomobstruction} (left) and \cref{lem:complicatedobstruction} (right), with possible bridges drawn in dashed blue.}
    \label{fig:complicatedobstruction}
\end{figure}

The main result of this section is that, except in the cases described above, $\pi$ always exist.
We give the details in \cref{app:proofofstpathabincluded}.

\begin{restatable}[\restateref{thm:stpathabincluded}]{theorem}{stpathabincluded}
    \label{thm:stpathabincluded}
    Let $S$ be a set of at least 2 points in general position in the plane and $a,b,s,t\in S$ such that $a \ne b$, $s \ne t$. Then there is a plane Hamiltonian path $\pi$ in $S$ with endpoints $s$ and $t$ such that $\overline{ab} \in \pi$, unless the quintuple $\langle S,a,b,s,t \rangle $ satisfies the conditions of \cref{lem:stisabobstruction}, \cref{lem:diagonalobstruction}, \cref{lem:bottomobstruction} or \cref{lem:complicatedobstruction}.
    We can decide if $\pi$ exists, and construct it if it does, in $O(n\log n)$ time, where $n=|S|$.
\end{restatable}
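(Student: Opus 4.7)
The plan is to proceed by case analysis on the configuration of $(s,t,a,b)$, using \cref{lem:pathTwoEndpoints} and \cref{lem:pathGivenSegment} as the main building blocks. The underlying idea is that removing $\overline{ab}$ from $\pi$ decomposes the path into two sub-paths: one from $s$ to one of $\{a,b\}$ and one from the other of $\{a,b\}$ to $t$. Whenever $S\setminus\{a,b\}$ can be partitioned in a plane way between these two sub-paths, each piece is supplied by \cref{lem:pathTwoEndpoints}, and the two halves are joined through $\overline{ab}$. After the trivial reduction for $|S|=2$ and dismissing $\{s,t\}=\{a,b\}$ with $|S|\ge 3$ via \cref{lem:stisabobstruction}, I would fix $\ell(ab)$ to be horizontal with $a$ to the left of $b$.

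In the first main case, $\{s,t\}\cap\{a,b\}=\emptyset$. If $s$ and $t$ lie in different open halfplanes of $\ell(ab)$, applying \cref{lem:pathTwoEndpoints} to the points in each closed halfplane yields a plane $s$--$a$ path above and a plane $b$--$t$ path below; concatenating through $\overline{ab}$ finishes the case, since the two sub-paths live in disjoint open halfplanes and hence cannot cross. If instead $s$ and $t$ lie on the same side, $\pi$ must reach the opposite side through a bridge over $\ell(ab)$. When $a,b$ are non-consecutive hull vertices we have $\ell(ab)\cap\CH(S)=\overline{ab}$, so no bridge exists and \cref{lem:diagonalobstruction} applies. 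Otherwise, either $a,b$ are consecutive on $\partial\CH(S)$ or at least one of them is interior, and I would select a suitable bridge $\overline{xy}$ and construct $\pi$ by stringing together sub-paths on each side of $\ell(ab)$, each produced by \cref{lem:pathTwoEndpoints} and glued along $\overline{ab}$ and $\overline{xy}$.

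In the second main case, exactly one of $s,t$ lies in $\{a,b\}$; without loss of generality $s=a$. Since $a$ is an endpoint of $\pi$ of degree one and $\overline{ab}\in\pi$, the path must begin $a,b,\ldots,t$, reducing the problem to finding a plane Hamiltonian $b$--$t$ path on $S\setminus\{a\}$ whose edges do not cross $\overline{ab}$. If $a,b$ are non-consecutive hull vertices, \cref{lem:diagonalobstruction} again obstructs $\pi$ (because $s=a\in\ell(ab)$ trivially lies in every closed halfplane). Otherwise, when $t$ is on the side of $\ell(ab)$ opposite some points that still need to be visited, we route through a bridge; when no suitable bridge is usable \textemdash{} either because every bridge is forced to be incident to $t$, which as an endpoint can accommodate only one incident edge, or because bridges exist only as edges of $\partial\CH(S)$ in a blocking configuration \textemdash{} then \cref{lem:bottomobstruction} and \cref{lem:complicatedobstruction} provide the obstructions.

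The main obstacle will be verifying that, outside these four obstruction configurations, a usable bridge (or a direct subdivision by $\ell(ab)$) can always be chosen so that both sub-paths exist with the prescribed endpoints; this requires a detailed subcase analysis paralleling the one in the proof of \cref{thm:stpathsegnotinc}, especially when $\overline{ab}$ is a diagonal of $\CH(S)$ and the bridge must be chosen from a small family of candidate segments whose endpoints lie in carefully identified positions. The $O(n\log n)$ running time follows from a constant number of applications of \cref{lem:pathTwoEndpoints} and \cref{lem:pathGivenSegment}, combined with an $O(n\log n)$ sweep that either detects one of the obstruction configurations or locates a usable bridge.
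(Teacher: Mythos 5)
Your high-level plan coincides with the paper's: reduce $|S|=2$ and $\{s,t\}=\{a,b\}$ first, split on whether $\ell(ab)$ separates $s$ from $t$, handle the separated case by two applications of \cref{lem:pathTwoEndpoints} glued along $\overline{ab}$, invoke \cref{lem:diagonalobstruction} when $a,b$ are non-consecutive hull vertices with $s,t$ on one side, and otherwise route through a bridge. However, the proposal has a genuine gap: the entire content of the theorem is the claim that \emph{whenever none of the four obstructions applies, a construction succeeds}, and this is exactly the part you defer with ``this requires a detailed subcase analysis.'' In the paper, the hard case is $s=a$ with $\overline{ab}$ not a hull edge (its Case~B2), which requires showing that if the hypotheses of \cref{lem:bottomobstruction} and \cref{lem:complicatedobstruction} both fail, then one of several explicit constructions applies: a bridge not containing $t$ that is a hull edge; a bridge whose lower endpoint lies on $\partial\CH(S_{\mathrm{down}})\setminus\partial\CH(S)$ (and, if no such bridge exists, a more delicate routing that first sweeps a convex sub-region of $S_{\mathrm{down}}$ before reaching $t$ along the hull); a bridge with upper endpoint off $\partial\CH(S)$, handled by a recursion on $S\setminus\{t\}$; and hull-to-hull bridges whose far halfplane is empty on one side of $\ell(ab)$. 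Matching these constructions exactly against the negations of the obstruction hypotheses is where the proof lives, and asserting that ``when no suitable bridge is usable the obstruction lemmas apply'' is a restatement of the theorem, not an argument for it.

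A secondary, more local error: in your first main case you write that when $s$ and $t$ lie on the same side and $a,b$ are consecutive on $\partial\CH(S)$, you ``would select a suitable bridge $\overline{xy}$.'' If $\overline{ab}$ is a hull edge, all of $S$ lies in one closed halfplane of $\ell(ab)$, so no bridge over $\ell(ab)$ exists and none is needed; the correct construction (the paper's Case~A) instead matches $\{s,t\}$ to $\{a,b\}$ so that a line $p$ separates the pair $\{a,s\}$ from $\{b,t\}$, builds an $s$--$a$ path and a $b$--$t$ path in the two halfplanes of $p$ via \cref{lem:pathTwoEndpoints}, and joins them by $\overline{ab}$. You should also make explicit, in the separated-endpoints case, why $s,t\notin\{a,b\}$ is forced there (a point of $\{a,b\}$ lies on $\ell(ab)$, hence in both closed halfplanes), which your second main case implicitly relies on.
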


\section{Finding two plane Hamiltonian paths}\label{sec:twopaths}

In this section, we consider the following problem: We are given a point set $S$ and two distinct points $a,b\in S$. Our goal is to find two plane Hamiltonian paths $\pi_1$ and $\pi_2$ in $S$. Additionally, we require that neither, one or both of $\pi_1,\pi_2$ contain $\overline{ab}$ and are otherwise edge-disjoint.

We first observe that, if we want both $\pi_1$ and $\pi_2$ to avoid $\overline{ab}$, we can always find suitable paths.  

\begin{restatable}[\restateref{prop:twopathssegnotinc}]{proposition}{twopathssegnotinc}
    \label{prop:twopathssegnotinc}
    Let $S$ be a set of at least 5 points in general position in the plane and $a,b\in S$ such that $a \ne b$. Then there exist two plane Hamiltonian paths $\pi_1, \pi_2$ in $S$ such that $\pi_1 \cap \pi_2 = \emptyset$ and $\overline{ab} \not\in \pi_1 \cup \pi_2$.
    Such paths can be computed in polynomial time.
\end{restatable}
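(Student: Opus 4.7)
My plan is to split into two cases based on whether $\overline{ab}$ is an edge of $\partial\CH(S)$. If it is, then $a$ and $b$ are adjacent vertices of the convex hull, and applying \cref{Lem: KKGVThm2} with $s=a$ and $t=b$ yields, via its ``moreover'' clause, two edge-disjoint plane Hamiltonian paths starting at $a$ and $b$ respectively, neither of which contains the edge $\overline{ab}$. This resolves the hull-edge case immediately.

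The remaining case is $\overline{ab}\notin\partial\CH(S)$; equivalently, the line $\ell(ab)$ strictly separates $S\setminus\{a,b\}$ into two non-empty subsets $S^+$ and $S^-$. Here I would build $\pi_1$ and $\pi_2$ by splicing sub-paths on the two sides of $\ell(ab)$. For each side, I would produce two edge-disjoint plane Hamiltonian $a$-$b$ paths on $S^\pm\cup\{a,b\}$; note that $a$ and $b$ are consecutive vertices of the convex hull of each restricted set, and each such $a$-$b$ path has at least three vertices and therefore cannot use $\overline{ab}$. Splicing one $a$-$b$ path from each side at the common vertex $a$ (and dropping the edges incident to $b$ at the spliced endpoints, so that the vertices previously adjacent to $b$ become the endpoints of the global path) yields a plane Hamiltonian path on $S$ avoiding $\overline{ab}$. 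Carrying out this construction with the two edge-disjoint pairs of sub-paths produces $\pi_1$ and $\pi_2$ that are globally edge-disjoint, since disjointness on each side (and the fact that the spliced edges incident to $a$ come from distinct sub-paths) precludes any shared edge.

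The main obstacle in the second case is establishing the existence of two edge-disjoint plane Hamiltonian $a$-$b$ paths on each side $S^\pm\cup\{a,b\}$, which is a two-fixed-endpoint strengthening of \cref{Lem: KKGVThm2} that is not directly in our toolbox. My plan is to derive it by applying \cref{Lem: KKGVThm2} with $s=a,t=b$ on the restricted set (which avoids the edge $\overline{ab}$ by the moreover clause) and then performing local swaps at the free endpoints to convert each of the two resulting paths into an $a$-$b$ path while preserving planarity and edge-disjointness. Degenerate configurations where one of $|S^\pm|$ is too small for \cref{Lem: KKGVThm2} to apply to the restricted set (i.e., when the restricted set has fewer than five points) have to be handled ad hoc, either by invoking \cref{Lem: S=4} or by a direct construction that uses one or more bridges across $\ell(ab)$ and abandons the side-splicing scheme entirely. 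Since all sub-constructions invoked (those of \cref{Lem: KKGVThm2,lem:pathTwoEndpoints,Lem: S=4}) run in polynomial time and only finitely many local swaps are performed, the overall procedure runs in polynomial time as claimed.
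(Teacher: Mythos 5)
There is a genuine gap in your Case~2, and it is fatal rather than a matter of missing details. Your splicing scheme rests on the sub-claim that each side $S^{\pm}\cup\{a,b\}$ admits \emph{two edge-disjoint} plane Hamiltonian $a$--$b$ paths. This is false in general: whenever $S^{+}\cup\{a,b\}$ is in convex position (with $a,b$ necessarily consecutive on its hull, since all of $S^{+}$ lies strictly on one side of $\ell(ab)$), there is exactly \emph{one} plane Hamiltonian $a$--$b$ path on that set, namely the one running along the hull --- this is precisely the observation underlying \cref{lem:convexobstacle}. Such configurations occur for arbitrarily large $|S^{+}|$ (all points of $S^{+}$ on a convex arc above $\overline{ab}$), so this is not confined to the ``fewer than five points'' degeneracies you propose to handle ad hoc, and no amount of ``local swaps at the free endpoints'' can produce a second path that does not exist. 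Fixing both endpoints of two edge-disjoint plane Hamiltonian paths is exactly the strengthening that the paper explicitly lists as an open problem in \cref{sec:open}, so deriving it as a lemma here is not a safe step. (A smaller slip: as literally described, dropping \emph{both} edges incident to $b$ at the splice leaves $b$ off the path; you must keep one of them, which is easily repaired but worth noting.) Your Case~1 is fine, and in fact \cref{Lem: KKGVThm2} with $s=a$, $t=b$ already settles every instance with $a,b\in\partial\CH(S)$, not only the hull-edge case; the hard residue is when one of $a,b$ is interior.

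The paper sidesteps all of this with a different argument: for $n\ge 7$ it takes the three pairwise edge-disjoint plane Hamiltonian paths guaranteed by~\cite{KindermannKLV23} and discards the at most one of them containing $\overline{ab}$; for $n\in\{5,6\}$ it reduces, via the crossing-dominance order of Pilz and Welzl, to a finite check on four maximal order types. If you want to salvage a self-contained construction along your lines, you would need a replacement for the two-fixed-endpoint sub-claim --- for instance, using bridges over $\ell(ab)$ so that the two global paths cross between the halfplanes at different pairs of points, rather than both being forced through $a$ and $b$.
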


We remark that, for $n\ge 7$, the result of \cref{prop:twopathssegnotinc} follows from the main result of~\cite{KindermannKLV23}. For $n\in \{5,6\}$, the number of cases that need to be checked can be reduced using the framework of Pilz and Welzl~\cite{Pilz2017}; see \cref{app:proposition} for details.

The remaining two situations require a more involved treatment.

\subsection{Two edge-disjoint paths with one of them including a segment}
\label{sec:onePathWithEdgeOneWithout}

We consider the problem of finding two edge-disjoint plane Hamiltonian paths on a point set $S$, with the constraint that one of them needs to include a segment~$\overline{ab}$, where $a,b\in S$ are distinct. We show that this is always possible.

To prove \cref{Thm:OneEdgeOnePath},
we build on \cref{lem:pathTwoEndpoints,lem:pathGivenSegment}. 
In addition to these, we need the following well-known result by Abellanas, Garc{\'{\i}}a-L{\'{o}}pez, Hern{\'{a}}ndez-Pe{\~{n}}alver, Noy, and Ramos~\cite{AbellanasGHNR96,AbellanasGHNR99},
which can also be deduced from the earlier results by Hershberger and Suri~\cite{DBLP:conf/swat/HershbergerS90,HershbergerS92}. It was used to show the existence of two and three plane Hamiltonian paths in~\cite{AichholzerHKKLPSW17}
and~\cite{KindermannKLV23}.

\begin{restatable}[\cite{AbellanasGHNR99} \restateref{lem:pathAlternatingAB}]{lemma}{pathAlternatingAB}
    \label{lem:pathAlternatingAB}
    Let $S_{\mathrm{down}}$ and $S_{\mathrm{up}}$ be two sets of points in general position such that $|S_{\mathrm{down}}| \le |S_{\mathrm{up}}| \le |S_{\mathrm{down}}| + 1$
    and there is a line that separates $S_{\mathrm{down}}$ from $S_{\mathrm{up}}$. Further, let $z \in S_{\mathrm{up}}$ be any vertex (of two possible candidates) on $\partial \CH(S_{\mathrm{down}} \cup S_{\mathrm{up}})$ that
    is neighboring a vertex from $S_{\mathrm{down}}$ on $\partial \CH(S_{\mathrm{down}} \cup S_{\mathrm{up}})$.
    Then, there exists a plane Hamiltonian path $\pi$ on $S_{\mathrm{down}} \cup S_{\mathrm{up}}$ such that $\pi$ starts in~$z$ and
    every edge in $\pi$ has one endpoint in~$S_{\mathrm{down}}$ and one endpoint in~$S_{\mathrm{up}}$.
    Such a path can be found in $O(n \log n)$ time, where $n = |S_{\mathrm{down}}| + |S_{\mathrm{up}}|$.
\end{restatable}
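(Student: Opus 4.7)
My plan is to prove \cref{lem:pathAlternatingAB} by induction on $n = |S_{\mathrm{down}}| + |S_{\mathrm{up}}|$, handling the small base cases ($n \le 3$) directly. For the inductive step, starting from $z \in S_{\mathrm{up}}$, the next vertex of the alternating path must lie in $S_{\mathrm{down}}$; I would choose a specific $z' \in S_{\mathrm{down}}$ so that the line $\ell(zz')$ splits the remaining points into two open halfplanes $H^+$ and $H^-$ in which both the size balance and the hull-adjacency condition required for the recursion are preserved. With such a $z'$ in hand, I would invoke the induction hypothesis on $(H^+ \cap (S_{\mathrm{down}} \cup S_{\mathrm{up}})) \cup \{z'\}$ and on $(H^- \cap (S_{\mathrm{down}} \cup S_{\mathrm{up}})) \cup \{z'\}$, both with $z'$ as the prescribed start (the roles of $S_{\mathrm{down}}$ and $S_{\mathrm{up}}$ swapping, since $z' \in S_{\mathrm{down}}$). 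The two resulting paths would be stitched at $z'$ by reversing one of them, and finally $z$ would be prepended via the edge $\overline{zz'}$.

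The existence of a suitable $z'$ should follow from a discrete intermediate-value argument applied to a radial sweep around $z$. As a ray rotates through $S_{\mathrm{down}}$ in angular order, the counts of $S_{\mathrm{up}}$- and $S_{\mathrm{down}}$-points on each side change by one at each event. Combined with the hypothesis $|S_{\mathrm{down}}| \le |S_{\mathrm{up}}| \le |S_{\mathrm{down}}| + 1$ and the fact that $S_{\mathrm{down}}$ and $S_{\mathrm{up}}$ are linearly separated, one can locate a $z'$ at which both halves inherit the analogous size constraint. Since $z'$ lies on the boundary line $\ell(zz')$, it automatically lies on $\partial\CH$ of each half; a refinement of the choice (e.g.\ picking the angular extreme among valid candidates) should ensure that on each side $z'$ is adjacent along $\partial\CH$ to a vertex of the set playing the role of $S_{\mathrm{down}}$ there, which is what the recursive call needs.

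Planarity of the output is then immediate: edges in opposite halves are separated by the cutting line $\ell(zz')$ and cannot cross, while within each half planarity follows from the induction hypothesis. The alternation property is preserved because $\overline{zz'}$ has one endpoint in each color class and both recursive paths alternate starting from $z' \in S_{\mathrm{down}}$. For the algorithmic bound, a single radial sort around $z$ and a suitable amortization across recursion levels (in particular, reusing the sorted order rather than re-sorting) should yield the claimed $O(n \log n)$ running time.

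The main obstacle I foresee is the simultaneous satisfaction of all conditions when choosing $z'$: size balance in both halves, the hull-adjacency required for each recursive call, and---subtly---compatibility with the separating-line hypothesis on each side. The linear separability of $S_{\mathrm{down}}$ and $S_{\mathrm{up}}$ is crucial here because it constrains how the two count functions can change during the angular sweep, and I expect the most delicate part of the write-up to be a careful case analysis distinguishing $|S_{\mathrm{up}}| = |S_{\mathrm{down}}|$ from $|S_{\mathrm{up}}| = |S_{\mathrm{down}}| + 1$ to pin down the correct $z'$.
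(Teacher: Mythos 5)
There is a structural flaw in your inductive step that breaks the argument before the delicate choice of $z'$ even comes into play. You invoke the induction hypothesis on $(H^+ \cap (S_{\mathrm{down}} \cup S_{\mathrm{up}})) \cup \{z'\}$ and on $(H^- \cap (S_{\mathrm{down}} \cup S_{\mathrm{up}})) \cup \{z'\}$, \emph{both} with $z'$ as the prescribed start, and then stitch the two paths at $z'$. After this stitching, $z'$ is an \emph{internal} vertex of the combined path, already of degree~$2$; prepending $z$ via the edge $\overline{zz'}$ would give $z'$ degree~$3$, so the result is not a path at all, let alone one starting at~$z$. The point $z$ lies on the cut line $\ell(zz')$ and belongs to neither half, so it can only be attached at one of the two far endpoints of the stitched path --- but then the attaching edge is not $\overline{zz'}$ and there is no reason it should be crossing-free or alternating. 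Fixing this would force a genuinely different recursion (e.g., the path after $z,z'$ must traverse one half entirely and then re-enter the other, which your planarity argument by separation along $\ell(zz')$ cannot handle). A secondary concern is that the simultaneous satisfaction of the size-balance constraint in both halves ($d^+ , d^- \in \{0,1\}$ with $d^+ + d^- = |S_{\mathrm{up}}|-|S_{\mathrm{down}}|$) together with the hull-adjacency condition on each side is exactly the part you defer with ``should follow,'' and it is not obviously achievable by a single rotating-ray argument.

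For comparison, the paper's proof avoids any binary split: it recalls the iterative construction of Abellanas et al., which repeatedly looks at the two edges of $\partial\CH$ of the current point set that cross the separating line, takes the leftmost one, outputs its endpoint on the side whose turn it is, deletes that point, and repeats; $z$ is simply one of the two admissible first vertices. The $O(n\log n)$ bound then comes from a dynamic convex hull structure supporting deletions and tangent queries, not from amortizing a radial sort across recursion levels.
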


We can now prove \cref{Thm:OneEdgeOnePath}. 

\begin{restatable}[\restateref{Thm:OneEdgeOnePath}]{theorem}{OneEdgeOnePath}
    \label{Thm:OneEdgeOnePath}
    Let $S$ be a set of at least 4 points in general position in the plane and $a,b\in S$ such that $a \ne b$.
    Then, there exist two plane Hamiltonian paths $\pi_1, \pi_2$ in $S$ such that $\pi_1 \cap \pi_2 = \emptyset$ and $\overline{ab} \in \pi_1$.
    Such paths can be computed in $O(n \log n)$ time, where $n=|S|$.
\end{restatable}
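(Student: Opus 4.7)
My strategy is to split on whether $\overline{ab}$ is an edge of $\partial\CH(S)$.

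\emph{Case 1: $\overline{ab}$ is a hull edge.} Let $d$ denote the neighbor of $b$ on $\partial\CH(S)$ distinct from $a$. After deleting $b$, both $a$ and $d$ still lie on $\partial\CH(S\setminus\{b\})$. Provided $|S\setminus\{b\}|\ge 4$, I apply \cref{Lem: KKGVThm2} (or \cref{Lem: S=4} when $|S\setminus\{b\}|=4$) to $S\setminus\{b\}$ with the two starting vertices $a$ and $d$, obtaining two edge-disjoint plane Hamiltonian paths $\sigma_a,\sigma_d$ of $S\setminus\{b\}$ starting at $a$ and $d$, respectively. Prepending $b$ via the hull edges $\overline{ab}$ and $\overline{bd}$ yields $\pi_1=b\cdot\sigma_a$ and $\pi_2=b\cdot\sigma_d$. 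Planarity is preserved because both new edges lie on $\partial\CH(S)$ and hence cannot cross any edge of $\sigma_a$ or $\sigma_d$; edge-disjointness is immediate, since $\overline{ab}\ne\overline{bd}$, neither of these new edges can appear in a path on $S\setminus\{b\}$, and $\sigma_a\cap\sigma_d=\emptyset$. The residual case $|S|=4$ has only two combinatorial types and is verified directly.

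\emph{Case 2: $\overline{ab}$ is not a hull edge.} Let $S^+$ and $S^-$ be the (non-empty) sets of points of $S\setminus\{a,b\}$ strictly above and below $\ell(ab)$. I construct a plane Hamiltonian path $\tau^+$ of $S^+\cup\{a\}$ ending at $a$ and a plane Hamiltonian path $\tau^-$ of $S^-\cup\{b\}$ starting at $b$ using \cref{lem:pathTwoEndpoints}, and set $\pi_1=\tau^+\cdot\overline{ab}\cdot\tau^-$; planarity is automatic since $\tau^+$ and $\tau^-$ live in disjoint open half-planes. For $\pi_2$ (which must not use $\overline{ab}$) I distinguish two sub-cases. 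If there is a bridge $\overline{cd}$ over $\ell(ab)$ with $c\in S^+$ and $d\in S^-$ (the generic situation, arising in particular whenever at least one of $a,b$ is not on $\partial\CH(S)$, so that $\overline{cd}$ can be taken to be an edge of $\partial\CH(S)$), I form $\pi_2=\rho^+\cdot\overline{cd}\cdot\rho^-$, where $\rho^\pm$ are plane Hamiltonian paths on $S^+\cup\{a\}$ and $S^-\cup\{b\}$, edge-disjoint from $\tau^\pm$ and produced by a second application of \cref{Lem: KKGVThm2} on each side with suitable hull starting vertices. The remaining sub-case is when $a,b$ are both on $\partial\CH(S)$ and $\overline{ab}$ is a hull diagonal: then no bridges exist (every segment between $S^+$ and $S^-$ is forced to cross $\overline{ab}$), so $\pi_2$ must cross $\ell(ab)$ via edges incident to $a$ or $b$. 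I build $\pi_2$ as the concatenation of a plane Hamiltonian path of $S^+\cup\{a,b\}$ from $a$ to $b$ and a plane Hamiltonian path of $S^-\cup\{b\}$ starting at $b$, each edge-disjoint from the corresponding piece of $\pi_1$; the required pairs come from \cref{Lem: KKGVThm2} applied with $s=t$ to obtain two edge-disjoint Hamiltonian paths sharing a common endpoint on each side.

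\emph{Main obstacle.} The hard part is Case 2, and especially the sub-case where $a$ and $b$ are both on $\partial\CH(S)$ and no bridge is available: I have to pick the crossing edges through $a,b$ and the starting vertices for \cref{Lem: KKGVThm2} so that the appended edges do not cross any interior edge of the sub-paths and so that the resulting $\pi_2$ is genuinely edge-disjoint from $\pi_1$. Small configurations in which $|S^\pm|$ is below the threshold of \cref{Lem: KKGVThm2} must also be treated separately, by invoking \cref{Lem: S=4}, by exploiting the symmetry between the two sides, or by direct construction.
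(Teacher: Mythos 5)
There is a genuine structural gap in your Case~2. You require \emph{both} $\pi_1$ and $\pi_2$ to decompose as a spanning path of $S^+\cup\{a\}$, a single crossing edge, and a spanning path of $S^-\cup\{b\}$ (with minor variations in the no-bridge sub-case). This forces $S^+\cup\{a\}$ and $S^-\cup\{b\}$ to each carry two edge-disjoint plane Hamiltonian paths, which is impossible whenever $|S^+|\le 2$ or $|S^-|\le 2$: a set of at most three points simply does not have enough edges. Crucially, this is not a finite list of ``small configurations'' to be checked separately --- for arbitrarily large $n$, the line $\ell(ab)$ can have a single point of $S\setminus\{a,b\}$ on one side (e.g.\ $a,b$ hull vertices with $\overline{ab}$ a ``short'' diagonal, or $a$ interior but close to the hull boundary). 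In that situation $\tau^+$ is forced to be the unique edge from that point to $a$, and no edge-disjoint $\rho^+$ on the same vertex set exists, so your template for $\pi_2$ cannot be instantiated at all; the second path must cross $\ell(ab)$ in a structurally different way. This is precisely why the paper does \emph{not} split along $\ell(ab)$: it splits $S$ into two \emph{balanced} halves $S_{\mathrm{down}},S_{\mathrm{up}}$ by $y$-coordinate and uses the alternating-path lemma of Abellanas et al.\ (\cref{lem:pathAlternatingAB}) to obtain one path using only edges between the halves, reserving all within-half edges for the other path. A related, smaller imprecision: ``a second application of \cref{Lem: KKGVThm2}'' does not yield a path edge-disjoint from one obtained in a previous application; the lemma must be invoked once to produce both $\tau^+$ and $\rho^+$ simultaneously, which in turn requires $a$ and $c$ to both lie on $\partial\CH(S^+\cup\{a\})$ and $|S^+\cup\{a\}|\ge 5$.

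Your Case~1 also has a hole at $|S|=5$. After deleting $b$, the hull of the remaining four points may acquire a new vertex between $a$ and $d$, so $a$ and $d$ can end up \emph{non-consecutive} on a convex quadrilateral. In that situation any decomposition of $K_4$ into two plane Hamiltonian paths pairs the endpoints as the two ``opposite'' pairs, so two edge-disjoint plane Hamiltonian paths with one starting at $a$ and the other at $d$ do not exist; \cref{Lem: S=4} explicitly requires $s,t$ consecutive and cannot be applied. The theorem is still true for such point sets, but your construction does not produce the paths. For $|S\setminus\{b\}|\ge 5$ your Case~1 is fine, since \cref{Lem: KKGVThm2} only needs $a,d\in\partial\CH(S\setminus\{b\})$.
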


\begin{proof}
    Let us first discuss the small values of~$n$.
    To be able to have two edge-disjoint Hamiltonian paths, we need at least four points (otherwise, there are not enough edges between the points to host the edges of the two paths).
    For $n = 4$, all six possible edges between the points are
    needed to host two edge-disjoint Hamiltonian paths.
    The four points can have three or four points in the convex hull.
    In both cases, we can find two edge-disjoint plane Hamiltonian paths, as illustrated in \cref{fig:twopaths4vertices}.
    From now on, we assume that $n \ge 5$.
    
    We first rotate the point set~$S$ so that $\overline{ab}$ is horizontal.
    We then sort in $O(n \log n)$ time the points in~$S$
    in increasing order by y-coordinate (pairs of points with the same y-coordinate are
    sorted arbitrarily).
    Let $S_{\mathrm{down}}$ be the set of the first $\lfloor n/2 \rfloor$ points (the lower points)
    and let $S_{\mathrm{up}} = S \setminus S_{\mathrm{down}}$
    (the upper points).
    Clearly, $S_{\mathrm{down}}$ and $S_{\mathrm{up}}$ can be separated by a line.
    We refer to $S_{\mathrm{down}}$ and $S_{\mathrm{up}}$ as \emph{partitions} (of $S$). Since $n \ge 5$, we have that $|S_{\mathrm{down}}| \ge 2$, and $|S_{\mathrm{up}}| \ge 3$.
    
    By \cref{lem:pathAlternatingAB}, there is a plane Hamiltonian path $\pi$ on~$S$ such that every edge of $\pi$
    has an endpoint in~$S_{\mathrm{down}}$ and an endpoint in~$S_{\mathrm{up}}$,
    and $\pi$ starts in a point~$z\in S_{\mathrm{up}}$ contained in~$\partial \CH(S_{\mathrm{down}} \cup S_{\mathrm{up}})$.
    Such a path can be found in $O(n \log n)$ time.
    
    We say that a point $p \in S_{\mathrm{down}}$ \emph{sees} a point~$q \in S_{\mathrm{up}}$
    if and only if $\overline{pq} \cap \CH(S_{\mathrm{down}}) = p$ and $\overline{pq} \cap \CH(S_{\mathrm{up}}) = q$.
    If $p$ sees $q$, then $p$ lies on $\partial \CH(S_{\mathrm{down}})$,  $q$ lies on $\partial \CH(S_{\mathrm{up}})$, and $\overline{pq}$ intersects  $\CH(S_{\mathrm{down}})$ and $\CH(S_{\mathrm{up}})$ at a single point.
    We use the symmetric definition if $p \in S_{\mathrm{up}}$ and $q \in S_{\mathrm{down}}$.
    We consider two cases:
    
    \begin{description}
        \item[Case A: $a$ and $b$ are in distinct partitions.]
        We distinguish whether $\overline{ab} \in \pi$ or not.

        \item[Case A1: $\overline{ab} \notin \pi$.]
        Let $\pi_2 = \pi$. Notice that, since $\overline{ab}$ is horizontal, one of $a,b$ is the topmost point of~$S_{\mathrm{down}}$ and the other is the bottommost point of~$S_{\mathrm{up}}$.
        Let $\pi_1^{\mathrm{down}}$ and $\pi_1^{\mathrm{up}}$ be the paths traversing the points
        of~$S_{\mathrm{down}}$ and $S_{\mathrm{up}}$, respectively, such that vertices are traversed by increasing y-coordinate.
        Let $\pi_1$ be the concatenation of $\pi_1^{\mathrm{down}}$, $\overline{ab}$, and $\pi_1^{\mathrm{up}}$.
        Clearly, $\pi_1$ and $\pi_2$ are two plane Hamiltonian paths 
        such that $\overline{ab} \in \pi_1$ and $\pi_1 \cap \pi_2 = \emptyset$.
        Additionally, constructing $\pi_1$ takes $O(n)$ time after $S$ is sorted.

        \item[Case A2: $\overline{ab} \in \pi$.]
        Let $\pi_1 = \pi$.
        For the second path, we again aim to find two plane Hamiltonian paths
        on $S_{\mathrm{down}}$ and $S_{\mathrm{up}}$.
        The challenging part is to find a suitable ``bridge'' that connects the two paths
        and is not contained in $\pi_1$.
        To this end, we analyze the set of points that~$z$ (the starting point of $\pi_1$) sees.
        The conditions of the following cases can be checked in $O(n \log n)$ time.
        
        \item[Case A2.1: $z$ sees two points $x, y$ of $S_{\mathrm{down}}$.]
        At most one of $\{\overline{xz}, \overline{yz}\}$ can lie in~$\pi_1$; without loss of generality, let this be $\overline{xz}$.
        Let $\pi_2^{\mathrm{down}}$ and $\pi_2^{\mathrm{up}}$ be two plane Hamiltonian paths on $S_{\mathrm{down}}$ and $S_{\mathrm{up}}$
        starting in~$y$ and~$z$, respectively.
        Such paths exist thanks to \cref{lem:pathTwoEndpoints},
        and can be found in $O(n \log n)$ time.
        Let $\pi_2$ be the concatenation of $\pi_2^{\mathrm{down}}$, $\overline{yz}$, and $\pi_2^{\mathrm{up}}$.
        Clearly, $\pi_1$ and $\pi_2$ are two plane Hamiltonian paths 
        such that $\overline{ab} \in \pi_1$ and $\pi_1 \cap \pi_2 = \emptyset$.
        
        \begin{figure}[t]
            \centering
            \includegraphics{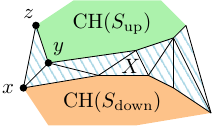}
            \caption[t]{In Case A2.2, $x$ sees at
                least two points of~$S_{\mathrm{up}}$.}
            \label{fig:x-sees-two-points}
        \end{figure}
        
        \item[Case A2.2: $z$ sees one point $x$ of~$S_{\mathrm{down}}$ and $|S_{\mathrm{down}}| = |S_{\mathrm{up}}|$.]
        We show that, if $z$ sees only one point~$x$ of~$S_{\mathrm{down}}$,
        then $x$ sees at least two points of~$S_{\mathrm{up}}$ (see \cref{fig:x-sees-two-points}):
        Let $X$ be the closure of $\CH(S) \setminus (\CH(S_{\mathrm{down}}) \cup \CH(S_{\mathrm{up}}))$.
        Clearly, $X$ is a polygonal region with vertices $x$ and $z$, and $z$ is visible to only two vertices of~$X$: $x$ and one of $z$'s neighbors on $\partial \CH(S_{\mathrm{up}})$, which we call~$y$.
        Since $X$ can be triangulated, $x$ sees $y$.
        We now perform a reflection of the point set $S$ through the line $\ell(ab)$.
        Taking the reverse order of the points sorted by y-coordinate and splitting them into two halves
        exactly switches the sets $S_{\mathrm{down}}$ and $S_{\mathrm{up}}$.
        Since $|S_{\mathrm{down}}| = |S_{\mathrm{up}}|$,
        the new point set fulfills the hypothesis of our case distinction.
        Thus, $x$ takes the role of $z$ and we are in Case~A2.1.
        
        \item[Case A2.3: $z$ sees one point of~$S_{\mathrm{down}}$ and $|S_{\mathrm{down}}| + 1 = |S_{\mathrm{up}}|$.]
        We perform a reflection of $S$ through the line $\ell(ab)$.
        Afterwards, we partition the new point set into $S_{\mathrm{up}}$ and $S_{\mathrm{down}}$ as described above. Since $|S_{\mathrm{up}}| > |S_{\mathrm{down}}|$, $a$ and $b$ end up in $S_{\mathrm{up}}$ and we are in Case~B.
    
        \item[Case B: $a$ and $b$ are in the same partition.]
        In this case, the main idea is similar as before: to use $\pi$
        as $\pi_2$ (since $\pi$ does not contain $\overline{ab}$),
        to find spanning paths of $S_{\mathrm{up}}$ and $S_{\mathrm{down}}$ including $\overline{ab}$,
        and to connect them as $\pi_1$.
        If the connection edge is easy to find,
        we are done immediately.
        This is the case, e.g., if the topmost point~$x$ of~$S_{\mathrm{down}}$ sees
        a point $s$ of $S_{\mathrm{up}}$ such that $\overline{xs} \notin \pi$
        and neither $x$ nor $s$ are identical to $a$ or $b$.
        However, there are some more subtle cases, e.g., if the candidates for the connection edges are all contained in $\pi$
        or if points like $x$ and $s$ are identical to $a$ or $b$.
        We exhaustively investigate all these special cases and their treatment in \cref{app:onePathWithEdgeOneWithout}.
        There, we show that we can always find in $O(n \log n)$ the desired paths $\pi_1$ and~$\pi_2$. \qed
    \end{description}
\end{proof}

\subsection{Two paths sharing a single segment}
\label{sec:twoPathsWithEdge}

Lastly, we consider the problem of finding two plane Hamiltonian paths on a point set $S$ that overlap on a prescribed segment $\overline{ab}$ and are otherwise disjoint.
Unlike the previous case, such paths do not always exist.
We fully characterize the triples $\langle S,a,b \rangle$ for which such paths exist.
We distinguish three cases, based on whether both, one or none of $a,b$ are vertices of $\partial\CH(S)$.

\begin{restatable}{lemma}{diagonalcase}
    \label{lem:diagonalcase}
    Let $S$ be a set of $n$ points in the plane in general position, and let $a, b \in S$ such that $a, b \in \partial \CH(S)$. There exist two plane Hamiltonian paths $\pi_1, \pi_2$ on $S$ such that $\pi_1 \cap \pi_2 = \overline{ab}$ if and only if one of the following holds:
    \begin{itemize}
        \item one of the two open halfplanes bounded by $\ell(ab)$ contains 2 or 3 points of~$S$, and the other open halfplane contains no points of~$S$; or
        \item both open halfplanes bounded by $\ell(ab)$ contain at most 1 or at least 4 points~of~$S$.
    \end{itemize}
\end{restatable}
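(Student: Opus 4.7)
The plan is to prove the equivalence in two directions, both based on a single geometric observation. Because $a,b\in\partial\CH(S)$, the intersection $\CH(S)\cap\ell(ab)$ equals the segment $\overline{ab}$, so any segment from the open halfplane above $\ell(ab)$ to the open halfplane below crosses $\overline{ab}$. In any plane Hamiltonian path $\pi$ containing $\overline{ab}$, no other edge of $\pi$ can cross $\overline{ab}$, so removing $\overline{ab}$ splits $\pi$ into two subpaths $A$ (ending at $a$) and $B$ (starting at $b$) whose vertices other than $a,b$ each lie entirely in a single open halfplane. Let $X$ and $Y$ be the two open halfplanes of $\ell(ab)$, with $|X|=k_1$ and $|Y|=k_2$.

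For the necessity direction, suppose $\pi_1,\pi_2$ exist while the condition fails, so one halfplane, say $X$, satisfies $|X|\in\{2,3\}$ and $|Y|\ge 1$. Since $Y\neq\emptyset$, neither subpath of $\pi_i$ can be trivial, and thus in each $\pi_i$ one subpath spans $X$ together with $a$ or $b$ as endpoint, and the other spans $Y$ together with the remaining one of $\{a,b\}$. Regarded as Hamiltonian paths on $X\cup\{a\}$ or $X\cup\{b\}$, each of the two $X$-side subpaths uses $|X|-1$ edges between points of $X$, giving $2(|X|-1)$ such edges across $\pi_1,\pi_2$. As only $\binom{|X|}{2}$ edges between points of $X$ exist, and $2(|X|-1)>\binom{|X|}{2}$ for $|X|\in\{2,3\}$ (that is, $2>1$ and $4>3$), the two subpaths must share an edge between points of $X$, contradicting $\pi_1\cap\pi_2=\overline{ab}$.

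For the sufficiency direction, I split into the hull-edge case (say $k_1=0$, without loss of generality) and the diagonal case ($k_1,k_2\ge 1$). When $k_1=0$ and $k_2\le 3$, only finitely many configurations occur and $\pi_1,\pi_2$ are constructed explicitly. When $k_1=0$ and $k_2\ge 4$, I apply \cref{Lem: KKGVThm2} to $S$ with $s=a,\,t=b$ to obtain edge-disjoint plane Hamiltonian paths $R_a,R_b$ starting at $a$ and $b$, neither containing $\overline{ab}$; I then re-route each by deleting the edge of $R_a$ incident to $b$ on the $a$-side and inserting $\overline{ab}$, and symmetrically for $R_b$. Since $\overline{ab}$ lies on $\partial\CH(S)$ the inserted edge crosses no other edge, and $R_a\cap R_b=\emptyset$ yields $\pi_1\cap\pi_2=\overline{ab}$. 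In the diagonal case with both $k_i\in\{1\}\cup\{\ge 4\}$, I build $\pi_1,\pi_2$ by gluing halves on each side: on a side with $k_i\ge 4$ I apply \cref{Lem: KKGVThm2} to $X\cup\{a\}$ (resp.\ $Y\cup\{b\}$) with $s=t=a$ (resp.\ $s=t=b$) to obtain two edge-disjoint plane Hamiltonian paths starting at the shared endpoint; on a side with $k_i=1$ the two possible edges $\overline{xa}$ and $\overline{xb}$ are assigned, one to each path. Joining the appropriate halves through $\overline{ab}$ yields $\pi_1,\pi_2$, and planarity is automatic because the two halves of each $\pi_i$ lie in disjoint closed halfplanes of $\ell(ab)$.

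The main obstacle I anticipate is the re-routing step in the hull-edge case: one must verify that deleting the correct edge of $R_a$ and inserting $\overline{ab}$ preserves both planarity and Hamiltonicity, which relies crucially on $\overline{ab}$ being a hull edge so that its insertion creates no crossing. A secondary subtlety is the diagonal case with $k_1=1$ and $k_2\ge 4$, where the $Y$-side paths needed for $\pi_1$ and $\pi_2$ have endpoints $b$ and $a$ respectively; here \cref{Lem: KKGVThm2} must be invoked on $Y\cup\{a,b\}$ and the unneeded endpoint excised from each output, with care taken to maintain edge-disjointness of the $Y$-side traversals.
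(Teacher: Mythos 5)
Your necessity argument is correct and takes a genuinely different route from the paper's: instead of the paper's ad-hoc degree/crossing analysis for the excluded cases (its Cases F and G), you observe that $\CH(S)\cap\ell(ab)=\overline{ab}$ forces each of $\pi_1,\pi_2$ to split along $\overline{ab}$ into an $X$-side and a $Y$-side subpath, and then count: the two $X$-side subpaths would need $2(|X|-1)$ distinct edges inside $X$, exceeding $\binom{|X|}{2}$ for $|X|\in\{2,3\}$. This is a cleaner, unified proof of impossibility. Your re-routing construction in the hull-edge case with at least four points on one side (take $R_a,R_b$ from \cref{Lem: KKGVThm2} with $s=a,t=b$, delete the edge entering $b$ from the $a$-side, insert the crossing-free hull edge $\overline{ab}$) is also valid and different from the paper, which instead applies \cref{Lem: KKGVThm2} or \cref{Lem: S=4} to $S_1$ at two distinct visible hull vertices. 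Your diagonal cases $(1,1)$ and $(\ge 4,\ge 4)$ are fine as well; in the latter, starting both half-paths at the shared hull point $a$ (resp.\ $b$) via \cref{Lem: KKGVThm2} with $s=t$ sidesteps the paper's visibility argument.

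There is, however, a genuine gap in the diagonal sub-case $|X|=1$, $|Y|\ge 4$ (the paper's Case D), which you flag but do not resolve. Since the single point $x\in X$ must be joined to $a$ in one path and to $b$ in the other, you need two edge-disjoint plane Hamiltonian paths, one on $Y\cup\{b\}$ starting at $b$ and one on $Y\cup\{a\}$ starting at $a$ --- paths on \emph{different} vertex sets. Your proposed fix, invoking \cref{Lem: KKGVThm2} on $Y\cup\{a,b\}$ with $s=a$, $t=b$ and ``excising the unneeded endpoint,'' does not work: the lemma prescribes only the \emph{starting} vertex of each path, so $b$ is in general an interior vertex of the path that starts at $a$ (and vice versa), and deleting an interior vertex disconnects the path; reconnecting the two pieces would require a new edge whose planarity and disjointness from the other path you cannot control. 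A different idea is needed here --- the paper routes both paths into the large side $Y$ through two \emph{distinct} vertices of $\partial\CH(Y)$, one visible from $a$ and one visible from $b$, obtained by applying \cref{Lem: KKGVThm2} or \cref{Lem: S=4} to $Y$ alone, and then closes up through $x$, $a$, $b$ on the small side.
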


\begin{proof}
As $a,b$ are vertices of $\partial \CH(S)$, they split $S\setminus \{a,b\}$ into
sets $S_1,S_2$; see \cref{shared-edge-1}.
Assume that $\overline{ab}$ is horizontal and $S_1$ lies above $\overline{ab}$.
We get these cases:

\begin{description}
    \item[Case A: $|S_2|=0$, $|S_1|= 1.$]
    Let $x$ be the point in $S_1.$ We define the two paths as $\pi_1 = x,a,b$ and $\pi_2 = x,b,a$. 
    
    \item[Case B: $|S_2|=0$, $|S_1|> 1.$]
    If $|S_1| = 2$ or $3$, we can construct the paths as in \cref{shared-edge-4} and \cref{shared-edge-5}.
    If $|S_1|\geq 4$, note that both $a$ and $b$ see at least $2$ vertices of $\partial\CH (S_1)$. Let $x$ be a vertex visible by $a$, and $y$ a vertex visible by $b$, with $y\neq x$. Note that, if $|S_1|=4$, we can choose $x,y$ to be two consecutive vertices of the convex hull. Then let $\pi_1', \pi_2'$ be two plane Hamiltonian paths on $S_1$ starting at $x$ and $y$, respectively, as guaranteed by \cref{Lem: KKGVThm2,Lem: S=4}. We can obtain our desired paths $\pi_1$ and $\pi_2$ by extending $\pi_1',\pi_2'$ as in \cref{shared-edge-6}.
    
    \item[Case C:  $|S_2|=1$, $|S_1|= 1.$]
    We can construct the paths as in  \cref{shared-edge-2}.

    \item[Case D: $|S_2|=1$, $|S_1| \ge 4.$]
    We find $\pi_1',\pi_2'$ as in Case C, and extend them as in \cref{shared-edge-3}.
    
    \item[Case E: $|S_2|\ge 4$, $|S_1|\ge 4$.]
    Let $x_1$ and $y_1$ (with $x_1\neq y_1$) be two vertices of $\partial\CH(S_1)$ visible from $a$ and $b$, respectively, and $x_2$ and $y_2$ be two vertices of $\partial \CH(S_2)$ defined analogously. Let $\pi_1^1,\pi_2^1$ be two plane Hamiltonian paths on $S_1$ starting at $x_1$ and $y_1$, respectively, as guaranteed by  \cref{Lem: KKGVThm2,Lem: S=4}. Analogously, let $\pi_1^2, \pi_2^2$ to be two plane Hamiltonian paths on $S_2$ starting at $x_2$ and $y_2$. Finally, we connect $\pi_1^1$ to $\pi_2^2$ and $\pi_2^1$ to $\pi_1^2$ as in \cref{shared-edge-1}, and obtain our desired paths $\pi_1$ and $\pi_2$.

\end{description}

\begin{figure}[t]
    \centering
    \begin{subfigure}[t]{0.3\textwidth}
        \centering
        \includegraphics[page=4]{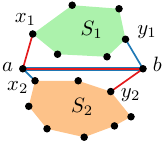}
        \subcaption{Case B, $|S_1| = 2$.} 
        \label{shared-edge-4}
    \end{subfigure}
    \hfill
    \begin{subfigure}[t]{0.3\textwidth}
        \centering
        \includegraphics[page=5]{Figures/shared-edge.pdf}
        \subcaption{Case B, $|S_1| = 3$.} 
        \label{shared-edge-5}
    \end{subfigure}
    \hfill
    \begin{subfigure}[t]{0.3\textwidth}
        \centering
        \includegraphics[page=6]{Figures/shared-edge.pdf}
        \subcaption{Case B, $|S_1| \geq 4$.} 
        \label{shared-edge-6}
    \end{subfigure}
    
    \bigskip
    
    \begin{subfigure}[t]{0.3\textwidth}
        \centering
        \includegraphics[page=2]{Figures/shared-edge.pdf}
        \subcaption{Case C.} 
        \label{shared-edge-2}
    \end{subfigure}
    \hfill
    \begin{subfigure}[t]{0.3\textwidth}
        \centering
        \includegraphics[page=3]{Figures/shared-edge.pdf}
        \subcaption{Case D.} 
        \label{shared-edge-3}
    \end{subfigure}
    \hfill
    \begin{subfigure}[t]{0.3\textwidth}
        \centering
        \includegraphics[page=1]{Figures/shared-edge.pdf}
        \subcaption{Case E.} 
        \label{shared-edge-1}
    \end{subfigure}
    \caption{Construction of $\pi_1$ and $\pi_2$ when both $a$ and $b$ are vertices of $\partial \CH(S)$.}
    \label{Fig:shared-edge}
\end{figure}

In the remaining cases it is impossible to find $\pi_1$ and $\pi_2$.

\begin{description}
    \item[Case F: $|S_2| =2$, $|S_1|\ge 1$.]
    Let $S_2 = \{x,y\}$. Assume that there are two paths $\pi_1,\pi_2$ with the desired properties. Then at most one of them can contain the segment $\overline{xy}$. Assume that $\pi_2$ does not contain this segment. Then $\pi_2$ must contain edges $\overline{ax}$ and $\overline{by}$. However, it is then clear that we cannot continue $\pi_2$ as either $a$ or $b$ would have degree $3$ or the edge $\overline{ab}$ would be crossed in $\pi_2$. Therefore, it is impossible to find such two paths.
    
    \item[Case G: $|S_2| =3$, $|S_1|\ge 1$.]
    Let $S_2 = \{x,y,z\}$. Assume that there are two paths $\pi_1,\pi_2$ with the desired properties. We can assume that $\pi_2$ contains at most one of the segments $\overline{xy},\overline{xz}, \overline{yz}$. Then, by a similar argument as in the previous case, we can see that it is impossible to complete $\pi_2$, contradicting the assumption that two such paths exist. \qed 
\end{description}
\end{proof}

The proofs of \cref{lem:one-sidedcase,lem:two-sidedcase} are moved to \cref{app:twoPathsWithEdge}.

\begin{restatable}[\restateref{lem:one-sidedcase}]{lemma}{onesidedcase}
    \label{lem:one-sidedcase}
    Let $S$ be a set of $n$ points in the plane in general position, and let $a, b \in S$ such that $a \not\in \partial \CH(S)$ and $b \in \partial \CH(S)$. There exist two plane Hamiltonian paths $\pi_1, \pi_2$ on $S$ such that $\pi_1 \cap \pi_2 = \overline{ab}$ if and only if one of the following holds:
    \begin{itemize}
        \item there are at least two bridges over $\ell(ab)$; or
        \item there is only one bridge over $\ell(ab)$ and at least three additional points of $S$ on both sides of $\ell(ab)$.
    \end{itemize}
\end{restatable}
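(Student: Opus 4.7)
The plan is to prove the biconditional by case-analyzing on the number of bridges over $\ell(ab)$. Note that since $a\notin\partial\CH(S)$, every line through $a$ has points of $S$ on both sides, so both open halfplanes bounded by $\ell(ab)$ are nonempty; denote the two sides $S_1,S_2$.

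For the \emph{if} direction, first suppose there are at least two bridges. Pick two distinct bridges $\overline{x_1y_1}$ and $\overline{x_2y_2}$ with $x_i\in S_1$ and $y_i\in S_2$, and use bridge $i$ as the crossing in~$\pi_i$. Build each $\pi_i$ by gluing together a Hamiltonian sub-path on~$S_1$ ending at~$x_i$, the bridge $\overline{x_iy_i}$, a Hamiltonian traversal of~$S_2$, and the edge $\overline{ab}$ inserted at an appropriate location (so that $a$ and $b$ serve as the ``return'' crossing); the required within-side sub-paths with prescribed endpoints are supplied by \cref{lem:pathTwoEndpoints}, and the endpoints are chosen so that the edges of $\pi_1$ and $\pi_2$ at $a$, at $b$, and inside each side are pairwise disjoint. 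In the single-bridge case with at least three additional points on each side, assign the bridge to $\pi_1$ and force $\pi_2$ to cross $\ell(ab)$ only through $\{a,b\}$; the ``three additional'' hypothesis ensures that each side (including the bridge endpoint) contains $\ge 4$ points with enough vertices on $\partial\CH(S_i)$ to apply \cref{Lem: KKGVThm2}, producing on each side two edge-disjoint plane Hamiltonian sub-paths with prescribed starting vertices that, combined with the bridge (for~$\pi_1$) and with $\overline{ab}$ (both), assemble into the desired paths.

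For the \emph{only if} direction, argue the contrapositive. If no bridges exist, every segment between $S_1$ and $S_2$ crosses $\overline{ab}$, so no such segment can appear in a plane Hamiltonian path containing $\overline{ab}$. Thus each $\pi_i$ has the rigid structure ``Hamiltonian path on one side, $a,b$, Hamiltonian path on the other side''; a counting argument on the at most two edges incident to $a$ (respectively $b$) in $\pi_1\cup\pi_2$ (one of which is $\overline{ab}$ in each) shows that two such paths cannot share only $\overline{ab}$. If there is exactly one bridge $\overline{xy}$ but fewer than three additional points on some side, say side 1, then the edges at $a$, at $b$, and the bridge together provide too few distinct crossings of $\ell(ab)$ for the two paths to both be Hamiltonian and to remain edge-disjoint outside $\overline{ab}$: enumerating in which of $\pi_1,\pi_2$ the bridge can appear and which vertices can sit next to $a$ and $b$ in each path, the scarcity on the sparse side forces the two paths to repeat either a side-1 edge, the bridge, or an edge at $a$ or $b$.

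The main obstacle will be the necessity direction in the single-bridge case: one must carefully enumerate the limited combinatorial possibilities for the placement of the bridge and for the neighbors of $a$ and $b$ in $\pi_1$ and $\pi_2$, and show in each scenario that having fewer than three additional vertices on a side forces, by pigeonhole, an edge to appear in both paths beyond $\overline{ab}$. In the sufficiency direction the constructions themselves are routine, but the bookkeeping needed to guarantee that the chosen sub-paths on each side are edge-disjoint and that their concatenations remain plane requires careful endpoint selection.
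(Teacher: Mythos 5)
Your high-level strategy matches the paper's: split $S$ along $\ell(ab)$, build each $\pi_i$ by concatenating side-paths (via \cref{lem:pathTwoEndpoints}, \cref{Lem: KKGVThm2}) through $\overline{ab}$ and a bridge, and rule out the remaining configurations by counting the edges available at $a$, at $b$, and across $\ell(ab)$. But as written the proposal has a genuine gap precisely where the paper spends almost all of its effort: the sufficiency direction when one side of $\ell(ab)$ contains few points. If a side has only $1$, $2$ or $3$ points, \cref{Lem: KKGVThm2} and \cref{Lem: S=4} do not apply, and \cref{lem:pathTwoEndpoints} gives no control over which edges the two side-paths use, so the assertion that ``the endpoints are chosen so that the edges of $\pi_1$ and $\pi_2$ \dots are pairwise disjoint'' is exactly the statement that needs proof. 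Note also that the two bridges you pick may share an endpoint, and that a path containing both $\overline{ab}$ and a bridge necessarily splits one side into \emph{two} runs, so ``a Hamiltonian sub-path on $S_1$ ending at $x_i$'' does not describe the actual structure; handling this forces a partition of a side into two prescribed-endpoint paths, which again fails to follow from the quoted lemmas when the side is small. The paper resolves all of this by an explicit decomposition $S=S_1\cup S_2\cup S_3\cup\{a,b\}$ (with $S_3$ carrying the bridges), a reduction to \cref{lem:diagonalcase} whenever both sides are large enough, and then an exhaustive table of roughly twenty small configurations, each settled by a hand-built pair of paths.

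The necessity direction has the same character: you correctly identify that one must ``enumerate the limited combinatorial possibilities for the placement of the bridge and for the neighbors of $a$ and $b$,'' but you do not carry out the enumeration, and the conclusion is not a routine pigeonhole --- the paper needs three separate ad hoc arguments (sides of sizes $2{+}2$, $2{+}3$, and $3{+}3$ around a single bridge), the last of which argues via the forced vertex order of $\pi_2$ that $\pi_1$ can collect at most $6$ of the $7$ edges it needs. A small additional remark: the ``no bridges'' branch of your contrapositive is vacuous, since $a\not\in\partial\CH(S)$ and $b\in\partial\CH(S)$ force the hull edge crossing $\ell(ab)$ beyond $a$ to be a bridge, so at least one bridge always exists. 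In summary, the proposal is a correct plan that locates the difficulties accurately, but both directions are left at the level of announced case analyses, and the small-side constructions that constitute the actual content of the proof are missing.
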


\begin{restatable}[\restateref{lem:two-sidedcase}]{lemma}{twosidedcase}
    \label{lem:two-sidedcase}
    Let $S$ be a set of $n$ points in the plane in general position, and let $a, b \in S$ be two points not lying on the convex hull of $S$. There exist two plane Hamiltonian paths $\pi_1, \pi_2$ on $S$ such that $\pi_1 \cap \pi_2 = \overline{ab}$.
\end{restatable}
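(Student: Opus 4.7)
Since both $a,b$ lie in the interior of $\CH(S)$, the line $\ell=\ell(ab)$ splits $S\setminus\{a,b\}$ into two non-empty subsets $S_1$ (above) and $S_2$ (below), and $\ell$ leaves $\partial\CH(S)$ through exactly two hull edges $e_L=\overline{p_Lq_L}$ and $e_R=\overline{p_Rq_R}$ (with $p_L,p_R\in S_1$ and $q_L,q_R\in S_2$), both of which are bridges over $\ell$; in particular $|S|\geq 5$. My plan is to take $\pi_1$ as a ``clean-split'' path $\pi_1 = P_1 \cdot \overline{ab}\cdot Q_1$, where $P_1$ is a plane Hamiltonian path on $S_1\cup\{a\}$ ending at $a$ and $Q_1$ a plane Hamiltonian path on $S_2\cup\{b\}$ starting at $b$. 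Plane-ness of $\pi_1$ (and of $\pi_2$ below) is immediate: upper sub-paths live in the closed upper half-plane, lower sub-paths in the closed lower half-plane, $\overline{ab}$ on $\ell$, and any hull bridge used is an edge of $\partial\CH(S)$ hence uncrossed by internal segments.

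In the main regime $|S_1|,|S_2|\geq 4$, I apply \cref{Lem: KKGVThm2} to $S_1\cup\{a\}$ with $s=t=a$ and to $S_2\cup\{b\}$ with $s=t=b$, obtaining edge-disjoint pairs $P_1,P_2$ (both ending at $a$) and $Q_1,Q_2$ (both starting at $b$); setting $\pi_i = P_i\cdot\overline{ab}\cdot Q_i$ then finishes the proof, since the edge-disjointness of $\pi_1,\pi_2$ outside $\overline{ab}$ reduces to the disjointness of $P_1,P_2$ and of $Q_1,Q_2$ (the $P$- and $Q$-edges live on opposite sides of $\ell$).

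When $\min(|S_1|,|S_2|)\leq 3$, \cref{Lem: KKGVThm2} no longer applies---indeed for $|S_1|=3$ a degree count in $K_4$ shows that $S_1\cup\{a\}$ cannot carry two edge-disjoint Hamiltonian paths that both end at $a$. I therefore give $\pi_2$ a bridge-based shape
\[
    \pi_2 = q_L\cdot p_L\cdot P_2'\cdot\overline{ab}\cdot Q_2'\cdot q_R\cdot p_R,
\]
where $P_2'$ is a plane Hamiltonian path on $(S_1\setminus\{p_R\})\cup\{a\}$ from $p_L$ to $a$ and $Q_2'$ is a plane Hamiltonian path on $(S_2\setminus\{q_L\})\cup\{b\}$ from $b$ to $q_R$, supplied by \cref{lem:pathTwoEndpoints}. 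Because $P_2'$ avoids $p_R$ and $Q_2'$ avoids $q_L$, the disjointness of $P_1$ from $P_2'$ (and of $Q_1$ from $Q_2'$) can be secured by a judicious choice of starting vertex for $P_1$ (resp.\ ending vertex for $Q_1$); the degenerate subcase $|S_1|=1$ forces $p_L=p_R$ and I handle it ad-hoc by $\pi_2 = q_L\cdot x\cdot q_R\cdot Q_2''\cdot\overline{ba}$, with $x$ the unique point of $S_1$ and $Q_2''$ a plane Hamiltonian path on $(S_2\setminus\{q_L\})\cup\{b\}$ from $q_R$ to $b$ (and symmetrically when $|S_2|=1$).

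The main obstacle will be the small-case verification: when $|S_1|=3$, the $S_1$-side edge budget in $K_4$ becomes tight and the ``right'' choice of $P_1$ depends on the cyclic order of $S_1\cup\{a\}$ on its convex hull, possibly forcing the construction to be mirrored by swapping the roles of $e_L$ and $e_R$ (and analogously for small $|S_2|$). I expect this to be dispatched by a short, finite case analysis on $(|S_1|,|S_2|)\in\{1,2,3\}^2$ up to the $\ell(ab)$-symmetry.
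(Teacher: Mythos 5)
Your main regime is sound: since $a$ and $b$ are interior to $\CH(S)$, the point $a$ (resp.\ $b$) is a vertex of $\partial\CH(S_1\cup\{a\})$ (resp.\ of $\partial\CH(S_2\cup\{b\})$), so when $|S_1|,|S_2|\ge 4$ you can indeed apply \cref{Lem: KKGVThm2} with $s=t=a$ and with $s=t=b$ and glue across $\overline{ab}$; this mirrors Case~E of \cref{lem:diagonalcase} and is correct. Your identification of the two hull edges crossed by $\ell(ab)$ as guaranteed bridges is also the right structural observation about why the two-sided case, unlike \cref{lem:one-sidedcase}, has no obstructions.

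The gap is in the regime $\min(|S_1|,|S_2|)\le 3$, which is where all the work lies. First, the edge-disjointness of $P_1$ from $P_2'$ (and of $Q_1$ from $Q_2'$) is the entire content of these cases, and ``secured by a judicious choice of starting vertex'' is not an argument: $P_1$ and $P_2'$ share $|S_1|$ vertices, \cref{lem:pathTwoEndpoints} gives no control over which edges its path uses, and when the \emph{other} side is large (say $|S_1|=2$ and $|S_2|$ arbitrary) you are asking for two edge-disjoint plane Hamiltonian paths on $S_2\cup\{b\}$ and on $(S_2\setminus\{q_L\})\cup\{b\}$ with prescribed endpoints --- a claim of exactly the kind this paper spends pages establishing, and one not covered by your proposed finite check over $(|S_1|,|S_2|)\in\{1,2,3\}^2$, which ignores the mixed small/large cases entirely. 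Second, your bridge-based $\pi_2$ requires a path on $(S_1\setminus\{p_R\})\cup\{a\}$ starting at $p_L$, which is ill-defined whenever $p_L=p_R$; this happens not only for $|S_1|=1$ but whenever exactly one hull vertex of $S$ lies above $\ell(ab)$ while further non-hull points of $S$ do as well, so a whole family of configurations is unhandled. The paper takes a different route that sidesteps both issues: it partitions the points around $\overline{ab}$ according to bridges crossing $\ell(ab)$ left of $a$ and right of $b$, reduces to the already-proven \cref{lem:diagonalcase,lem:one-sidedcase} whenever those apply positively, and settles the few genuinely new configurations (two or three points on each side of $\ell(ab)$) by explicit constructions.
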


Now, we can state the main result of this section.

\begin{restatable}{theorem}{OneEdgeBothPaths}
    \label{thm:OneEdgeBothPaths}
    Let $S$ be a set of at least 4 points in the plane in general position, and let $a,b\in S$ such that $a \ne b$. There exist two plane Hamiltonian paths $\pi_1, \pi_2$ on $S$ such that $\pi_1 \cap \pi_2 = \overline{ab}$ if and only if the triple $\langle S,a,b \rangle$ satisfies one of the \cref{lem:diagonalcase,lem:one-sidedcase,lem:two-sidedcase}. Moreover, we can decide if $\pi_1,\pi_2$ exist, and construct them if they do, in $O(n\log n)$ time, where $n=|S|$.
\end{restatable}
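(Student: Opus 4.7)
The plan is to observe that \cref{thm:OneEdgeBothPaths} is essentially an umbrella result: the three lemmas \cref{lem:diagonalcase,lem:one-sidedcase,lem:two-sidedcase} exhaustively partition the instances according to which of $a, b$ lie on $\partial\CH(S)$. So the heart of the proof is a case distinction:

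First I would split on $|\{a,b\} \cap \partial\CH(S)|$, noting that up to relabelling of $a$ and $b$ the three possibilities (both, exactly one, neither) are exactly the hypotheses of \cref{lem:diagonalcase,lem:one-sidedcase,lem:two-sidedcase}, respectively. In each case, the characterization of when $\pi_1,\pi_2$ exist is already given by the corresponding lemma, and the construction in the positive case is the one exhibited in its proof. Since the hypotheses of the three lemmas are mutually exclusive and cover all triples $\langle S,a,b\rangle$, the equivalence claimed by the theorem follows immediately.

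For the algorithmic part, I would verify that each case can be handled in $O(n \log n)$ time. Computing $\partial\CH(S)$ and deciding which of the three cases we are in costs $O(n \log n)$. In the first case, testing the conditions of \cref{lem:diagonalcase} just requires counting points on each side of $\ell(ab)$ in $O(n)$ time, and the constructions in Cases~A--E all consist of finding $\partial \CH(S_i)$ and invoking \cref{Lem: KKGVThm2} or \cref{Lem: S=4} a constant number of times, each in $O(n \log n)$. In the second case, identifying the bridges over $\ell(ab)$ and verifying the conditions of \cref{lem:one-sidedcase} also reduces to convex hull and sorting operations in $O(n \log n)$, and the construction from its proof again calls \cref{Lem: KKGVThm2}, \cref{Lem: S=4}, or \cref{lem:pathGivenSegment} a constant number of times. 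The third case requires no condition check, and \cref{lem:two-sidedcase} guarantees a construction in the same time bound.

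The only real obstacle is bookkeeping: making sure the three lemmas are truly exhaustive (once we normalize the labelling of $a,b$ in the ``exactly one on the hull'' situation) and that their \emph{negative} conditions are also mutually exclusive, so that the ``only if'' direction of the theorem follows without overlap. I would close with a single sentence remarking that the three lemmas together yield both implications, and that the time bound inherits from the subroutines listed above.
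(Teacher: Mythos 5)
Your proposal is correct and follows essentially the same route as the paper: the theorem is deduced directly from \cref{lem:diagonalcase,lem:one-sidedcase,lem:two-sidedcase}, whose hypotheses partition the triples $\langle S,a,b\rangle$ by how many of $a,b$ lie on $\partial\CH(S)$, with the $O(n\log n)$ bound inherited from the constant number of calls to \cref{Lem: KKGVThm2,Lem: S=4} and the ad-hoc small-case constructions. Your version is, if anything, slightly more explicit about the condition-checking costs than the paper's one-paragraph proof.
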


\begin{proof}
    The existence part of the statement follows from \cref{lem:diagonalcase,lem:one-sidedcase,lem:two-sidedcase}. Note that the construction of the paths $\pi_1,\pi_2$ in the proofs of \cref{lem:diagonalcase,lem:one-sidedcase,lem:two-sidedcase} is based on multiple applications of \cref{Lem: KKGVThm2,Lem: S=4}, and ad-hoc constructions for small point sets. Thus, the running time is $O(n\log n)$.
\end{proof}

\section{Open problems}\label{sec:open}
We conclude by listing some open problems.
Firstly, a natural way to unify and extend \cref{Thm:OneEdgeOnePath,thm:stpathabincluded} would be to determine under what conditions two paths can be found when one or both endpoints are prescribed to one or both of the paths. Doing this would probably require significant refinements of the arguments presented in the proof of \cref{Thm:OneEdgeOnePath}. 
A somewhat simpler problem might be to extend \cref{prop:twopathssegnotinc} and \cref{thm:OneEdgeBothPaths} in a similar way by prescribing one or both endpoints to one or both paths.
Another way to generalize our results is by prescribing more than one edge to be taken or avoided by one or two paths.

Lastly, the most challenging open problem
is to reduce the gap between the upper and lower bounds for the geometric graph packing problem for plane Hamiltonian paths. The best known lower bound is $3$ and the best upper bound is $\lceil\frac{n}{3}\rceil$, both proven in~\cite{KindermannKLV23}.

\subsubsection{\ackname} We thank the organizers of the HOMONOLO 2024 workshop in Nov\'a Louka, Czech Republic, for the fruitful atmosphere where the research on this project was initiated. \\
T. Anti\'{c}, A. D\v{z}uklevski, J. Kratochv\'{i}l and M. Saumell received funding from GA\v{C}R grant 23-04949X, T.A and A.D\v{z} were additionally supported by GAUK grant VV–2025–260822. G. Liotta was supported in part by  MUR of Italy, PRIN Project no. 2022TS4Y3N – EXPAND and PON Project ARS01\_00540

\subsubsection{\discintname} The authors have no competing interests to declare that are relevant to the content of this article.

\bibliographystyle{splncs04}
\bibliography{bibliography,biblio} 

\clearpage
\appendix

\section{Omitted proof of \cref{sec:prelims}}

\pathGivenSegment* \label{lem:pathGivenSegment*}

\begin{proof}
    Rotate the point set~$S$ such that $\overline{ab}$ is a horizontal line segment.
    Without loss of generality, the y-coordinate of $z$ is smaller than that of~$a$ and~$b$.
    We will obtain $\pi$ by taking a plane Hamiltonian path through the points below $\overline{ab}$,
    attaching $\overline{ab}$, and adding a plane Hamiltonian path through the points above $\overline{ab}$;
    see \cref{fig:segment-and-startpoint-given}.
    
    \begin{figure}[t]
        \centering
        \includegraphics{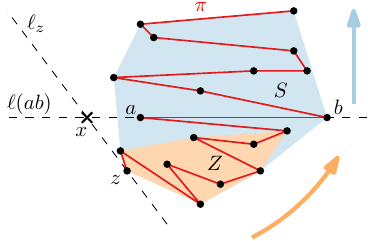}
        \caption[t]{Our strategy in the proof of \cref{lem:pathGivenSegment}:
            the path $\pi$ starts in~$z$, traverses the
            points in~$Z$ radially around~$x$, contains~$\overline{ab}$,
            and traverses the remaining points by y-coordinate.}
        \label{fig:segment-and-startpoint-given}
    \end{figure}
    
    Let $Z$ be the set of points in~$S$ that have a y-coordinate strictly smaller than that of~$a$ and~$b$.
    Let $\ell_z$ be a line through $z$ such that $\ell_z \cap \CH(S) = z$
    and $\ell_z$ is not parallel to~$\ell(ab)$.
    Such a line exists because $z$ lies on the convex hull of~$S$
    and $S$ is in general position.
    Without loss of generality, let the intersection point~$x$ of $\ell_z$ and $\ell(ab)$
    be to the left of~$\overline{ab}$.
    The first part of~$\pi$ collects the points of~$Z$
    sorted by their radial coordinates around~$x$
    in such a way that~$z$ is the first point and the other points of~$Z$
    follow in counter-clockwise order around~$x$.
    Then, we add $a$ and $b$ to~$\pi$.
    Finally, we add to~$\pi$ the points of $S \setminus (Z \cup \{a, b\})$
    by increasing y-coordinate.
    Clearly, $\pi$ is a plane Hamiltonian path that includes $\overline{ab}$ and starts in~$z$.
    
    An algorithm realizing this procedure is straight-forward.
    It needs to sort the points by radial coordinates and by y-coordinate
    and, hence, can run in $O(n \log n)$ time.
\end{proof}

\section{Omitted proofs of \cref{sec:proofofstpathsegnotinc}}
\label{app:proofofstpathsegnotinc}

\convexobstacle* \label{lem:convexobstacle*}

\begin{proof}
    Since $S$ is in convex position and $s,t$ are adjacent on $\partial \CH(S)$, there is only one Hamiltonian $s$--$t$ path in $S$ and it uses all edges of the convex hull, including~$\overline{ab}$.
\end{proof}

\wheelobstacle* \label{lem:wheelobstacle*}

\begin{proof}
    Assume, for the sake of contradiction, that the desired path $\pi$ exists. Then it contains a subpath $\pi'$ that connects $a$ and $b$. Note that this path contains at least one diagonal of $\partial\CH(S)$ connecting a vertex of $\partial\CH(S)$ lying between $a$ and $y$ (this vertex might be $a$ or $y$) to a vertex of $\partial \CH(S)$ lying between $b$ and $x$ (this vertex might be $b$ or $x$). However, by our assumption, each such diagonal either separates both $a,b$ from $t$ (this happens when neither of $a,b$ is a vertex of the diagonal) or separates $s,t$ from one of the points $a,b$ (for example if $b$ is vertex of the diagonal but $a$ is not). It is important to note that since we require that $\overline{ab}\not\in\pi$, it is impossible that this diagonal is $\overline{ab}$.  In each case, it is impossible to complete $\pi'$ to a plane Hamiltonian $s$--$t$ path. 
\end{proof}

\oneendptsegnotinc*
\label{lem:oneendptsegnotinc*}

\begin{proof}    
    We split the proof into three cases.
    
    \begin{figure}[t]
        \centering
        
        \begin{subfigure}[t]{0.3\textwidth}
            \includegraphics[page=1]{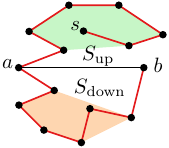}
            \subcaption{Case B.}
            \label{fig:oneendptsegnotinc-1}
        \end{subfigure}
        \hfill
        \begin{subfigure}[t]{0.3\textwidth}
            \includegraphics[page=2]{Figures/lemma10.pdf}
            \subcaption{Case C1 -- the construction when $s\not\in \triangle(axb)$.}
            \label{fig:oneendptsegnotinc-2}
        \end{subfigure}
        \hfill
        \begin{subfigure}[t]{0.3\textwidth}
            \includegraphics[page=3]{Figures/lemma10.pdf}
            \subcaption{Case C1 -- the construction when $s\in \triangle(a,x,b)$.}
            \label{fig:oneendptsegnotinc-3}
        \end{subfigure}
        
        \bigskip
        
        \begin{subfigure}[t]{0.3\textwidth}
            \includegraphics[page=5]{Figures/lemma10.pdf}
            \subcaption{Case C2 -- the construction when $s \in \partial\CH(S)$}
            \label{fig:oneendptsegnotinc-4}
        \end{subfigure}
        \hfil
        \begin{subfigure}[t]{0.3\textwidth}
            \includegraphics[page=4]{Figures/lemma10.pdf}
            \subcaption{Case C2 -- the construction when $s\not\in \partial\CH(S)$.}
            \label{fig:oneendptsegnotinc-5}
        \end{subfigure}
        \caption{Construction of path $\pi$ in different cases of \cref{lem:oneendptsegnotinc}.}
        \label{fig:oneendptsegnotinc}
    \end{figure}
    
    \begin{description}
        \item[Case A: $s \in \{a, b\}$.]
        Assume $s = a$.
        Apply \cref{lem:pathTwoEndpoints} to obtain an $a$--$b$ path in $S$.
        Since $n \ge 4$, the edge $\overline{ab}$ is not used in $\pi$.

        \item[Case B: $s \not\in \{a,b\}$ and $\overline{ab}$ is not an edge of $\CH(S)$.]
        Assume that $\overline{ab}$ is horizontal and that $s$ lies above $\overline{ab}$. Then, we apply \cref{lem:pathTwoEndpoints} to the point set consisting of all points of $S$ lying above $\overline{ab}$ plus the point $a$ to find a plane $s$--$a$ path. We then again apply \cref{lem:pathTwoEndpoints} to the remaining points plus the point $a$ to find a plane $a$--$b$ path, which cannot include the segment $\overline{ab}$ since our assumptions ensure that there is at least one point below $\overline{ab}$. See \cref{fig:oneendptsegnotinc-1}.
        
        \item[Case C:  $s \not\in \{a,b\}$ and $\overline{ab}$ is an edge of $\CH(S)$.]
        Assume again that $\overline{ab}$ is horizontal and that all points of $S$ lie above it. Now we split into two subcases.
        
        \item[Case C1: $\CH(S)$ is not a triangle.] In this case, let $x$ be the vertex of $\partial\CH(S)$ different from $b$ that is adjacent to $a$. We can assume that $x\neq s$ (otherwise, we exchange $a$ and $b$). If $s\not \in \triangle(a,x,b)$, we apply \cref{lem:pathTwoEndpoints} to the set $S\setminus \triangle(a,x,b) \cup \{b\}$ to obtain a plane $s$--$b$ path and once more to the set $S\cap \triangle
        (a,x,b)$ to obtain a plane $b$--$a$ path. Concatenating these two paths gives us $\pi$.  See \cref{fig:oneendptsegnotinc-2}.
        Otherwise, if $s\in \triangle (a,x,b)$, we first apply \cref{lem:pathTwoEndpoints} to obtain an $s$--$x$ path in $\triangle(a,x,b)\setminus\{b\}$, and concatenate it with any $x$--$b$ path on the remaining points. See \cref{fig:oneendptsegnotinc-3}.
        \item[Case C2: $\CH(S)$ is a triangle] 
        If $s$ is the third vertex of $\partial\CH(S)$ we obtain $\pi$ by applying \cref{lem:pathTwoEndpoints} to the set $S \setminus \{s\}$ to obtain an $a$--$b$ path and then concatenate it with the segment $\overline{as}$, see \cref{fig:oneendptsegnotinc-4}. Otherwise if $s\not\in\partial\CH(S)$, we construct $\pi$ by concatenating the path from $a$ to $b$ on $\partial\CH(S)$ with any $b$--$s$ path which uses only interior points of $S$ - we find this path by applying \cref{lem:pathTwoEndpoints} to the set of interior points of $S$ together with $b$. See \cref{fig:oneendptsegnotinc-5}.
    \end{description}
    Applying \cref{lem:pathTwoEndpoints} and doing local modifications
    allows for a running time in $O(n \log n)$.
\end{proof}

\stpathsegnotinc*
\label{thm:stpathsegnotinc*}

\begin{proof}
    We remark that cases A and E1 from the main body of the proof have been completely proven. Therefore, to finish the proof, we only need to show that the desired path $\pi$ exists in the following cases:
    
    \begin{description}
        \item[Case B: $s$ and $t$ lie in the same closed halfplane determined by $\ell(ab)$] \phantom{breakafterthis}
        \textbf{and $\overline{ab}$ is not an edge of $\partial\CH(S)$.}
        ~We assume that $s,t$, lie on or below $\ell(ab)$. First, we find an $a$--$b$ path in the point set containing all the points above $\ell(ab)$ and including $a,b$. Next, we find a line $p$ that separates $s,t$ so that, up to renaming, $a$ and $s$  lie in the same halfplane determined by $p$ while $b$ and $t$ lie in the complementary halfplane. We find an $a$--$s$ and a $b$--$t$ path and concatenate all three of them. Note that it may happen that $s \in \{a,b\}$ or $t\in\{a,b\}$. In this case we find a single path path collecting all points below $\ell(ab)$ (for example if $s=a$, such a path would be a $b$--$t$ path). The existence of  all considered paths is guaranteed by \cref{lem:pathTwoEndpoints}. 
        
        From now on, we may assume that $\overline{ab} \in \partial \CH(S)$ and split into the following cases: 
        
        \item[Case C: $s$ and $t$ lie in the same closed halfplane determined by $\ell(ab)$,]
        \textbf{$S$ is in convex position and $\overline{st}$ is not an edge of $\partial\CH(S)$.}
        ~Recall that this is the only case where $S$ is in convex position that we have to consider, as otherwise, we know that the desired path $\pi$ does not exist by \cref{lem:convexobstacle}. Assume that $a,b,s,t$ appear in this order when traversing $\partial \CH(S)$. Note that there is at least one point between $s,t$ on $\partial \CH(S)$; note also that it is possible that $s\in \{a,b\}$ or $t\in \{a,b\}$. Then we obtain $\pi$ by concatenating the path connecting $s$ to $b$ using convex hull edges but not including $a$, the path connecting $t$ to $a$ using convex hull edges but not including $b$, and the only possible path connecting $a$ to $b$ avoiding $\overline{ab}$ using the remaining points. See \cref{fig:onepathsegnotinc-2}. Note that if $s \in \{a,b\}$ or $t\in\{a,b\}$, then one of these paths may in fact be of length zero.
        
        \item[Case D:  $s$ and $t$ lie in the same closed halfplane determined by $\ell(ab)$] \phantom{breakafterthis}
        \textbf{and $S$ contains only one point in the interior of $\CH(S)$.}
        ~If $S$ contains only one point $p$ in the interior of $\CH(S)$ and $p \notin \{s,t\}$, it is easy to see that $\pi$ exists (just take any plane Hamiltonian $s$--$t$ path on $S\setminus\{p\}$ which contains $\overline{ab}$ and then replace $\overline{ab}$ by a path of length two using $\overline{ap},\overline{pb}$). If $p=s$ and $a=t$ or $b=t$ again finding a path is easy (if $a=t$, start with segment $\overline{pb}$ and then go from $b$ to $a$ by traversing the convex hull).
        
        So we can assume that $p=s$ and that  $t\not\in \{a,b\}$. Now let $x,y$ be two neighbours of $t$ such that $a,b,x,t,y$ appear on $\partial\CH(S)$ in this order (as in \cref{lem:wheelobstacle}). Since we assume that \cref{lem:wheelobstacle} does not apply to $\langle S,a,b,s,t\rangle$, we can assume that $\ell(by)$ separates $s$ and $t$ -- note that this assumption guarantees $a\neq y$. 
        We construct $\pi$ by starting with the segment $\overline{as}$, then following a path from $a$ to $y$ on $\partial\CH(S)$, using the segment $\overline{yb}$ and lastly following a path from $b$ to $t$ on $
        \partial\CH(S)$. See \cref{fig:onepathsegnotinc-3}.
        
        \item[Case E2: $s$ is the only point such that $\triangle(a,s,b)$ is empty.]
        In this case, all interior points of $S$ are contained in a single halfplane determined by $\ell(as)$ and in a single halfplane determined by $\ell(bs)$. We now again consider two situations. If $t\not \in \partial\CH(S)$, we construct $\pi$ by first traversing $\partial\CH(S)$ from $a$ to $b$, and concatenating this path with segment $\overline{as}$ and any $b$--$t$ path on the remaining points, which we can obtain by applying \cref{lem:pathTwoEndpoints}; see  \cref{fig:onepathsegnotinc-5}. 
        If $t\in \partial\CH(S)$, we first traverse $\partial\CH(S)$ from $t$ to $a$ (including $a$), in the direction in which we do not use $\overline{ab}$. We then collect a neighbor $p\neq s,b$ of $a$ on the convex hull of the  remaining points  (note that such a $p$ exists since we assume that there are at least two points in the interior of $\CH(S)$). Call thus obtained $t$--$p$ subpath $\pi'$ and the set of points on which it is defined $S'$. Now we construct $\pi$ by taking the segment $\overline{sb}$, concatenating it with any $b$--$p$ path in $(S\setminus S') \cup \{p\}$ (which we know it exists by \cref{lem:pathTwoEndpoints}), and finally with $\pi'$; see \cref{fig:onepathsegnotinc-6}.
    \end{description}
    
    This concludes the proof of \cref{thm:stpathsegnotinc}.
\end{proof}

\section{Omitted proofs of \cref{sec:proofofstpathabincluded}}
\label{app:proofofstpathabincluded}

\diagonalobstruction*
\label{lem:diagonalobstruction*}
\begin{proof}
    Assume on the contrary that such a path exists. Since $\overline{ab}$ is not an edge of $\partial\CH(S)$, then there is at least one point $p\in S$ lying on the opposite side of $\ell(ab)$ from $s,t$. Then, any plane Hamiltonian $s$--$t$ path needs to visit $p$ between visiting  $a$ and $b$ and therefore cannot include $\overline{ab}$.  
\end{proof}

\bottomobstruction*
\label{lem:bottomobstruction*}
\begin{proof}
    Without loss of generality, assume that $a=s$ and that $t$ is below $\ell(ab)$. We use sets $S_{\mathrm{up}}$ and $S_{\mathrm{down}}$, defined as above. Since both $S_{\mathrm{up}}$ and $S_{\mathrm{down}}\!\setminus t$ are nonempty, 
    any Hamiltonian $s$--$t$ path containing $\overline{ab}$ contains an edge between these two sets. 
    By the assumptions, each segment connecting a point from $S_{\mathrm{up}}$ and $S_{\mathrm{down}}\!\setminus t$ intersects $\overline{ab}$, hence the path is not plane as required.
\end{proof}

\complicatedobstruction*
\label{lem:complicatedobstruction*}

\begin{proof}
    Assume that $s \in \{a,b\}$ and that $t$ lies below $\ell(ab)$. Then we know that~$t$ is a vertex of $\partial\CH(S)$ which is closest to $\ell(ab)$ among all vertices of $\partial\CH(S)$ lying below $\ell(ab)$ (and possibly the only one). Further, we know that for every bridge $\overline{xy}$ over $\ell(ab)$ such that $t\not\in \{x,y\}$, it holds that both $x,y$ are vertices of $\partial\CH(S)$ and both open halfplanes determined by $\ell(xy)$ contain at least one point above and below $\ell(ab)$. Now assume that a plane Hamiltonian $s$--$t$ path $\pi$ exists in $S$.
    Then, the segment in $\pi$ which contains $t$ is either a bridge over $\ell(ab)$, or it connects $t$ to another point lying below $\ell(ab)$. If the former is the case, then~$\pi$ starts at $t$, collects some points of $S$ lying above $\ell(ab)$ and then needs to use another bridge $\overline{xy}$ over $\ell(ab)$, but any such bridge separates the remaining points into two nonempty parts and therefore $\pi$ would either not be plane or spanning. In the other case the situation is almost identical, except that $\pi$ first collects points lying below $\ell(ab)$ and then uses a bridge $\overline{xy}$. See \cref{fig:negative_case_lem16}.
\end{proof}

\begin{figure}[t]
    \centering
    \includegraphics[page=1]{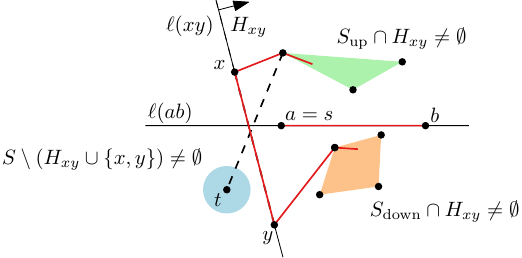}
    \caption[t]{The case described by \cref{lem:complicatedobstruction}.}
    \label{fig:negative_case_lem16}
\end{figure}

\stpathabincluded*
\label{thm:stpathabincluded*}

\begin{proof}    
    For $|S| = 2$, the edge $\overline{ab}=\overline{st}$ is the desired solution.
    Therefore, we can assume that $|S|> 2$.
    If $\{s,t\}=\{a,b\}$, it is impossible to find the desired path~$\pi$ by \cref{lem:stisabobstruction}.
    From now on, we assume that $\{s,t\}\neq \{a,b\}$ and that $\overline{ab}$ is horizontal with $a$ to the left of $b$.
    
    If $s$ and $t$ lie in different closed halfplanes determined by the line $\ell(ab)$, then
    the assumptions guarantee that $s,t\not\in\{a,b\}$. Assume that $s$ is above and $t$ is below $\ell(ab)$. We split $S$ into two subsets, $S_{\mathrm{up}}$ and $S_{\mathrm{down}}$, where $S_{\mathrm{up}}$ consists of all points of $S$ above $\ell(ab)$, and $S_{\mathrm{down}} = S \setminus (S_{\mathrm{up}} \cup \{a,b\})$ are the points below this line.
    Then, by \cref{lem:pathTwoEndpoints}, we can find a plane Hamiltonian $s$--$a$ path in $S_{\mathrm{up}}\cup \{a\}$ and a plane Hamiltonian $b$--$t$ path in $S_{\mathrm{down}}\cup \{b\}$. Joining these paths using the segment $\overline{ab}$ gives us the path $\pi$. See \cref{fig:stpathabincluded-1}.

    \begin{figure}[t]
        \centering
        \begin{subfigure}[t]{.32 \textwidth}
            \centering
            \includegraphics[page=1]{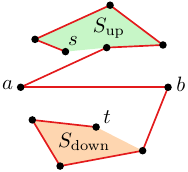}
            \subcaption{$\ell(ab)$ separates $s$ from~$t$.}
            \label{fig:stpathabincluded-1}
        \end{subfigure}
        \hfill
        \begin{subfigure}[t]{.3 \textwidth}
            \centering
            \includegraphics[page=2]{Figures/stpathsegnotinc_new.pdf}
            \subcaption{Case A.}
            \label{fig:stpathabincluded-2}
        \end{subfigure}
        \hfill
        \begin{subfigure}[t]{.3 \textwidth}
            \centering
            \includegraphics[page=3]{Figures/stpathsegnotinc_new.pdf}
            \subcaption{Case B.1.}
            \label{fig:stpathabincluded-3}
        \end{subfigure}
        \caption{Construction of path $\pi$ in different cases of \cref{thm:stpathabincluded}.}
        
        \label{fig:stpathabincluded}
    \end{figure}
    
    From now on, we assume that $s$ and $t$ lie in the same closed halfplane determined by the line $\ell(ab)$, say below or on $\ell(ab)$. Now we split into the following cases: 
    
    \begin{description}
        \item[Case A: $\overline{ab}$ is an edge of $\CH(S)$.]
        If some of $s,t$ coincides with $a$ or $b$, say w.l.o.g. $s = a$, then we find a plane Hamiltonian $b$--$t$ path in $S \setminus \{s\}$ by \cref{lem:pathTwoEndpoints} and obtain $\pi$ by attaching this path to the edge $sb$. So we can assume that $\{s,t\}\cap \{a,b\} = \emptyset$. We suitably match points from these two sets, say w.l.o.g.\ $a$ with $s$ and $b$ with $t$, so that there exists a line $p$ such that the pairs  $a,s$ and $b,t$ lie in distinct halfplanes determined by $p$. Again, we find an $s$--$a$ and a $b$--$t$ path in these two halfplanes by \cref{lem:pathTwoEndpoints}, and obtain $\pi$ by connecting them using the segment $\overline{ab}$. See \cref{fig:stpathabincluded-2}.
        
        \item[Case B: $\overline{ab}$ is not an edge of $\CH(S)$.]
        Since the case $\{a,b\}\subset \CH(S)$ is excluded by \cref{lem:diagonalobstruction}, we may assume that $\{a,b\}\not\subset \CH(S)$ and at least one bridge over $\ell(ab)$ exists.
        We differentiate between two situations:
        
        \item[Case B1: $\{a,b\} \cap \{s,t\} = \emptyset$.]
        Let $\overline{xy}$ be a bridge over $\ell(ab)$ such that $x$ lies above $\ell(ab)$ and $\overline{xy}$ is an edge of $\partial \CH(S)$. We can assume that $\overline{xy}$ crosses $\ell(ab)$ to the left of $a$. If $y\in \{s,t\}$ (assume $y=s$), we then obtain $\pi$ by concatenating $\overline{yx}$ with any plane $x$--$a$ path on the points lying above $\ell(ab)$ (including $a$ but not $b$), segment $\overline{ab} $ and lastly with any plane $b$--$t$ path on the remaining points, where these paths are obtained by applications of \cref{lem:pathTwoEndpoints}. Otherwise, assume $y\not\in\{s,t\}$ and recall that $S_{\mathrm{up}}$ denotes the set of all points in $S$ lying above $\ell(ab)$. Further, let $p$ be a line such that, up to renaming of $s$ and $t$, the pairs $a,s$ and $b,t$ lie in different halfplanes determined by $p$, as in Case A.  We then denote by $S_s$ the set of points lying below $\ell(ab)$ and in the same halfplane determined by $p$ as $s$, and by $S_t$ the set of points lying below $\ell(ab)$ and in the same halfplane determined by $p$ as $t$. Then we can see that $S = S_{\mathrm{up}} \cup S_s \cup S_t \cup \{a\} \cup \{b\}$. Now by \cref{lem:pathTwoEndpoints} we can find a $t$--$b$ path in $S_t\cup \{b\}$, an $a$--$x$ path in $S_{\mathrm{up}}\cup \{a\}$ and a $y$--$s$ path in $S_s$. Finally, we obtain $\pi$ by concatenating these paths by segments $\overline{ab}$ and $\overline{xy}$ in a suitable order. See \cref{fig:stpathabincluded-3}.
        
        \item[Case B2: $\{a,b\} \cap \{s,t\} \ne \emptyset$.]
        
        Without loss of generality, assume that $s = a$.
        If there is a bridge $\overline{xy}\in\partial\CH(S)$ over $\ell(ab)$ not containing $t$, say with $x\in S_{\mathrm{up}}$, then compose $\pi$ from $\overline{ab}$ followed by a $b$--$x$ path in $S_{\mathrm{up}}\cup \{b\}$, then by $\overline{xy}$ and finally by a $y$--$t$ path in $S_{\mathrm{down}}$. The existence of both paths is guaranteed by \cref{lem:pathTwoEndpoints}; see \cref{fig:stpathsegnotinc-4}.

        From now on, we assume that all (one or both) bridges over~$\ell(ab)$ which are also edges of~$\partial\CH(S)$ contain~$t$.
        
        As in the forthcoming analysis of subcases
        we involve similar arguments to compose~$\pi$ from parts, we introduce a notation in which the constructed path~$\pi$ will be encoded as  
        $\pi=\overline{ab}\circ (b,S_{\mathrm{up}}\cup\{b\},x)\circ \overline{xy}\circ (y,S_{\mathrm{down}},t)$. Here~$\circ$ stands for concatenation, and each triple of the form $(s',S',t')$ means a path from $s'$ to $t'$ in $S'$ (guaranteed by \cref{lem:pathTwoEndpoints}), or just an empty path when $S'=\{s'\}=\{t'\}$.
        For correctness we require $s',t'\in \partial\CH(S')$, besides further arguments showing that no crossings between the concatenated parts appear.
        
         If $S_{\mathrm{down}}=\{t\}$ and $\overline{xt}$ is a bridge, then we may choose 
        $\pi=\overline{sb}\circ (b,S_{\mathrm{up}}\cup\{b\},x)\circ \overline{xt}$.
        
        Suppose next that $S_{\mathrm{down}}\!\setminus t\ne\emptyset$. Recall that, by \cref{lem:bottomobstruction}, no path $\pi$ exists when $S_{\mathrm{down}}\!\setminus t\ne\emptyset$ and all bridges over $\ell(ab)$ contain $t$.    
        So we may assume that a bridge $\overline{xy}$ over $\ell(ab)$ with $x\in S_{\mathrm{up}}$ and $y\in S_{\mathrm{down}}\!\setminus t$ exists.
        
        \begin{figure}[t]
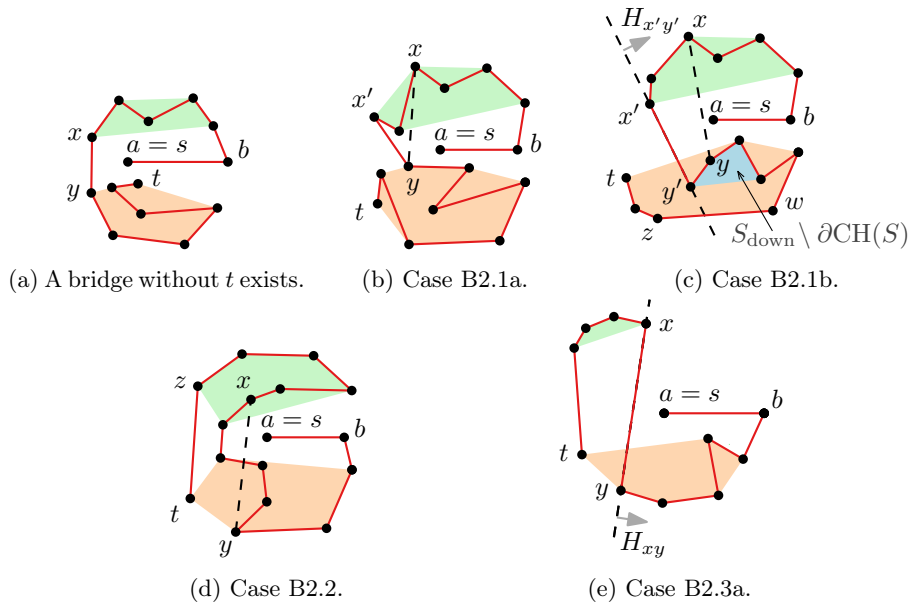

            
            \centering
            \begin{subfigure}[t]{.32 \textwidth}
                \centering
                \includegraphics[page=4]{Figures/stpathsegnotinc_new.pdf}
                \subcaption{A bridge without $t$ exists.}
                \label{fig:stpathsegnotinc-4}
            \end{subfigure}   
            \hfil
            \begin{subfigure}[t]{0.29\textwidth}
                \centering
                \includegraphics[page=5]{Figures/stpathsegnotinc_new.pdf}
                \subcaption{Case B2.1a.}
                \label{fig:stpathsegnotinc-5}
            \end{subfigure}
            \hfil
            \begin{subfigure}[t]{0.37\textwidth}
                \centering
                \includegraphics[page=6]{Figures/stpathsegnotinc_new.pdf}
                \subcaption{Case B2.1b.}
                \label{fig:stpathsegnotinc-6}
            \end{subfigure}
            
            \begin{subfigure}[t]{0.3\textwidth}
                \centering
                \includegraphics[page=7]{Figures/stpathsegnotinc_new.pdf}
                \subcaption{Case B2.2.}
                \label{fig:stpathsegnotinc-7}
            \end{subfigure}
            \hfil
            \begin{subfigure}[t]{0.3\textwidth}
                \centering
                \includegraphics[page=8]{Figures/stpathsegnotinc_new.pdf}
                \subcaption{Case B2.3a.}
                \label{fig:stpathsegnotinc-8}
            \end{subfigure}
            
            \caption{Construction of path $\pi$ in \cref{thm:stpathabincluded}.}
            \label{fig:stpathsegnotinc}
        \end{figure}
        
        \item[Case B2.1: There exists a bridge with $y\notin\partial\CH(S)$.]
        
        \item[Case B2.1a: There exists a bridge with $y\in\partial\CH(S_\mathrm{down})\setminus\partial\CH(S)$.]
        We \phantom{makelinebreak} choose $x'$ such that $\overline{x'y}$ forms a bridge over $\ell(ab)$ and also $\overline{x'y}\in \partial\CH(S_\mathrm{up}\cup\{y\})$. Such $x'$ exists, because the polygon $\partial\CH(S_\mathrm{up}\cup\{y\})$ may intersect $\overline{ab}$ at most once, because otherwise $y$ could not participate in any bridge over $\ell(ab)$.
        We then get $\pi=\overline{sb}\circ (b,S_{\mathrm{up}},x')\circ \overline{x'y}\circ (y,S_{\mathrm{down}}
        ,t)$.
        See \cref{fig:stpathsegnotinc-5}.
        
        \item[Case B2.1b: There exists no bridge with $y\in\partial\CH(S_\mathrm{down})\setminus\partial\CH(S)$.]

        In \phantom{makelinebreak} this case we aim to compose $\pi$ so that after traversing a bridge over $\ell(ab)$  from $S_{\mathrm{up}}$ to $S_{\mathrm{down}}$ it first includes all points of a suitable convex subset of $S_{\mathrm{down}}$ and then it reaches $t$ along $\partial \CH(S)$.

        For this purpose we determine a bridge $x'y'$ over $\ell(ab)$ for the 
        set $S\setminus(\partial \CH(S) \cap S_{\mathrm{down}})$ on the boundary of this set (either one or two such exist). Formally, we choose it so that 
        $x'=\overline{x'y'}\cap \partial\CH(S_{\mathrm{up}})$, $y'=\overline{x'y'}\cap \partial\CH(S_{\mathrm{down}}\!\setminus\partial\CH(S))$ and  $\overline{x'y'}\in \partial\CH(S\setminus (\partial \CH(S)\cap S_{\mathrm{down}}))$.
        
        Let $H_{x'y'}$ denote the open halfplane determined by the line $\ell(x'y')$ and the point~$a$.
        Let $\overline{wz}$ be the segment in which the line $\ell(x'y')$ intersects $\partial\CH(S)\cap S_{\mathrm{down}}$, say with $w\in H_{x'y'}$. Then $y',w\in \partial\CH(S_{\mathrm{down}}\cap H_{x'y'})$ and segments $\overline{x'y'}$ and $\overline{wz}$ do not intersect.
        
        Then we define: $\pi=\overline{sb}\circ (b,S_{\mathrm{up}},x')\circ \overline{x'y'}
        \circ (y',S_{\mathrm{down}}\cap H_{x'y'},w)\circ \overline{wz} \circ (z,S_{\mathrm{down}}\!\setminus H_{x'y'},t)$.  
        See \cref{fig:stpathsegnotinc-6}.
        
        \item[Case B2.2: There exists a bridge with $x\notin\partial\CH(S)$.] 
        
        In this case, let $z\in \partial\CH(S) \cap S_{\mathrm{up}}$ be the end of a bridge over $\ell(ab)$ whose other end is $t$ and $\overline{tz}$ is an edge of $\partial\CH(S)$. We first look for an $s$--$z$ path $\pi'$ in $S\setminus t$ containing $\overline{ab}$.
        
        If the set of vertices in $\partial\CH(S\setminus t)\cap S_{\mathrm{up}}$ is the same as the one in $\partial\CH(S)\cap S_{\mathrm{up}}$, we are back in Case~B2.1. To see this, note that we are looking for a $z$--$s$ path in $S\setminus t$ and that $z$ lies above $\ell(ab)$ so the fact that there is a bridge with $x\not\in\partial\CH(S)$ is symmetric to the conditions of Case~B2.1, and therefore we can find the desired path $\pi'$.
        
        Otherwise, there is some vertex of $\partial\CH(S\setminus t)\cap S_{\mathrm{up}}$ which is not a vertex of $\partial\CH(S)\cap S_{\mathrm{up}}$ and therefore $z$ is not incident to a bridge which is an edge of $\partial\CH(S\setminus t)$, which corresponds to the situation discussed in the very beginning of Case B.2 (again with $t$ replaced by $z$ and roles of $S_{\mathrm{up}}$ and $S_{\mathrm{down}}$ exchanged), and therefore we can find the desired path $\pi'$.
        In both cases, we get $\pi=\pi'\circ \overline{zt}$.  
        See \cref{fig:stpathsegnotinc-7}.
        
        \item[Case B2.3: All bridges satisfy $x,y\in\partial\CH(S)$.] 
        Let $H_{xy}$ denote the open halfplane determined by the line $\ell(xy)$ and the point $a$.
        
        \item[Case B2.3a: There exists a bridge such that $H_{xy} \cap S_{\mathrm{up}} = \emptyset$.]
        
        In this case, observe that $S_{\mathrm{down}}\cap H_{xy}$ contains points within an angle spanned by lines $\ell(ab)$ and $\ell(xy)$, so $(S_{\mathrm{down}}\cap H_{xy})\cup \{b,y\}$ is a set with $b$ and $y$ on the boundary of its convex hull. Analogously, $x$ is on $\partial(\CH( S\setminus H_{xy}\setminus\{y\}))$.
        
        We then may define:
        $\pi=\overline{sb}\circ (b,(S_{\mathrm{down}}\cap H_{xy})\cup \{b,y\},y)\circ \overline{yx}\circ (x,S\setminus H_{xy}\setminus\{y\},t)$. See \cref{fig:stpathsegnotinc-8}.
        
        \item[Case B2.3b: There exists a bridge such that $H_{xy} \cap S_{\mathrm{down}} = \emptyset$.]
        
        By a symmetric argument to the previous case with the roles of $x$ and $y$ as well as $S_{\mathrm{up}}$ and $S_{\mathrm{down}}$ exchanged we obtain:
        $\pi=\overline{sb}\circ (b,(S_{\mathrm{up}}\cap H_{xy})\cup \{b,x\},x)\circ \overline{xy}\circ (y,S\setminus H_{xy}\setminus\{x\},t)$. 
        
        \item[Case B2.3c: All bridges satisfy  $H_{xy} \cap S_{\mathrm{up}} \ne \emptyset$ and $H_{xy} \cap S_{\mathrm{down}} \ne \emptyset$.]
        
        In this case, the path does not exist by \cref{lem:complicatedobstruction}. See also \cref{fig:negative_case_lem16}
    \end{description}
    
    Applying \cref{lem:pathTwoEndpoints} and doing local modifications
    yields a running time in $O(n \log n)$.
    Observe that we can check the existence of suitable bridges in $O(n \log n)$ time by sorting the points radially around~$a$ and around~$b$.
\end{proof}

\section{Proof of \cref{prop:twopathssegnotinc}}\label{app:proposition}

We first need to briefly discuss a tool that will allow us to keep our case analysis relatively small. In particular, we use the notion of \emph{crossing-dominance} introduced by Pilz and Welzl~\cite{DBLP:conf/compgeom/PilzW15,Pilz2017}. Formally, given two sets $P,S$ of $n$ points in the plane, we say that a bijection $f: P \to S$ is \emph{crossing-preserving} if whenever $\overline{pq}$ crosses $\overline{p'q'}$ (where $p,q,p',q'$ are points of $P$) then $\overline{f(p)f(q)}$ crosses $\overline{f(p')f(q')}$ in $S$. If such a mapping exists, we say that $S$ \emph{crossing-dominates} $P$ and we denote it by $P \preceq S$. This relation defines a partial order on \emph{order types} of sets of $n$ points. Pilz and Welzl showed in \cite{Pilz2017} that for various existence and extremal questions it is enough to check the point sets whose order types are maximal in this ordering. In particular, if $S$ crossing-dominates $P$ (with crossing-preserving map $f:P\to S$) and $S$  contains two plane Hamiltonian paths $\pi_1,\pi_2$ that do not include a segment $\overline{ab}$ (where $a,b\in S$), then $f^{-1}(\pi_1),f^{-1}(\pi_2)$ are two plane Hamiltonian paths in $P$ that do not include $\overline{f^{-1}(a)f^{-1}(b)}$. So, to prove that \cref{prop:twopathssegnotinc} holds for all sets of $n$ points, we just need to check that it holds for the maximal ones.  

\twopathssegnotinc*
\label{prop:twopathssegnotinc*}

\begin{proof}
    If $n\ge 7$, the main result in~\cite{KindermannKLV23} guarantees that there exist three edge-disjoint plane Hamiltonian paths on $S$. At most one of these paths contains $\overline{ab}$, so the other two paths satisfy the requirements of the proposition. The proof of the result in~\cite{KindermannKLV23} is constructive and yields a polynomial-time algorithm to construct the three paths.

    For $n=5,6$ we only need to analyze the sets that are maximal with respect to the crossing-dominance order. These four point sets are characterized in \cite{Pilz2017} and can be found in \cref{fig:ordermaximal}, together with paths $\pi_1,\pi_2$ for each (combinatorially different) possible choice of $\overline{ab}$. 
    
    Finally, we observe that, for $n= 4$, two edge-disjoint paths require $6$ segments, so it is impossible for both of the paths to avoid $\overline{ab}$. Similar situations occur for $n\leq 3$.
\end{proof}

\begin{figure}[ht]
    \centering
    \includegraphics{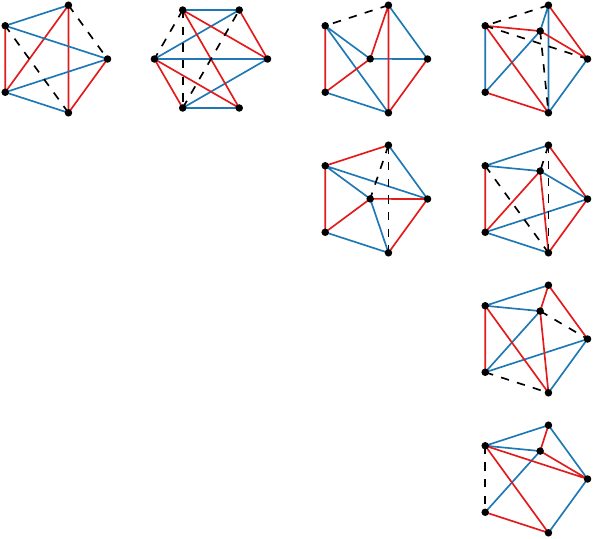}
    \caption{Paths $\pi_1,\pi_2$ satisfying conditions of \cref{prop:twopathssegnotinc} on maximal sets with respect to the crossing-dominance ordering. There are four maximal sets with $5$ or $6$ points (one for each column), and dashed segments indicate possible choices of $\overline{ab}$.}
    \label{fig:ordermaximal}
\end{figure}

\section{Omitted proofs of \cref{sec:onePathWithEdgeOneWithout}}
\label{app:onePathWithEdgeOneWithout}

\pathAlternatingAB*
\label{lem:pathAlternatingAB*}

\begin{proof}
    In~\cite[Theorem 3.1]{AbellanasGHNR99}, it is proved that there exists a plane Hamiltonian path $\pi$ on $S_{\mathrm{down}} \cup S_{\mathrm{up}}$ such that
    every edge in $\pi$ has one endpoint in~$S_{\mathrm{down}}$ and one endpoint in~$S_{\mathrm{up}}$.
    To show that we can find such a path starting in~$z$, and also to analyze the running time,
    let us recall the algorithm used in the proof of that theorem.
    
    Without loss of generality, assume that the separating line is horizontal. Consider the convex hull $\CH(S_{\mathrm{down}} \cup S_{\mathrm{up}})$ of the points $S_{\mathrm{down}} \cup S_{\mathrm{up}}$.
    There are exactly two edges of $\partial \CH(S_{\mathrm{down}} \cup S_{\mathrm{up}})$
    that connect a point from $S_{\mathrm{down}}$ to a point from~$S_{\mathrm{up}}$. Among them, let $ab$, where $a \in S_{\mathrm{down}}$ and $b \in S_{\mathrm{up}}$, be the one intersecting the separating line more to the left. We fix $b$ as the first point of~$\pi$.
    We next repeat this procedure for $\CH(S_{\mathrm{down}} \cup (S_{\mathrm{up}} \setminus \{b\}))$: Let $a'b'$, with $a' \in S_{\mathrm{down}}$ and $b' \in S_{\mathrm{up}}$ be, among the two edges of $\partial \CH(S_{\mathrm{down}} \cup (S_{\mathrm{up}} \setminus \{b\}))$
    connecting a point from $S_{\mathrm{down}}$ to a point from~$S_{\mathrm{up}} \setminus \{b\}$, the one intersecting the separating line more to the left.
    We add $a'$ as the second point of~$\pi$.
    We repeat this process, alternating between points from~$S_{\mathrm{down}}$ and points from~$S_{\mathrm{up}}$,
    until no points remain.
    
    Clearly, $z$ can be any of the two points from~$S_{\mathrm{up}}$
    that belong to the two edges connecting a point from~$S_{\mathrm{down}}$
    to a point from~$S_{\mathrm{up}}$ in $\partial \CH(S_{\mathrm{down}} \cup S_{\mathrm{up}})$.
    
    To implement this algorithm, we use a dynamic convex hull algorithm supporting point deletions.
    Hershberger and Suri describe such a data structure
    that can be constructed in $O(n \log n)$ time
    and has amortized $O(\log n)$ update time~\cite[Theorem 2.6]{HershbergerS92}.
    In each iteration, the edge on the convex hull intersecting the separating line more to the left can be found 
    via a common tangent to $\CH(S_{\mathrm{down}})$ and $\CH(S_{\mathrm{up}})$
    in $O(\log n)$ time~\cite[Theorem 3.2]{HershbergerS92}.
    In total, this yields an $O(n \log n)$-time algorithm.
\end{proof}

\OneEdgeOnePath*
\label{Thm:OneEdgeOnePath*}

\begin{proof}
    It remains to prove the somewhat more technical subcases of Case~B.
    Let us first restate Case~B again and then consider the individual subcases.
    
    \begin{description}
        \item[Case B: $a$ and $b$ are in the same partition.]
        In this case, the main idea is similar as before: to use $\pi$
        as $\pi_2$ (since $\pi$ does not contain $\overline{ab}$),
        to find spanning paths of $S_{\mathrm{up}}$ and $S_{\mathrm{down}}$ including $\overline{ab}$,
        and to connect them as $\pi_1$.
        Let $x$ be a topmost point of~$S_{\mathrm{down}}$, and
        let $y$ be a bottommost point of~$S_{\mathrm{up}}$.
        Clearly, $x$ sees at least two points of~$S_{\mathrm{up}}$, and
        $y$ sees at least two points of~$S_{\mathrm{down}}$.
        The conditions of the following cases can be checked in $O(n \log n)$ time.
        
        \item[Case B1: $x$ sees a point $s$ of $S_{\mathrm{up}}$ such that $\overline{xs} \notin \pi$.]
        We distinguish the following subcases.
        Note that any y-monotone path contains $\overline{ab}$ since $\overline{ab}$ is horizontal.
        In all subcases, an implementation in $O(n \log n)$ time is straight-forward.
        
        \item[Case B1.1: $\{x, s\} \cap \{a, b\} = \emptyset$.]
        We know that by \cref{lem:pathGivenSegment}, there exist two plane Hamiltonian paths $\pi_1^{\mathrm{down}}$ and $\pi_1^{\mathrm{up}}$ on $S_{\mathrm{down}}$ and $S_{\mathrm{up}}$ starting in~$x$ and~$s$, respectively,
        such that one of them includes $\overline{ab}$.
        Let $\pi_1$ be the concatenation of $\pi_1^{\mathrm{down}}$, $\overline{xs}$, and $\pi_1^{\mathrm{up}}$.
        Finally, let $\pi_2 = \pi$.
        Clearly, $\pi_1$ and $\pi_2$ are two plane Hamiltonian paths 
        such that $\overline{ab} \in \pi_1$ and $\pi_1 \cap \pi_2 = \emptyset$.
        
        \item[Case B1.2: $x \in \{a, b\}$.]
        Here, $\overline{ab}$ is the topmost edge of $\partial \CH(S_{\mathrm{down}})$.
        Let $\pi_1^{\mathrm{down}}$ be a path traversing all points
        of~$S_{\mathrm{down}}$ by increasing y-coordinate and ending in~$x$.
        Further, let~$\pi_1^{\mathrm{up}}$ be a path traversing all points
        of~$S_{\mathrm{up}}$ and ending in~$s$;
        such a path exists by \cref{lem:pathTwoEndpoints}.
        Let $\pi_1$ be the concatenation of $\pi_1^{\mathrm{down}}$, $\overline{xs}$, and $\pi_1^{\mathrm{up}}$.
        Finally, let $\pi_2 = \pi$.
        Clearly, $\pi_1$ and $\pi_2$ are two plane Hamiltonian paths 
        such that $\overline{ab} \in \pi_1$ and $\pi_1 \cap \pi_2 = \emptyset$.
        
        \item[Case B1.3: $s \in \{a, b\}$.]
        We know that $y$ sees at least two points of~$S_{\mathrm{down}}$.
        If one of them is in $\pi$ not a neighbor of~$y$, we go to Case~B2.
        Otherwise, $y$ sees $x$ and a vertex~$q$ of $S_{\mathrm{down}}$, and $\overline{xy}, \overline{yq} \in \pi$;
        see \cref{fig:M-a}.
        Let $p$ be the neighbor of~$x$ on~$\pi$ that is distinct from~$y$.
        If $x$ is an endpoint of~$\pi$, then let $p$ be the neighbor of~$q$ on~$\pi$ that is distinct from~$y$;
        we consider only the case where $p$ is a neighbor of~$x$
        because the other case works analogously.
        Let $\pi_2$ be the path obtained from~$\pi$ by concatenating
        the first part of $\pi$ until $p$, $\overline{py}$, $\overline{yx}$, $\overline{xq}$,
        and the last part of $\pi$ starting at~$q$; see \cref{fig:M-b}.
        Observe that this modification does not create intersections within~$\pi_2$
        because the triangles $\triangle xyp$ and $\triangle xyq$ cannot contain a vertex or an edge from~$\pi$ in their interior.
        We let $\pi_1^{\mathrm{down}}$ be a plane path through all points of $S_{\mathrm{down}}$ starting in~$x$
        and ending in~$q$, which exists thanks to \cref{lem:pathTwoEndpoints}.
        It remains to construct the rest of~$\pi_1$.
        See \cref{fig:M-1.3} for an illustration.
        
        \item[Case B1.3a: $p \in \{a, b\}$ or $p$ lies above $\ell(ab)$.]
        Let $\pi_1^{\mathrm{up}}$ be the path through $S_{\mathrm{up}}$ by decreasing y-coordinate.
        Assure that if $p \in \{a, b\}$, then $p$ does not have a neighbor with smaller y-coordinate in~$\pi_1^{\mathrm{up}}$.
        Concatenate $\pi_1^{\mathrm{up}}$ and $\pi_1^{\mathrm{down}}$ via $\overline{yq}$ to get the plane path~$\pi_1$.
        Note that $\pi_1$ and $\pi_2$ do not share edges: $\overline{yq}$ is only in~$\pi_1$,
        $\overline{xq}$ and~$\overline{yp}$ are only in~$\pi_2$.
        
        \item[Case B1.3b: $p$ lies below $\ell(ab)$.]
        Let $\pi_1^{\mathrm{top}}$ be the path through all vertices of $S_{\mathrm{up}}$ above $\ell(ab)$ by decreasing y-coordinate.
        Split the vertices of $S_{\mathrm{up}}$, excluding $a$ and the vertices in~$\pi_1^{\mathrm{top}}$,
        along the line $\ell(xp)$ and name these sets $S_{\mathrm{up}}^{y}$ and $S_{\mathrm{up}}^{p}$
        such that~$S_{\mathrm{up}}^{y}$ contains~$y$ and $S_{\mathrm{up}}^{p}$ contains~$p$.
        Let $\pi_1^{\mathrm{up}, y}$ and $\pi_1^{\mathrm{up}, p}$ be two paths
        over~$S_{\mathrm{up}}^{y}$ and~$S_{\mathrm{up}}^{p}$
        having one endpoint in~$y$ and~$p$, respectively.
        As the other endpoint, choose $b$ if it is in the set, or an arbitrary vertex otherwise.
        Such paths exist due to \cref{lem:pathTwoEndpoints}.
        Assume that $b \in S_{\mathrm{up}}^{y}$; the other case is symmetric.
        Concatenate $\pi_1^{\mathrm{top}}$, $\overline{ab}$, $\pi_1^{\mathrm{up}, y}$, $\overline{yq}$, $\pi_1^{\mathrm{down}}$, $\overline{xp}$, and $\pi_1^{\mathrm{up}, p}$,
        which yields~$\pi_1$.
        Note that $\pi_1$ and $\pi_2$ do not share edges: $\overline{yq}$ and $\overline{xp}$ are only in~$\pi_1$,
        $\overline{xq}$ and~$\overline{yp}$ are only in~$\pi_2$.
        
        \begin{figure}[t]
            \centering
            \begin{subfigure}[t]{.47 \textwidth}
                \centering
                \includegraphics[page=8]{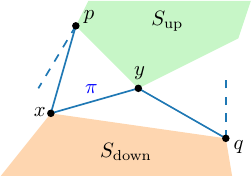}
                \subcaption{}
                \label{fig:M-1.3a}
            \end{subfigure}
            \hfill
            \begin{subfigure}[t]{.47 \textwidth}
                \centering
                \includegraphics[page=7]{Figures/M.pdf}
                \subcaption{}
                \label{fig:M-1.3b}
            \end{subfigure}
            \caption{Situations in Case B1.3a and Case B1.3b.}
            \label{fig:M-1.3}
        \end{figure}
        
        \item[Case B2: $y$ sees a point $s$ of $S_{\mathrm{down}}$ such that $\overline{ys} \notin \pi$.]
        This case is symmetric to Case~B1.

        \item[Case B3: $x$ sees two points $o, p \in S_{\mathrm{up}}$,
        $y$ sees two points $q, r \in S_{\mathrm{down}}$,] \phantom{linebreakafterthis}
        \textbf{and $\overline{ox}, \overline{px}, \overline{qy}, \overline{ry} \in \pi$.}
        ~We start with some observation regarding the arrangement of points
        and line segments in this case.
        First, note that $x$ and $y$ see each other
        as they are topmost and bottommost points of $S_{\mathrm{down}}$ and $S_{\mathrm{up}}$, respectively.
        Hence, we can assume that $o = y$, $r = x$,
        and $\overline{ox} = \overline{xy} = \overline{ry}$.
        We can also assume that $\overline{px}$, $\overline{xy}$, and $\overline{qy}$
        are three distinct line segments that appear in this order along~$\pi$.
        See \cref{fig:M-a}.
        We can assume that, on $\partial \CH(S_{\mathrm{up}})$, $p$ precedes $y$ in counterclockwise order, and
        that, on $\partial \CH(S_{\mathrm{down}})$, $x$ precedes $q$ in clockwise order.
        
        If $p$ and~$q$ were both on the same (mirrored) order (around their convex hulls) relative to~$x$ and~$y$,
        then $\overline{px}$ and $\overline{qy}$ would cross.
        
        \begin{figure}[t]
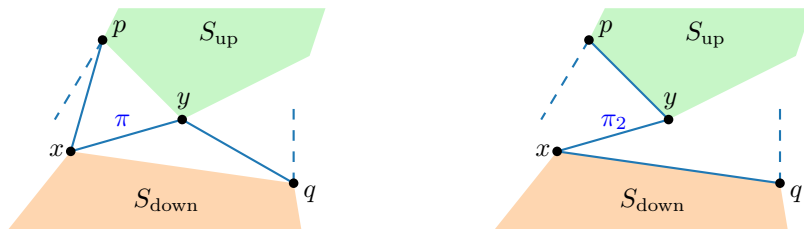

            \centering
            \begin{subfigure}[t]{.47 \textwidth}
                \centering
                \includegraphics[page=1]{Figures/M.pdf}
                \subcaption{$\pi$ uses all segments
                    to points that~$x$ and~$y$ see.}
                \label{fig:M-a}
            \end{subfigure}
            \hfill
            \begin{subfigure}[t]{.47 \textwidth}
                \centering
                \includegraphics[page=2]{Figures/M.pdf}
                \subcaption{We modify~$\pi$ (to~$\pi_2$) so that only
                    $\overline{xy}$ is used.}
                \label{fig:M-b}
            \end{subfigure}
            \caption{Situation in Case B3 (and similarly in Case B1.3):
                $x$ sees only the vertices $p$ and $y$ from~$S_{\mathrm{up}}$ and
                $y$ sees only the vertices $q$ and $x$ from~$S_{\mathrm{down}}$;
                $p$ and $q$ do not necessarily see each other.}
            \label{fig:M}
        \end{figure}
        
        Second, note that $\overline{qx}$ and $\overline{py}$ are line segments
        on $\partial \CH(S_{\mathrm{down}})$ and $\partial \CH(S_{\mathrm{up}})$, respectively,
        that are not a part of~$\pi$
        and that are not crossed by line segments of~$\pi$;
        otherwise, a straight-line segment would leave and enter
        the triangle $\triangle xyp$ or $\triangle xqy$ from the same side.
        By carefully modifying~$\pi$, we can now find a bridge for the path containing $\overline{ab}$.
        Let $\pi_2$ be the path obtained from~$\pi$ by concatenating
        the first part of $\pi$ until $p$, $\overline{py}$, $\overline{yx}$, $\overline{xq}$,
        and the last part of $\pi$ starting at~$q$; see \cref{fig:M-b}.
        It remains to construct~$\pi_1$ but we must be careful in a special case described next.
        
        \item[Case B3.1: $\{x, y\} \cap \{a, b\} \ne \emptyset$.]
        This case implies $\overline{ab} = \overline{qx}$ or $\overline{ab} = \overline{py}$
        as, otherwise, $x$~or~$y$ would see three points of $S_{\mathrm{up}}$ or $S_{\mathrm{down}}$, respectively.
        Assume, without loss of generality, that $\overline{ab} = \overline{qx}$.
        Rename $\pi_2$ to $\pi_1$ since it contains $\overline{ab}$.
        Let $\pi_2^{\mathrm{down}}$ be the path containing the points of $S_{\mathrm{down}} \setminus \{q\}$
        in increasing order of y-coordinate and
        let $\pi_2^{\mathrm{up}}$ be a plane path through all points of $S_{\mathrm{up}}$ starting in~$p$
        and ending in~$y$, which exists thanks to \cref{lem:pathTwoEndpoints}.
        Let $\pi_2$ be the concatenation of $\pi_2^{\mathrm{down}}$, $\overline{px}$, $\pi_2^{\mathrm{up}}$, and~$\overline{qy}$;
        see \cref{fig:M-c}.
        Clearly, $\pi_1$ and $\pi_2$ are two plane Hamiltonian paths 
        such that $\overline{ab} \in \pi_1$ and $\pi_1 \cap \pi_2 = \emptyset$, and the construction can be done in $O(n \log n)$ time.
        
        \begin{figure}[t]
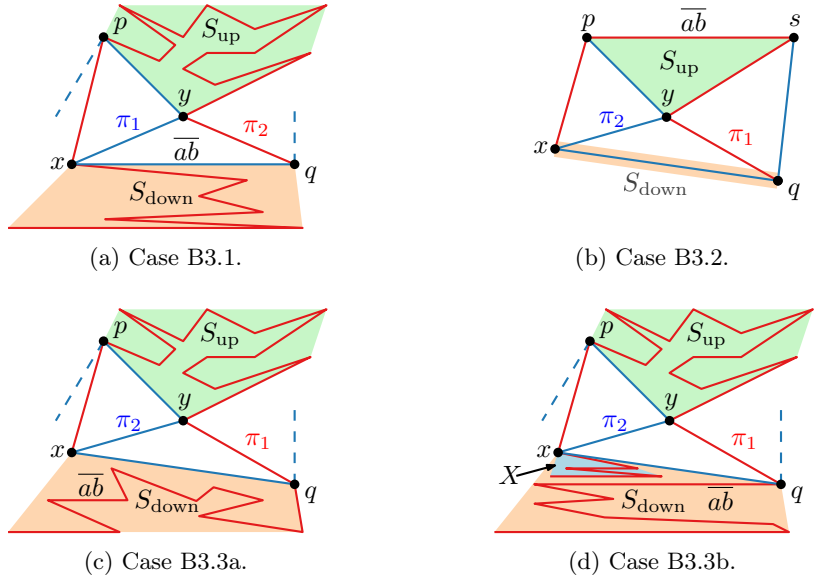

            \centering
            \begin{subfigure}[t]{.47 \textwidth}
                \centering
                \includegraphics[page=5]{Figures/M.pdf}
                \subcaption{Case B3.1.}
                \label{fig:M-c}
            \end{subfigure}
            \hfill
            \begin{subfigure}[t]{.47 \textwidth}
                \centering
                \includegraphics[page=6]{Figures/M.pdf}
                \subcaption{Case B3.2.}
                \label{fig:M-d}
            \end{subfigure}
            
            \bigskip
            
            \begin{subfigure}[t]{.47 \textwidth}
                \centering
                \includegraphics[page=3]{Figures/M.pdf}
                \subcaption{Case B3.3a.}
                \label{fig:M-e}
            \end{subfigure}
            \hfill
            \begin{subfigure}[t]{.47 \textwidth}
                \centering
                \includegraphics[page=4]{Figures/M.pdf}
                \subcaption{Case B3.3b.}
                \label{fig:M-f}
            \end{subfigure}
            
            \caption{Adding the second path according to the subcases of Case~B3.}
            \label{fig:M-2}
        \end{figure}
        
        \item[Case B3.2: $\{x, y\} \cap \{a, b\} = \emptyset$ and $|S_{\mathrm{down}}| = 2$.]
        Let $\pi_1 = \langle x, p, s, y, q \rangle$
        where $s$ is the remaining point of $S_{\mathrm{up}}$; see \cref{fig:M-d}.
        Clearly, $\overline{ab} = \overline{ps}$; thus, $\pi_1$ contains $\overline{ab}$.
        Further, $\pi_1$ and~$\pi_2$ are two plane Hamiltonian paths
        and $\pi_1 \cap \pi_2 = \emptyset$.
        $S$ has only constant size.
        
        \item[Case B3.3: $\{x, y\} \cap \{a, b\} = \emptyset$ and $|S_{\mathrm{down}}| \ge 3$.]
        We assume that $a, b \in S_{\mathrm{down}}$; the case $a, b \in S_{\mathrm{up}}$ is symmetric.
        We further distinguish whether $q \in \{a, b\}$ or not.
        
        \item[Case B3.3a: $q \notin \{a, b\}$.]
        Let $\pi_1^{\mathrm{down}}$ be a plane Hamiltonian path of $S_{\mathrm{down}} \setminus \{x\}$
        such that it includes $\overline{ab}$ and has one endpoint in~$q$,
        which exists according to \cref{lem:pathGivenSegment}
        and can be found in $O(n \log n)$ time.
        Let $\pi_1^{\mathrm{up}}$ be a plane path through all points of $S_{\mathrm{up}}$ starting in~$y$
        and ending in~$p$, which exists according to \cref{lem:pathTwoEndpoints}
        and can be found in $O(n \log n)$ time.
        Let $\pi_1$ be the concatenation of $\pi_1^{\mathrm{down}}$, $\overline{qy}$, $\pi_1^{\mathrm{up}}$, and~$\overline{px}$;
        see \cref{fig:M-e}.
        Clearly, $\pi_1$ and $\pi_2$ are two plane Hamiltonian paths 
        such that $\overline{ab} \in \pi_1$ and $\pi_1 \cap \pi_2 = \emptyset$.
        
        \item[Case B3.3b: $q \in \{a, b\}$.]
        We define $X$ as the subset of points in~$S_{\mathrm{down}}$
        whose y-coordinate is strictly greater than the y-coordinate of~$q$;
        in particular $x \in X$.
        Let $\pi_1^{\mathrm{down}}$ be the path containing all points of $S_{\mathrm{down}} \setminus X$
        in increasing order of y-coordinate and ending in~$q$.
        Note that $\overline{ab} \in \pi_1^{\mathrm{down}}$.
        Let $\pi_1^{\mathrm{up}}$ a plane path through all points of $S_{\mathrm{up}}$ starting in~$y$
        and ending in~$p$, which exists according to \cref{lem:pathTwoEndpoints}.
        Let $\pi_1^X$ be the path starting in~$x$ and
        containing all points of~$X$ in decreasing order of y-coordinate.
        Let $\pi_1$ be the concatenation of $\pi_1^{\mathrm{down}}$, $\overline{qy}$, $\pi_1^{\mathrm{up}}$, $\overline{px}$, and~$\pi_1^X$;
        see \cref{fig:M-f}.
        Clearly, $\pi_1$ and $\pi_2$ are two plane Hamiltonian paths 
        such that $\overline{ab} \in \pi_1$ and $\pi_1 \cap \pi_2 = \emptyset$
        and the construction can be done in $O(n \log n)$ time.
    \end{description}
    
    With this case distinction, we have shown that two such paths can in any situation be found.
    This completes the proof.
\end{proof}

\section{Omitted proofs of \cref{sec:twoPathsWithEdge}}
\label{app:twoPathsWithEdge}

\onesidedcase*
\label{lem:one-sidedcase*}

\begin{proof}
    We assume that the segment $\overline{ab}$ is horizontal. We express $S= S_1 \cup S_2 \cup S_3 \cup \{a,b\}$ where $S_1,S_2,S_3, \{a,b\}$ are pairwise disjoint as follows. We define $S_3$ to be the inclusion-maximal set of points in $S\setminus \{a,b\}$ such that at least one segment between points of $S_3$ is a bridge over $\ell(ab)$, and no segment between two points in $S_3$ crosses $\overline{ab}$. Then we define $S_1$ to be the set of points in $S\setminus (S_3 \cup \{a,b\})$ which lie above the line $\ell(ab)$ and $S_2$ the set of points in $S\setminus (S_3 \cup \{a,b\})$ which lie below the line $\ell(ab)$. We remark that the definition of $S_3$ in this decomposition is not unique. 
    Further, we write $S_3'$ and $S_3''$ for the subsets of $S_3$ lying above and below $\ell(ab)$, respectively, see \cref{fig:singlesidedfig-1}.
    
    Note that if $|S_1 \cup S_3'|$ and $|S_2 \cup S_3''|$ satisfy the conditions from \cref{lem:diagonalcase}, we can always find the two paths. We just treat $\overline{ab}$ as a diagonal and we don't use any edges that cross $\ell(ab)$. Therefore, we only need to consider the cases where $|S_1 \cup S_3'|\in \{2,3\}$ or $|S_2 \cup S_3''|\in \{2,3\}$. 
    
    We start by proving that in certain cases it is impossible to find $\pi_1$ and $\pi_2$.
    
    \begin{description}
    \item[Case A: $|S_1|=|S_2|=|S_3'|=|S_3''|=1.$]
    Let $s_1,s_2,s_3'$ and $s_3''$ be the points in $S_1,S_2,S_3'$ and $S_3''$, respectively. Assume that we can find two paths $\pi_1, \pi_2$ with the desired properties. Then we can assume that $\pi_1$ contains the segment $\overline{s_3's_3''}$, since otherwise we could also find two such paths in the case where $a$ and $b$ are on the convex hull of $S$. Further, we can assume that $\pi_1 $ contains the segment $\overline{s_3's_1}$. However, this implies that $\pi_2$ needs to contain the segments $\overline{as_1}$ and $\overline{bs_3'}$, which means that in order for $\pi_2$ to cover all of the points of $S$, we need to either cross $\overline{ab}$ or one of $a,b$ needs to have degree $3$, which is impossible.
    
    \item[Case B: $|S_1|=|S_3'|=|S_3''|=1, |S_2| = 2$.]
    Assume that we can find two paths $\pi_1, \pi_2$ with the desired properties. Denote as previously the vertices of the bridge by $s_3'$ and $s_3''$ and let $S_1=\{s_1\}$ and $S_2 = \{s_2',s_2''\}$. As in the previous case, we argue that one of the paths has to contain the bridge, say $\pi_1$. Then in $\pi_2$ we cannot cross $\ell(ab)$ and hence $\pi_2$ contains the segment $\overline{s_3's_1}$ and two of the segments $\overline{s_3''s_2'}, \overline{s_3''s_2''},\overline{s_2's_2''}$. Therefore we can assume, without loss of generality, that $\pi_2 = s_3's_1,a,b,s_2',s_2'',s_3''$. Now we consider $\pi_1$; it needs to contain the segment $\overline{s_1b}$ and exactly one of the segments $\overline{as_2'}$ or $\overline{as_3''}$. Lastly, it has to contain the segment $\overline{s_2's_3''}$. This gives a total of $5$ segments and no other can be used - which is impossible, since we assumed that $\pi_1$ is a Hamiltonian path on $7$ points.

    \item[Case C: $|S_3'|=|S_3''|=1, |S_1|=|S_2| = 2$.]
    Assume that we can find two paths $\pi_1, \pi_2$ with the desired properties. Denote the vertices as before (with the distinction of $S_1$ now comprising of $s_1'$ and $s_1''$). As in the previous case, we argue that one of the paths has to contain the bridge, say $\pi_1$. As in the previous case, the segment $\overline{ab}$ separates $S_1$ and $S_2$ and therefore $\pi_2$ needs to have endpoints in different halfplanes determined by $\ell(ab)$. Therefore, the order of the vertices in $\pi_2$ is as in \cref{fig:bluepath}. It follows that $\pi_1$ can contain at most one edge from each of $\triangle s_3's_1's_1''$ and $\triangle s_3''s_2's_2''$. Points $a$ and $b$ can have at most one other edge incident to them, which in total gives us a maximum of $6$ edges for~$\pi_1$, which is a contradiction since we assumed it is a Hamiltonian path on $8$ points.
    \end{description}
    
    We claim that in all of the other cases, we can find $\pi_1$ and $\pi_2$ with the desired properties. This follows from a long analysis where we find paths manually for a lot of small cases. For the summary of the result, see \cref{tab: singlesidedtable} and \cref{fig:singlesidedfig}.
    
    \begin{figure}[t]
        \centering
        \includegraphics{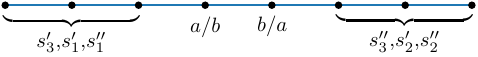}
        \caption{The order of points in the path $\pi_2$ in Case 3 of the proof of \cref{lem:one-sidedcase}.}
        \label{fig:bluepath}
    \end{figure}

    \begin{figure}[p]
    \centering
    \begin{subfigure}[b]{0.3\textwidth}
        \centering
        \includegraphics[page=17]{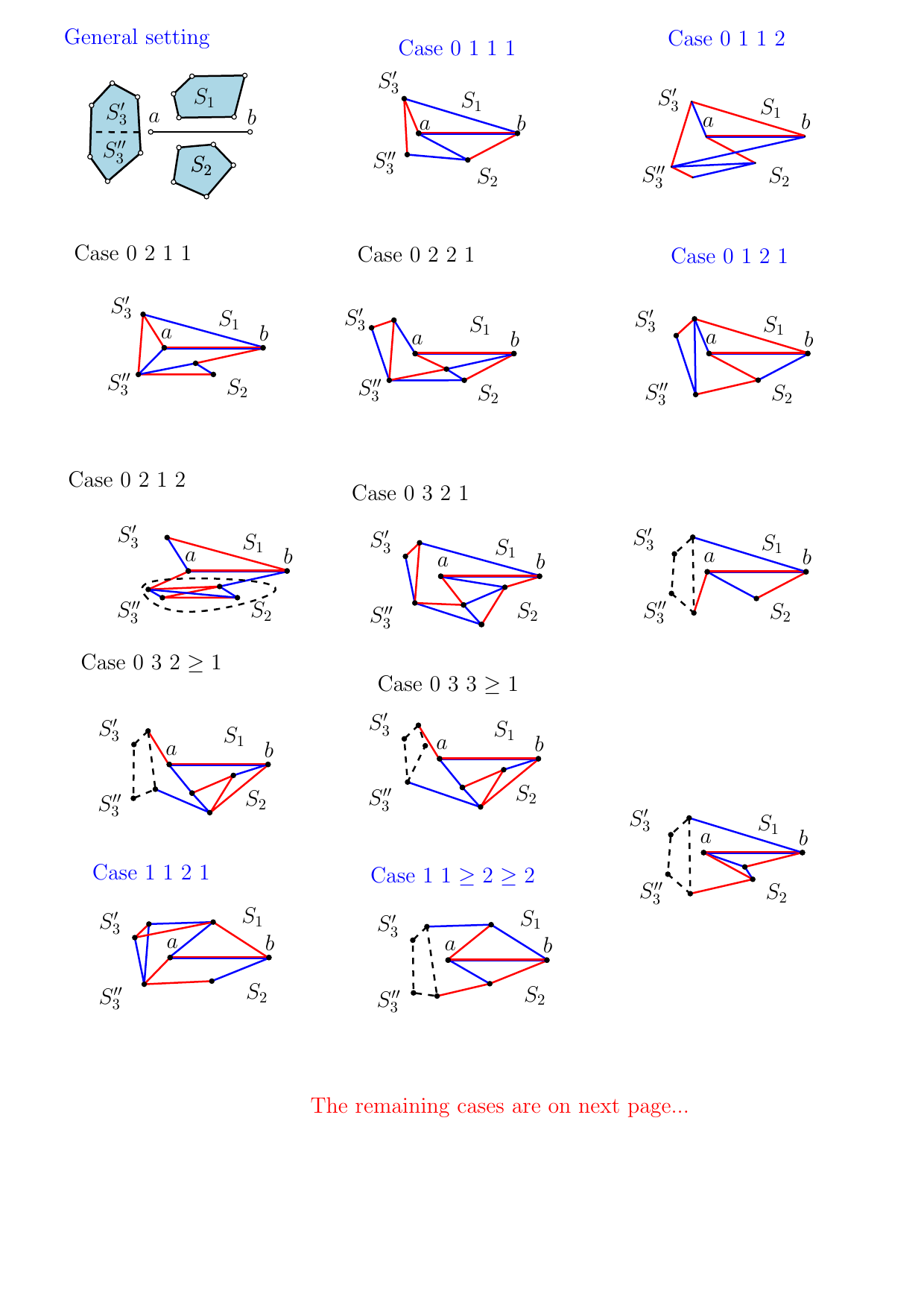}
        \subcaption{$S_1=\emptyset, |S_2|=1, |S_3' \cup S_3''|\ge 4$.}
        \label{fig:exceptions-1}
    \end{subfigure}
    \hfill
    \begin{subfigure}[b]{0.3\textwidth}
        \centering
        \includegraphics[page=18]{Figures/last_two_figures.pdf}
        \subcaption{$S_1=\emptyset, |S_2|=2, |S_3' \cup S_3''|\ge 4$.}
        \label{fig:exceptions-2}
    \end{subfigure}
    \hfill
    \begin{subfigure}[b]{0.3\textwidth}
        \centering
        \includegraphics[page=19]{Figures/last_two_figures.pdf}
        \subcaption{ $S_1=\emptyset, |S_2|=3, |S_3' \cup S_3''|\ge 4$.}
        \label{fig:exceptions-3}
    \end{subfigure}
    
    \vspace{1em}
    
    \begin{subfigure}[b]{0.3\textwidth}
        \centering
        \includegraphics[page=20]{Figures/last_two_figures.pdf}
        \subcaption{$|S_1|=1, |S_2|=1, |S_3' \cup S_3''|\ge 4$.}
        \label{fig:exceptions-4}
    \end{subfigure}
    \hfill
    \begin{subfigure}[b]{0.3\textwidth}
        \centering
        \includegraphics[page=21]{Figures/last_two_figures.pdf}
        \subcaption{$S_1=\emptyset, |S_2|=2, |S_3'|=1,  |S_3''|\ge 2$.}
        \label{fig:exceptions-5}
    \end{subfigure}
    \hfill
    \begin{subfigure}[b]{0.3\textwidth}
        \centering
        \includegraphics[page=22]{Figures/last_two_figures.pdf}
        \subcaption{$S_1=\emptyset, |S_2|\ge 3, |S_3'|=2,  |S_3''|=1$.}
        \label{fig:exceptions-6}
    \end{subfigure}
    
    \caption{Special cases in the proof of \cref{lem:one-sidedcase} which are not included in \cref{tab: singlesidedtable}. We apply \cref{Lem: KKGVThm2} or \cref{Lem: S=4} in the dashed regions.}
    \label{fig:exceptions}
    \end{figure}

\begin{table}[p]
    \centering
    \begin{tabular}{|l|l|l|l|l|l|}
        \hline
        \rowcolor[HTML]{FFFFC7} 
        $|S_1|$ & $|S_2|$ & $|S_3'|$ & $|S_3''|$ & $\pi_1,\pi_2$ exist & Reference                                                                  \\ \hline
        $0$     & $0$     & $\ge 1$  & $\ge 1$   & YES                 & Reduce to the setting of \cref{lem:diagonalcase}  \\ \hline
        $0$     & $1$     & $1$      & $1$       & YES                 & \cref{fig:singlesidedfig-2}              \\ \hline
        $0$     & $1$     & $1$      & $2$       & YES                 & \cref{fig:singlesidedfig-3}              \\ \hline
        $0$     & $1$     & $2$      & $1$       & YES                 & \cref{fig:singlesidedfig-4}              \\ \hline
        $0$     & $2$     & $1$      & $1$       & YES                 & \cref{fig:singlesidedfig-5}              \\ \hline
        $0$     & $2$     & $2$      & $1$       & YES                 & \cref{fig:singlesidedfig-6}              \\ \hline
        $1$     & $1$     & $1$      & $1$       & NO                  & Case $1$ in the proof of \cref{lem:one-sidedcase} \\ \hline
        $1$     & $1$     & $2$      & $1$       & YES                 & \cref{fig:singlesidedfig-7}              \\ \hline
        $1$     & $2$     & $1$      & $1$       & NO                  & Case $2$ in the proof of \cref{lem:one-sidedcase} \\ \hline
        $1$     & $2$     & $1$      & $2$       & YES                 & \cref{fig:singlesidedfig-8}              \\ \hline
        $1$     & $2$     & $2$      & $1$       & YES                 & \cref{fig:singlesidedfig-9}              \\ \hline
        $1$     & $2$     & $3$      & $1$       & YES                 & \cref{fig:singlesidedfig-10}              \\ \hline
        $1$     & $2$     & $1$      & $3$       & YES                 & \cref{fig:singlesidedfig-11}              \\ \hline
        $1$     & $2$     & $\ge 2$  & $\ge 3$   & YES                 & \cref{fig:singlesidedfig-12}             \\ \hline
        $2$     & $2$     & $1$      & $1$       & NO                  & Case $3$ in the proof of \cref{lem:one-sidedcase} \\ \hline
        $2$     & $2$     & $\ge 2$  & $\ge 1$    & YES                 & \cref{fig:singlesidedfig-13}              \\ \hline
        $3$     & $2$     & $\ge 1$   & $\ge 1$    & YES                 & \cref{fig:singlesidedfig-14}              \\ \hline

    \end{tabular}
    \bigskip
    
    \caption{All of the cases from the proof of \cref{lem:one-sidedcase}.}
    \label{tab: singlesidedtable}
\end{table}

\begin{figure}[p]
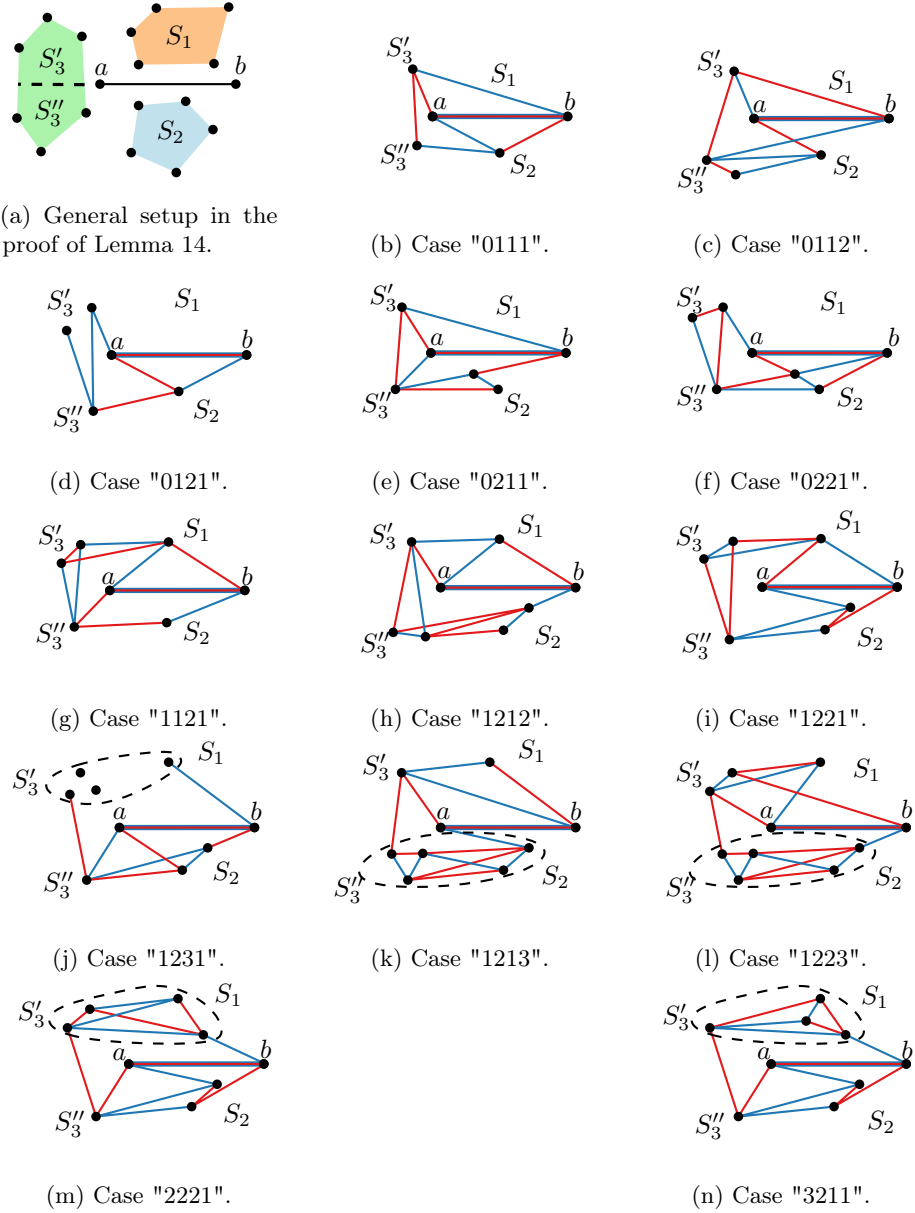

    \centering
    \begin{subfigure}[b]{0.3\textwidth}
        \centering
        \includegraphics[page=2]{Figures/last_two_figures.pdf}
        \subcaption{ General setup in the proof of \cref{lem:one-sidedcase}.}
        \label{fig:singlesidedfig-1}
    \end{subfigure}
    \hfill
    \begin{subfigure}[b]{0.3\textwidth}
        \centering
        \includegraphics[page=3]{Figures/last_two_figures.pdf}
        \subcaption{Case "0111".}
        \label{fig:singlesidedfig-2}
    \end{subfigure}
    \hfill
    \begin{subfigure}[b]{0.3\textwidth}
        \centering
        \includegraphics[page=4]{Figures/last_two_figures.pdf}
        \subcaption{Case "0112".}
        \label{fig:singlesidedfig-3}
    \end{subfigure}

    \begin{subfigure}[b]{0.3\textwidth}
        \centering
        \includegraphics[page=5]{Figures/last_two_figures.pdf}
        \subcaption{Case "0121".}
        \label{fig:singlesidedfig-4}
    \end{subfigure}
    \hfill
    \begin{subfigure}[b]{0.3\textwidth}
        \centering
        \includegraphics[page=6]{Figures/last_two_figures.pdf}
        \subcaption{Case "0211".}
        \label{fig:singlesidedfig-5}
    \end{subfigure}
    \hfill
    \begin{subfigure}[b]{0.3\textwidth}
        \centering
        \includegraphics[page=7]{Figures/last_two_figures.pdf}
        \subcaption{Case "0221".}
        \label{fig:singlesidedfig-6}
    \end{subfigure}
    
    \begin{subfigure}[b]{0.3\textwidth}
        \centering
        \includegraphics[page=8]{Figures/last_two_figures.pdf}
        \subcaption{ Case "1121".}
        \label{fig:singlesidedfig-7}
    \end{subfigure}
    \hfill
    \begin{subfigure}[b]{0.3\textwidth}
        \centering
        \includegraphics[page=9]{Figures/last_two_figures.pdf}
        \subcaption{Case "1212".}
        \label{fig:singlesidedfig-8}
    \end{subfigure}
    \hfill
    \begin{subfigure}[b]{0.3\textwidth}
        \centering
        \includegraphics[page=10]{Figures/last_two_figures.pdf}
        \subcaption{Case "1221".}
        \label{fig:singlesidedfig-9}
    \end{subfigure}
    
    \begin{subfigure}[b]{0.3\textwidth}
        \centering
        \includegraphics[page=11]{Figures/last_two_figures.pdf}
        \subcaption{ Case "1231".}
        \label{fig:singlesidedfig-10}
    \end{subfigure}
    \hfill
    \begin{subfigure}[b]{0.3\textwidth}
        \centering
        \includegraphics[page=12]{Figures/last_two_figures.pdf}
        \subcaption{Case "1213".}
        \label{fig:singlesidedfig-11}
    \end{subfigure}
    \hfill
    \begin{subfigure}[b]{0.3\textwidth}
        \centering
        \includegraphics[page=13]{Figures/last_two_figures.pdf}
        \subcaption{Case "1223".}
        \label{fig:singlesidedfig-12}
    \end{subfigure}
    
    \begin{subfigure}[b]{0.3\textwidth}
        \centering
        \includegraphics[page=14]{Figures/last_two_figures.pdf}
        \subcaption{Case "2221".}
        \label{fig:singlesidedfig-13}
    \end{subfigure}
    \hfill
    \begin{subfigure}[b]{0.3\textwidth}
        \centering
        \includegraphics[page=15]{Figures/last_two_figures.pdf}
        \subcaption{Case "3211".}
        \label{fig:singlesidedfig-14}
    \end{subfigure}
    
    \caption{Construction of $\pi_1$ and $\pi_2$ when $a$ is not a vertex of $\CH(S)$ and  $b$ is.}
    \label{fig:singlesidedfig}
\end{figure}

We remark that to solve some of the cases we employ \cref{Lem: KKGVThm2,Lem: S=4}. We do this by finding a subset $P$ of $S\setminus \{a,b\}$ of size at least $4$, and then finding two paths $\pi_1',\pi_2'$ in $S \setminus P$ which satisfy the conditions of the lemma and whose endpoints see two distinct vertices $x,y$ of the convex hull of $P$. We then apply either \cref{Lem: KKGVThm2} or \cref{Lem: S=4} to $P$ and find two more paths $\pi_1^x, \pi_2^y$. Finally, we connect $\pi_1'$ to $\pi_1^x$ and $\pi_2'$ to $y$ to obtain our two paths $\pi_1$ and $\pi_2$. Having that in mind, we note that in the following cases paths exist (and are shown in corresponding figures) and we exclude them from \cref{tab: singlesidedtable} to keep it more compact: 

\begin{itemize}
    \item $|S_1| = 0, |S_2| = 1, |S_3| \ge 4$ -  \cref{fig:exceptions-1};
    \item $|S_1| = 0, |S_2| = 2, |S_3| \ge 4$ - \cref{fig:exceptions-2};
    \item $|S_1| = 0, |S_2 \cup S_3''| \ge 4,$ - \cref{fig:exceptions-5} and \cref{fig:exceptions-6}.
    \item $|S_1| = 1, |S_2| = 1, |S_3| \ge 4,$ - \cref{fig:exceptions-4};
    \item $|S_1| = 0, |S_2| = 3, |S_3| \ge 4$ - \cref{fig:exceptions-3}. \qed
\end{itemize}

\end{proof}

\twosidedcase*
\label{lem:two-sidedcase*}

\begin{proof}
      We assume that the segment $\overline{ab}$ is horizontal and that $a$ lies to the left of $b$. We define pairwise disjoint sets $S_1,S_2,S_3,S_4$ such that $S= S_1 \cup S_2 \cup S_3 \cup S_4 \cup \{a,b\}$ as follows:
    We define $S_3$ to be the inclusion-maximal set of points in $S \setminus \{a,b\}$ such that at least one segment between points of $S_3$ is a bridge over $\ell(ab)$, no segment between points of $S_3$ crosses the segment $\overline{ab}$, and all of the bridges defined by points of $S_3$ cross $\ell(ab)$ to the left of $a$.
    We define $S_4 \subset S \setminus (\{a,b\} \cup S_3)$ analogously with all bridges defined by points of $S_4$ crossing $\ell(ab)$ to the right of~$b$.
   Further, we write $S_3'$ and $S_3''$ ($S_4'$ and $S_4''$)  for the subsets of $S_3$ ($S_4$) lying above and below $\ell(ab)$ respectively .
    Further, we define $S_1,S_2$ as in the proof of \cref{lem:one-sidedcase}. See \cref{fig:twosidedcasefig-1} for an illustration of this decomposition.
    Note that this partition is not unique and it may not even capture the full structure of the point set we are working with, since we might ignore many edges that we can use to construct our desired paths $\pi_1,\pi_2$. However, we can prove that, even in this restricted setting, we  can always find the two paths. 
    Similarly as before, we only need to investigate the cases in which we are not able to find $\pi_1,\pi_2$ in the setting of \cref{lem:one-sidedcase}. 
    
    \begin{description}
        \item[Case A: $|S_1\cup S_3' \cup S_4'|=|S_2\cup S_3'' \cup S_4''| = 2$.]
        There are few subcases to consider, but the only one where both $S_3$ and $S_4$ are nonempty is the case when $|S_3'|=|S_3''|=|S_4'|=|S_4''|=1,$ seen in \cref{fig:twosidedcasefig-2}. If either $S_3$ or $S_4$ are empty, then the existence or non-existence of the two paths follows by  \cref{lem:diagonalcase,lem:one-sidedcase}.
        
        \item[Case B: $|S_1\cup S_3' \cup S_4'|=3$ and $|S_2\cup S_3'' \cup S_4''| = 2$.]
        There are a few subcases to consider, but as before we only care about the ones where both $S_3$ and $S_4$ are nonempty. Further, we will only consider the case with minimal number of bridges to the left of $a$ and to the right of $b$, which can be seen in \cref{fig:twosidedcasefig-3}. All the cases with more bridges than these have can be solved in the same way.
        
        \item[Case C: $|S_1\cup S_3' \cup S_4'|=3$ and $|S_2\cup S_3'' \cup S_4''| = 3.$]
        Again, we will only consider the case with minimal number of bridges to the left of $a$ and to the right of $b$, which can be seen in \cref{fig:twosidedcasefig-4}. \qed
    \end{description}
\end{proof}

\begin{figure}[t]
    \centering
    \begin{subfigure}[t]{0.45\textwidth}
        \centering
        \includegraphics[page=1]{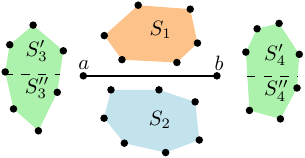}
        \subcaption{General setup in the proof of \cref{lem:two-sidedcase}.}
        \label{fig:twosidedcasefig-1}
    \end{subfigure}
    \hfill
    \begin{subfigure}[t]{0.5\textwidth}
        \centering
        \includegraphics[page=2]{Figures/two-sided-all.pdf}
        \subcaption{Case A.}
        \label{fig:twosidedcasefig-2}
    \end{subfigure}
    
    \bigskip
    
    \begin{subfigure}[t]{0.45\textwidth}
        \centering
        \includegraphics[page = 3]{Figures/two-sided-all.pdf}
        \subcaption{Case B. The red path may change if $S_1$ lies above the line spanned by the points in $S'_3$ and $S''_3$.}
        \label{fig:twosidedcasefig-3}
    \end{subfigure}
    \hfill
    \begin{subfigure}[t]{0.5\textwidth}
        \centering
        \includegraphics[page=4]{Figures/two-sided-all.pdf}
        \subcaption{Case C. The red path may change if $S_1$ is above the line spanned by $S'_3$ and $S'_4$ or if $S_2$ is below the line spanned by $S''_3$ and $S''_4$.}
        \label{fig:twosidedcasefig-4}
    \end{subfigure}
    
    \caption{Construction of $\pi_1$ and $\pi_2$ when neither $a$ nor $b$ are vertices of $\partial \CH(S)$.}
    \label{fig:twosidedcasefig}
\end{figure}

\end{document}